\newtheorem{lem}{Lemma}[section]
\newtheorem{prop}{Proposition}[section]
\newtheorem{theo}{Theorem}
\newtheorem{conj}{Conjecture}
\theoremstyle{definition} 
\theoremstyle{definition}
\renewcommand{\P}{\mathbb{P}}
\newcommand{\R}{\mathbb{R}}
\newcommand{\N}{\mathbb{N}}
\newcommand{\Z}{\mathbb{Z}}
\newcommand{\A}{\mathcal{A}}
\newcommand{\tb}[1]{\mathbf{#1}}
\newcommand{\wt}[1]{\widetilde{#1}} 
\newcommand{\eps}{\varepsilon}
\newcommand{\ben}{\vspace{0mm}\begin{equation}}
\newcommand{\een}{\vspace{0mm}\end{equation}}
\newcommand{\one}{\mathbf{1}}
\numberwithin{equation}{section}
\begin{document}

\title[Speciation by mating preferences]{A stochastic model for speciation by mating preferences}

\author{Camille Coron}
\address{Laboratoire de Math\'ematiques d'Orsay, Univ. Paris-Sud, CNRS, Universit\'e Paris-Saclay, 91405 Orsay, France}
\email{camille.coron@math.u-psud.fr}

\author{Manon Costa}
\address{Institut de Math\'ematiques de Toulouse.  CNRS UMR 5219, Universit\'e
Paul Sabatier, 118 route de Narbonne, F-31062 Toulouse cedex 09}
\email{manon.costa@math.univ-toulouse.fr}

\author{H\'el\`ene Leman}
\address{CIMAT, De Jalisco S-N, Valenciana, 36240 Guanajuato, Gto., Mexico}
\email{helene.leman@polytechnique.edu}

\author{Charline Smadi}
\address{IRSTEA UR LISC, Laboratoire d'ing\'enierie des Syst\`emes Complexes, 9 avenue Blaise-Pascal CS
20085, 63178 Aubi\`ere, France and Complex Systems Institute of Paris île-de-France (ISC-PIF, UPS3611), 113 rue Nationale, Paris, France}
\email{charline.smadi@polytechnique.edu}

\begin{abstract}
Mechanisms leading to speciation are a major focus in evolutionary biology.
In this paper, we present and study a stochastic model of population where individuals, with type $a$ or $A$,
are equivalent from ecological, demographical and spatial points of view, and differ only by their mating 
preference: two individuals with the same genotype have a higher probability
to {mate and} produce a viable offspring. The population is subdivided in several patches and individuals may migrate between them.
We show that mating preferences by themselves, even if they are very small, are enough to entail reproductive isolation between patches,
and we provide the time needed for this isolation to occur {as a function of the population size}. Our results rely on a fine study of the stochastic
process and of its deterministic limit in large population, which is given by a system of coupled nonlinear differential equations.
Besides, we propose several generalisations of our model,
and prove that our findings are robust for those generalisations.

\end{abstract}

 \maketitle

\medskip \noindent\emph{Keywords:} birth and death process with competition, mating preference, reproductive isolation, dynamical systems.

\medskip
\noindent\emph{AMS subject classification:} 60J27, 37N25, 92D40.

\section*{Introduction}

Understanding mechanisms underlying speciation remains a central question in evolutionary biology. The main puzzle 
is the origin of isolating barriers that prevent gene flow among
populations {or within a population}.
Ecological speciation has been largely studied, highlighting the relations between sexual selection and speciation, and demonstrating negative links 
\cite{servedio2014counterintuitive,servedio2015effects} as well as beneficial ones \cite{boughman2001divergent}. 
Lande \cite{lande1981models} is the first one to have popularized the idea of sexual selection promoting speciation. 
Then numerous authors {have dealt} with it in depth 
{\cite{wu1985stochastic,turner1995model,higashi1999sympatric,van2004sympatric,Ritchie2007,pennings2008analytically}}. 
Furthermore, biological examples of speciation that involve well studied mechanisms of sexual selection are numerous and well documented, as the case of 
{Hawaiian} cricket \textit{Laupala} \cite{otte1989speciation,shaw2002divergence,mendelson2005sexual}, Amazonian frog \textit{Physalaemus} \cite{boul2007sexual}, or the cichlid fish species of Lake Victoria
\cite{seehausen2008speciation}.
Modelling approaches allow to investigate the relative roles of stochastic processes, ecological factors, and sexual selection in 
limiting gene flow. 
The role of so-called 'magic' or 'multiple effect' traits, which associate both adaptation to a new ecological 
niche and a mate preference as enhancer of speciation {has} been evidenced in many experimental studies \cite{merrill2012disruptive} as well as theoretical ones \cite{lande1988ecological,van1998sympatric}.
However, identifying the role of sexual selection itself as trigger of speciation without ecological adaptation has received 
less attention \cite{gavrilets2014review}, 
although some authors have illustrated the promoting role of sexual preference alone, using numerical simulations \cite{kondrashov1998origin,m2012sexual}.
In this paper, we aim at introducing and studying mathematically a stochastic model 
accounting for the stopping of gene flow between two subpopulations by means of sexual preference only.

We consider a population of {hermaphroditic} haploid individuals characterized by their genotype at one multi-allelic locus, 
and by their position on a space that is divided in several patches. This population is modeled by a multi-type birth and death 
process with competition, which is ecologically neutral in the sense that individuals with different genotypes are not 
characterized by different adaptations to environment or by different resource preferences. However, individuals reproduce sexually 
according to mating preferences that depend on their genotype: two individuals having 
the same genotype have a higher probability {to mate}. This assortative mating situation 
(assortative mating by phenotype matching)
has been highlighted notably in plant species, 
in particular due to simultaneous maturation of male and female reproductive organs \cite{Herrero2003, Savolainenetal2006}, 
and its selective advantages have been studied and modeled 
by Darwin \cite{Darwin1871} and more recently in the review \cite{JonesRatterman2009}. This review provides a detailed description of these models, 
as well as some empirical examples supporting mate preference evolution.
In addition to this sexual preference, individuals can migrate from one patch to another, at a rate depending on the frequency
of individuals carrying the other genotype and living in the same patch. 
Examples of animals migrating to find suitable mates are well documented \cite{Schwagmeyer1988, Honeretal2007}.
A migration mechanism similar to the one presented in our paper has been studied in \cite{PayneKrakauer1997} in a continuous space model.

The class of stochastic individual-based models with competition and varying population size we are studying
have been introduced in~\cite{bolker1997using,dieckmann2000relaxation} and made rigorous in a probabilistic setting in the
seminal paper of Fournier and M\'el\'eard~\cite{fournier2004microscopic}. Then they have been studied {by many authors (see}
\cite{champagnat2006microscopic,champagnat2006unifying,costa2015stochastic,leman2015convergence} {and references therein for instance)}. 
Initially restricted to asexual populations, such models have evolved to incorporate the case of sexual reproduction, in both haploid \cite{smadi2015eco} and 
diploid \cite{collet2011rigorous,coron2015slow,bovier2015survival} populations. 
{Taking into account varying population sizes and stochasticity is necessary if we aim at better understanding phenomena involving small populations{, like} mutational 
meltdown \cite{coron2013quantifying}, invasion of a mutant population \cite{champagnat2006microscopic}, {evolutionary suicide and rescue \cite{AbuAwadBilliard2017}} or {population} extinction time {(Theorem \ref{maintheo} of the current paper)}.}
In \cite{rudnicki2015model}, {Rudnicki and Zwole\'nski considered} both random and assortative mating in a phenotypically structured population. {In the present article, we consider a different kind of mechanism of sexual preference (see Section \ref{ModelDiscussion} for a detailed discussion), and our model is spatially structured.}

We study both the stochastic individual-based model and its 
deterministic limit in large population. 
We give a complete description of the equilibria of the limiting deterministic dynamical system, and prove that the stable equilibria are 
the ones where only one genotype 
survives in each patch. 
We use classical arguments based on Lyapunov functions \cite{lasalle1960some, chicone2006ode} to derive the convergence at exponential speed of the solution to one 
of the stable equilibria, depending on the initial condition. 
Our theoretical results hold for small migration rates but we conjecture using simulations that 
they hold for all the possible migration rates. 
This fine study of the large population limit is essential to derive the {average} behaviour of the stochastic process. 
{Then} using coupling techniques with branching processes, we derive bounds
for the time needed for speciation to occur in the stochastic process.
These bounds are explicit functions of the individual birth rate and the mating preference parameter.
Besides, we propose several generalisations of our model,
and prove that our findings are robust for those generalisations.

The structure of the paper is the following. In Section \ref{sectionmodel} we describe the model and present the main results.
Section \ref{ModelDiscussion} is devoted to a discussion on the biological assumptions of the model.
In Sections \ref{sectiondet} and \ref{sectionsto} we state properties of the deterministic limit and of the stochastic population process, respectively. They are key tools in the proofs of the main results, which are then completed.
In Section \ref{sectionillustr} we illustrate our findings and make conjecture on a more general result with the help of numerical simulations.
Section \ref{sectiongeneralisation} is devoted to some generalisations of the model. Finally, we state in the Appendix technical results needed in the proofs.

\section{Model and main results} \label{sectionmodel}
We consider a sexual haploid population with Mendelian reproduction (\cite{Griffithsetal2000}, chap. 3). 
{Time is continuous. At any moment, an individual can die, give birth or migrate.
As a consequence, generations are overlapping and there is no specific period for individuals to reproduce.}
Each individual carries an allele belonging to the genetic type space $\mathcal{A}:=\{A,a\}$, and lives in a 
patch $i$ in $\mathcal{I}=\{1,2\}$.
We denote by $\mathcal{E}=\mathcal{A}\times \mathcal{I}$ the type space, by $(\tb{e}_{\alpha,i}, (\alpha,i)\in \mathcal{E})$ 
the canonical basis of $\R^{\mathcal{E}}$, and by $\bar{\alpha}$ the complement of $\alpha$ in $\A$. 
The population is modeled by a multi-type birth and death process with values in $\mathbb{N}^{\mathcal{E}}$. 
More precisely, we denote by $n_{\alpha,i}$ the current number of $\alpha$-individuals in the patch $i$ and by $\tb{n}=(n_{\alpha,i}, (\alpha,i) \in\mathcal{E})$ the current state of the population.
{The birth rate is the consequence of the following mechanisms: 
at a rate $B>0$, any individual encounters another individual uniformly at random in its deme. 
Indeed, all the individuals are assumed to be ecologically and demographically equivalent, thus the probability that they are at the same place at the same time is uniform. 
Mathematically, the probability of encountering an individual of genotype $\alpha'$ in the patch $i$ writes
\ben
\frac{n_{\alpha',i}}{n_{\alpha,i}+n_{\bar{\alpha},i}},
\een
at the time of the encounter.
Then the probability that the encounter leads to a successful mating with the birth of an offspring is $b\beta/B\leq 1$ if the two individuals 
carry the same genotype, and $b/B \leq 1$ otherwise. As a consequence, the birth rate of individuals with genotype $\alpha$ in the deme 
$i$ is equal to 
\ben 
\label{birthrate}
\begin{aligned}
\lambda_{\alpha,i}(\tb{n})=&b\left(n_{\alpha,i}\beta\frac{ n_{\alpha,i}}{n_{\alpha,i}+n_{\bar{\alpha},i}}+\frac{1}{2}n_{\alpha,i}\frac{n_{\bar{\alpha},i}}{n_{\alpha,i}+n_{\bar{\alpha},i}}+\frac{1}{2}n_{\bar{\alpha},i}\frac{n_{\alpha,i}}{n_{\alpha,i}+n_{\bar{\alpha},i}}\right)\\&=
b n_{\alpha,i} \frac{\beta n_{\alpha,i}+n_{\bar{\alpha},i}}{n_{\alpha,i}+n_{\bar{\alpha},i}} .\end{aligned}\een}
{In other words, the parameter $\beta>1$ represents the 
"mating preference". Indeed, 
individuals meet uniformly at random and two encountering} individuals have a probability $\beta$ times larger to {mate and} 
give birth to a viable offspring if they carry the same 
allele $\alpha$. This modeling of mating preferences {, directly determined by the genome of each individual,} is biologically relevant, 
considering \cite{hollocher1997incipient} or \cite{haesler2005inheritance} for instance. \\

The death rate of $\alpha$-individuals in the patch $i$ writes
\begin{equation}\label{deathrate}
 d^K_{\alpha,i}(\tb{n})= \left(d+\frac{c}{K}(n_{\alpha,i}+n_{\bar{\alpha},i})\right) n_{\alpha,i},
\end{equation}
where $K$ is an integer accounting for the quantity of available resources or space. This parameter is related to the concept of carrying capacity, 
which is the maximum population size that the environment can sustain indefinitely, and is consequently a scaling parameter for the size of the community. The individual intrinsic death 
rate $d$ is assumed to be non negative and less than $b$:
\begin{equation}\label{hypothesebd}
            0\leq d<b.
          \end{equation}
The death rate definition \eqref{deathrate} implies that all the individuals are ecologically equivalent: the competition pressure does 
not depend on the alleles carried by the two individuals involved in an event of competition for food or space. The competition intensity is denoted by $c>0$.
Last, the migration of $\alpha$-individuals from patch $\bar{i}= \mathcal{I} \setminus  \{i\}$ to patch $i$ occurs at a rate
\begin{equation}\label{migrationrate}
 \rho_{\alpha,  \bar{i} \to i}(\tb{n})= p \left( 1-\frac{n_{\alpha,\bar{i}}}{n_{\alpha,\bar{i}}+n_{\bar{\alpha},\bar{i}}}\right)
 n_{\alpha,\bar{i}}=p\frac{n_{\alpha,\bar{i}}n_{\bar{\alpha},\bar{i}}}{n_{\alpha,\bar{i}}+n_{\bar{\alpha},\bar{i}}},
\end{equation}
(see Figure \ref{figspe}).
\begin{figure}[h]
    \centering
     \includegraphics[width=13cm,height=8cm]{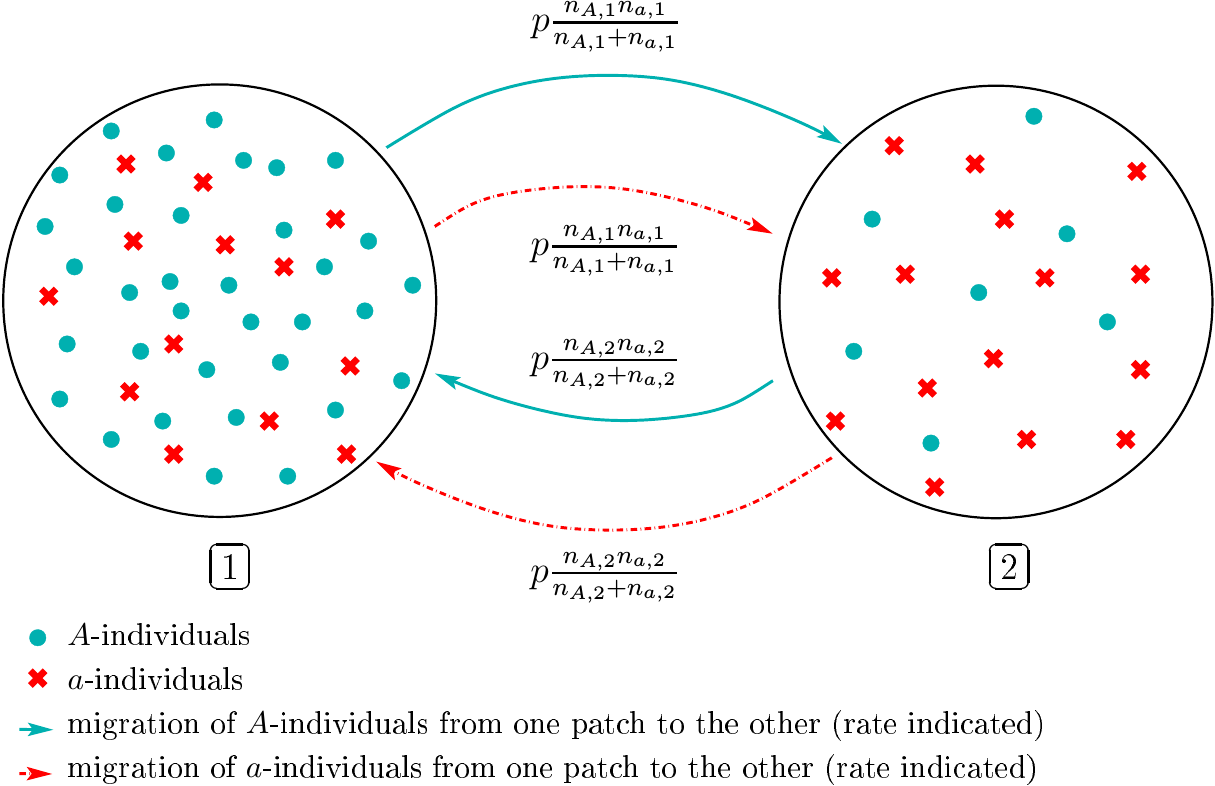}
     \caption{Migrations of $A$- and $a$-individuals between the patches.}
   \label{figspe}
\end{figure}
The individual migration rate of $\alpha$-individuals is proportional to the frequency of $\bar{\alpha}$-individuals in the patch. 
It reflects the fact that individuals prefer being in an environment with a majority of individuals of their own type. 
In particular, if all the individuals living in a patch 
are of the same type, there is no more migration outside this patch. 
Remark that the migration rate from patch $\bar{i}$ to $i$ is equal for {$A$}- and {$a$}-individuals, 
hence to simplify notation, we denote 
$$\rho_{\bar i\to  i}(\tb{n})=\rho_{A,\bar i\to i}(\tb{n})=\rho_{a,\bar i\to i}(\tb{n}).$$ 
A biological discussion of the model is provided in Section \ref{ModelDiscussion}. Besides, extensions of this model are presented and studied in Section \ref{sectiongeneralisation}.

The community is therefore represented at {every} time $t\geq0$ by a stochastic process with values in $\R^\mathcal{E}$: 
$$(\tb{N}^K(t),t\geq0)=(N^K_{\alpha,i}(t), (\alpha,i)\in\mathcal{E}{,t\geq0)},$$
whose transitions are, for $\tb{n} \in \N^\mathcal{E}$ and $(\alpha,i)\in\mathcal{E}$:
\begin{center}
\begin{tabular}{lllll}
$\tb{n}$
& $\longrightarrow $& $\tb{n}+\tb{e}_{\alpha,i}$ & at rate&$\lambda_{\alpha,i}(\tb{n})$,\\ 
& $\longrightarrow$ & $\tb{n}-\tb{e}_{\alpha,i}$ & at rate&$d^K_{\alpha,i}(\tb{n})$,\\ 
& $\longrightarrow$ & $ \tb{n} + \tb{e}_{\alpha,i} -\tb{e}_{\alpha,\bar{i}}$ & at rate&
 $ \rho_{\bar i\to i}(\tb{n}) $.
\end{tabular}\end{center}

As originally done by Fournier and M\'el\'eard \cite{fournier2004microscopic}, it is convenient to
represent a trajectory of the process $\tb{N}^K$ as the unique solution of a system of stochastic differential equations 
driven by Poisson point measures. We introduce twelve independent Poisson point measures $(R_{\alpha,i}, M_{\alpha,i}, D_{\alpha,i},(\alpha,i)\in\mathcal{E})$ on 
$\R_+^2$ with intensity $ds\hspace{0.1cm}d\theta$. These measures represent respectively the birth, migration and death events in the population $N^K_{\alpha,i}$.
We obtain for every $t\geq0$,
\begin{equation}
\label{def_poisson}
\begin{aligned}
\tb{N}^K(t)=\tb{N}^K(0)+\sum_{(\alpha,i)\in\mathcal{E}}\Big[&\int_0^t \int_0^\infty \tb{e}_{\alpha,i}
\one_{\{\theta\leq \lambda_{\alpha ,i}(\tb{N}^K(s-))\}}R_{\alpha,i}(ds,d\theta)\\
&-\int_0^t \int_0^\infty  \tb{e}_{\alpha,i}\one_{\{\theta\leq d^K_{\alpha,i}(\tb{N}^K(s-))\}}D_{\alpha,i}(ds,d\theta)\\
&+\int_0^t \int_0^\infty  (\tb{e}_{\alpha,\bar i}-\tb{e}_{\alpha,i})\mathbf{1}_{\{\theta\leq \rho_{\bar i\to i}(\tb{N}^K(s-))\}}M_{\alpha,i}(ds,d\theta)
\Big].
\end{aligned}
\end{equation}

In the sequel, we will assume that the initial population sizes $(N^K_{\alpha,i}(0),(\alpha,i)\in\mathcal{E})$ are of order $K$.
 As a consequence, we consider a rescaled stochastic process 
$$(\tb{Z}^K(t),t\geq0)=(Z^K_{\alpha,i}(t),(\alpha,i)\in\mathcal{E},t\geq0)=\left(\frac{\tb{N}^K(t)}{K},t\geq0\right),$$
which will be comparable to a solution of the dynamical system
\ben\label{systdet}
\left\{\begin{array}{l} \frac{d}{dt}z_{A,1}(t)=
z_{A,1}\Bigl[ b\frac{\beta z_{A,1}+z_{a,1}}{z_{A,1}+z_{a,1}}-d-c(z_{A,1}+z_{a,1})-p\frac{z_{a,1}}{z_{A,1}+z_{a,1}}\Bigr]+p\frac{z_{A,2}z_{a,2}}{z_{A,2}+z_{a,2}}\\
\frac{d}{dt}z_{a,1}(t)=
z_{a,1}\Bigl[ b\frac{\beta z_{a,1}+z_{A,1}}{z_{A,1}+z_{a,1}}-d-c(z_{A,1}+z_{a,1})-p\frac{z_{A,1}}{z_{A,1}+z_{a,1}}\Bigr]+p\frac{z_{A,2}z_{a,2}}{z_{A,2}+z_{a,2}}\\
\frac{d}{dt}z_{A,2}(t)=
z_{A,2}\Bigl[ b\frac{\beta z_{A,2}+z_{a,2}}{z_{A,2}+z_{a,2}}-d-c(z_{A,2}+z_{a,2})-p\frac{z_{a,2}}{z_{A,2}+z_{a,2}}\Bigr]+p\frac{z_{A,1}z_{a,1}}{z_{A,1}+z_{a,1}}\\
\frac{d}{dt}z_{a,2}(t)=
z_{a,2}\Bigl[ b\frac{\beta z_{a,2}+z_{A,2}}{z_{A,2}+z_{a,2}}-d-c(z_{A,2}+z_{a,2})-p\frac{z_{A,2}}{z_{A,2}+z_{a,2}}\Bigr]+p\frac{z_{A,1}z_{a,1}}{z_{A,1}+z_{a,1}}.\end{array}\right.
\een
{Note that, from a mathematical point of view, it is possible to reduce the number of parameters $b$, $c$, $d$,
$p$, $\beta$. Using a time scaling and a size scaling, we can prove that only
three effective parameters are necessary to describe the mathematical
behaviour of the system, corresponding to a reformulation of the parameters $\beta$, $d$ and $p$ (we refer the interested reader to the appendix for more details). 
However, since each parameter has a biological meaning, we will keep these notations.} 

Let us denote by 
$$(\tb{{z}}^{(\bf{z}^0)}(t),t\geq0)=({z}^{(\bf{z}^0)}_{\alpha,i}(t),(\alpha,i)\in \mathcal{E})_{t\geq0}$$ 
the unique solution to \eqref{systdet} starting from 
$\bf{z}(0)=\bf{z}^0 \in \R_+^\mathcal{E}$. The uniqueness derives from the fact that the vector field is locally lipschitz and that the solutions do not 
explode in finite time \cite{chicone2006ode}. 
We have the following classical approximation result which will be proven in 
Appendix \ref{appendix}:

\begin{lem}\label{lemapprox}
 Let $T$ be in $\R_+^*$. Assume that the sequence $(\tb{Z}^K(0),K \geq 1)$ converges in probability when $K$ goes to infinity to a deterministic vector ${\bf{z}^0} \in \R_+^\mathcal{E}$.
 Then 
\begin{equation}\label{EK2}
\underset{K \to \infty}{\lim}\  \sup_{s\leq T}\ \|\tb{Z}^K(s)-\tb{z}^{(\bf{z}^0)}(s)  \|=0 \quad \text{in probability},
\end{equation}
where $\|. \|$ denotes the $L^\infty$-Norm on $\R^\mathcal{E}$.
\end{lem}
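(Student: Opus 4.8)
The plan is to follow the now-classical route of Fournier and M\'el\'eard \cite{fournier2004microscopic}: write the rescaled process as its deterministic drift plus a vanishing martingale, confine it to a fixed compact set with probability close to one, and conclude with Gronwall's lemma. Dividing \eqref{def_poisson} by $K$ and compensating the Poisson point measures, one first checks (the powers of $K$ in the numerators and denominators of the fractions appearing in \eqref{birthrate}, \eqref{deathrate} and \eqref{migrationrate} cancel) that
\[ \tb Z^K(t)=\tb Z^K(0)+\int_0^t F\big(\tb Z^K(s)\big)\,ds+\tb M^K(t),\qquad t\ge 0, \]
where $F:\R_+^\mathcal{E}\to\R^\mathcal{E}$ is exactly the vector field on the right-hand side of \eqref{systdet}, and $\tb M^K=(M^K_{\alpha,i})_{(\alpha,i)\in\mathcal{E}}$ is a square-integrable martingale with predictable quadratic variation $\langle M^K_{\alpha,i}\rangle_t=K^{-2}\int_0^t\big(\lambda_{\alpha,i}+d^K_{\alpha,i}+\rho_{\bar i\to i}+\rho_{i\to\bar i}\big)(\tb N^K(s))\,ds$. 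Although the per-capita rates involve the ratios $n_{\alpha,i}/(n_{\alpha,i}+n_{\bar\alpha,i})$, the maps $(x,y)\mapsto xy/(x+y)$ and $(x,y)\mapsto x^2/(x+y)$, extended by $0$ at the origin, have bounded partial derivatives and are therefore Lipschitz on $\R_+^2$; hence $F$ is locally Lipschitz on $\R_+^\mathcal{E}$, with some Lipschitz constant $L_R$ on each ball $\bar B(0,R)$.

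Next, I would confine the process. Since it suffices to prove convergence in probability, we may and do work throughout on the event $\{\|\tb Z^K(0)-\tb z^0\|\le 1\}$, whose probability tends to $1$; in particular $S^K(0):=\langle\tb N^K(0),\one\rangle\le CK$ for a deterministic constant. Because $\beta>1$ the per-capita birth rate is at most $b\beta$, migrations leave $S^K:=\langle\tb N^K,\one\rangle$ unchanged, and by \eqref{deathrate} the total death rate is at least $dS^K$; hence $S^K$ is stochastically dominated by a supercritical linear birth-and-death process $Y$ with individual birth rate $b\beta$ and individual death rate $d$, started from $S^K(0)$. Classical $L^2$ estimates for $Y$ (using that $e^{-(b\beta-d)t}Y(t)$ is a martingale, together with Doob's inequality) give $\E\big[\sup_{t\le T}(Y(t)/K)^2\big]\le C'(T)$ uniformly in $K$, since $Y(0)/K\le C$. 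Setting $\tau_R^K:=\inf\{t\ge 0:\|\tb Z^K(t)\|\ge R\}$ and using $\|\tb Z^K\|\le S^K/K\le Y/K$, we obtain $\P(\tau_R^K\le T)\le C'(T)/R^2$, which tends to $0$ as $R\to\infty$ uniformly in $K$. Since the deterministic solution $\tb z^{(\tb z^0)}$ does not explode on $[0,T]$ \cite{chicone2006ode}, we fix $R$ large enough that it stays in $\bar B(0,R)$ on $[0,T]$ as well.

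On $\{T<\tau_R^K\}$ all the rates entering $\langle M^K_{\alpha,i}\rangle$ are bounded by $C_R K$, so Doob's inequality gives $\E\big[\sup_{s\le T\wedge\tau_R^K}\|\tb M^K(s)\|^2\big]\le 4|\mathcal E|\,C_R\,T/K\to 0$. For $t\le T\wedge\tau_R^K$, the decomposition above, the Lipschitz bound on $F$ over $\bar B(0,R)$, and the confinement of $\tb z^{(\tb z^0)}$ to $\bar B(0,R)$ yield
\[ \big\|\tb Z^K(t)-\tb z^{(\tb z^0)}(t)\big\|\le\big\|\tb Z^K(0)-\tb z^0\big\|+\sup_{s\le T\wedge\tau_R^K}\|\tb M^K(s)\|+L_R\int_0^t\big\|\tb Z^K(s)-\tb z^{(\tb z^0)}(s)\big\|\,ds, \]
hence, by Gronwall's lemma, $\sup_{s\le T\wedge\tau_R^K}\|\tb Z^K(s)-\tb z^{(\tb z^0)}(s)\|\le\big(\|\tb Z^K(0)-\tb z^0\|+\sup_{s\le T\wedge\tau_R^K}\|\tb M^K(s)\|\big)e^{L_RT}$. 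The right-hand side tends to $0$ in probability by the martingale estimate and the hypothesis $\tb Z^K(0)\to\tb z^0$; combining this with $\P(\tau_R^K\le T)\to 0$ uniformly in $K$ through a routine $\eps/2$ argument yields \eqref{EK2}.

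The only step requiring genuine care is the confinement: one really needs an explicit stochastic domination keeping $\tb Z^K$ in a fixed compact set, uniformly in $K$, because $F$ grows quadratically (the competition term) and so is not globally Lipschitz. Once that is secured, the $0/0$ singularities of the transition rates are harmless---they occur only inside the Lipschitz maps $xy/(x+y)$ and $x^2/(x+y)$---and the martingale control and Gronwall estimate are entirely standard.
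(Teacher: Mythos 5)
Your argument is correct, but it is not the route the paper takes. The paper's proof is essentially a citation: it represents $\tb{Z}^K$ through the random time-change construction, $\bar{\tb{Z}}^K(t)=\tb{Z}^K(0)+\sum K^{-1}\tb{e}_{\alpha,i}\,Y^{(\cdot)}_{\alpha,i}\bigl(\int_0^t K\,\widetilde{\lambda}_{\alpha,i}(\bar{\tb{Z}}^K(s))\,ds\bigr)+\cdots$ with independent unit-rate Poisson processes, observes that the rescaled rates are Lipschitz and bounded on compact sets and independent of $K$, and then invokes Theorem 2.1, Chapter 11 of Ethier and Kurtz \cite{EK}, finally upgrading convergence in law to a deterministic limit into convergence in probability. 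You instead reprove that law of large numbers by hand: the semimartingale decomposition via compensated Poisson point measures, confinement of the total mass by stochastic domination with a supercritical linear birth-and-death process, Doob's $L^2$ inequality for the martingale part, and Gronwall's lemma. The two proofs are equivalent in substance --- your Gronwall computation is essentially the proof of the cited theorem --- but yours is self-contained and makes explicit the two points the citation leaves implicit: the localization by the stopping time $\tau_R^K$, which is genuinely needed because the quadratic competition term makes the drift only locally Lipschitz, and the Lipschitz continuity of the ratios $xy/(x+y)$ and $x^2/(x+y)$ across the $0/0$ singularity at the origin (your bounded-partial-derivatives argument is fine; note only that $(\beta x+y)/(x+y)\le\beta$ already holds for $\beta\ge 1$, so invoking $\beta>1$ there is unnecessary). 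What the paper's approach buys is brevity; what yours buys is a verifiable, quantitative error bound of the form $\bigl(\|\tb{Z}^K(0)-\tb{z}^0\|+\sup_{s\le T}\|\tb{M}^K(s)\|\bigr)e^{L_RT}$ that could be reused, e.g., in the coupling arguments of Section \ref{sectionsto}.
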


When $K$ is large, this convergence result allows one to derive the global behaviour of the population process $\tb{N}^K$ 
from the behaviour of the {dynamical} system \eqref{systdet}. 
Therefore, a fine study of \eqref{systdet} is needed. To this aim, let us introduce the parameter 
\begin{equation}\label{defz}
 \zeta:= \frac{\beta b-d}{c},
\end{equation}
which corresponds to the equilibrium {size} of the $\alpha$-population for the dynamical system \eqref{systdet}, in a patch with no $\bar{\alpha}$-individuals 
and no migration. Let us also define the parameters 
\begin{equation}\label{defzbar}
 \wt{\zeta}:=\dfrac{b^2(\beta^2-1)+2p(b-d)-2bd(\beta-1)}{4c(b(\beta-1)+p)} \quad \text{and}
\quad 
 \Delta := \zeta\left(\zeta-2p\frac{\wt{\zeta}}{b(\beta-1)+p}\right) > 0
\end{equation}
(see \eqref{Deltapositif} for the positivity of $\Delta$). We derive in Section \ref{sectiondet} the following properties of the dynamical system \eqref{systdet}:

\begin{theo}
\label{thm_systdet}
\begin{enumerate}
\item
For $\beta \geq 1$, the following points for which only one type remains, in only one patch
\begin{equation}
 \label{eq-z000}
(\zeta,0,0,0) \quad (0,\zeta,0,0)\quad (0,0,\zeta,0) \quad (0,0,0,\zeta)
\end{equation}
are non-null and non-negative equilibria of the dynamical system \eqref{systdet}.
\item For $\beta > 1$, the remaining non-null and non-negative fixed points are exactly:
\begin{itemize}
\item Equilibria for which each type is present in exactly one patch 
\begin{equation}
 \label{eq-z00z}
(\zeta,0,0,\zeta),\quad (0,\zeta,\zeta,0)
\end{equation}
\item Equilibria for which only one type remains present, in both patches
\begin{equation}
 \label{eq-z0z0}
(\zeta,0,\zeta,0), \quad (0,\zeta,0,\zeta)
\end{equation}
\item Equilibria with both types remaining in both patches
\begin{equation}
 \label{eq-zzzz}
\left(\frac{b(\beta +1)-2d}{4c},\frac{b(\beta +1)-2d}{4c},\frac{b(\beta +1)-2d}{4c},\frac{b(\beta +1)-2d}{4c}\right)
\end{equation}
\begin{align}
\label{eq-complique}
&\Bigl(\frac{\zeta+\sqrt{\Delta}}{2},\frac{\zeta-\sqrt{\Delta}}{2},\wt{\zeta},\wt{\zeta} \Bigr), \quad \Bigl(\frac{\zeta-\sqrt{\Delta}}{2},\frac{\zeta+\sqrt{\Delta}}{2},\wt{\zeta},\wt{\zeta} \Bigr),
\\ &\Bigl(\wt{\zeta},\wt{\zeta},\frac{\zeta+\sqrt{\Delta}}{2},\frac{\zeta-\sqrt{\Delta}}{2} \Bigr), \quad \Bigl(\wt{\zeta},\wt{\zeta},\frac{\zeta-\sqrt{\Delta}}{2},\frac{\zeta+\sqrt{\Delta}}{2} \Bigr).
\end{align}
\end{itemize}
The only stable equilibria of the dynamical system \eqref{systdet} are those defined in Equation \eqref{eq-z00z}, 
for which each of the two alleles is present in exactly one patch, and those given in Equation \eqref{eq-z0z0} for which only one type remains.\\
\item For $\beta = 1$, the remaining non-null and non-negative fixed points are exactly the two sets 
$$\mathcal{L}=\{\mathbf{u}(x)=(\zeta-x,x,x,\zeta-x),x\in [0,\zeta]\}$$ 
and 
$$\tilde{\mathcal{L}}=\{\tilde{\mathbf{u}}(x)=(\zeta-x,x,\zeta-x,x),x\in [0,\zeta]\}.$$ 
Those equilibria are non-hyperbolic. For any $x\in[0,\zeta]\setminus \{\zeta/2\}$, the Jacobian matrix at 
the equilibrium $\mathbf{u}(x)$ admits $0$ as an eigenvalue (associated with the eigenvector  
$(1,-1,-1,1)$, direction of the line $\mathcal{L}$) and three negative eigenvalues. Some symmetrical results hold for $\tilde{\mathbf{u}}(x)$. 
The Jacobian matrix at the equilibrium $\mathbf{u}(\zeta/2)=\tilde{\mathbf{u}}(\zeta/2)$ admits two negative eigenvalues and the eigenvalue $0$ which 
is of multiplicity two.
\end{enumerate}
\end{theo}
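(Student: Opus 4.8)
\emph{Overall strategy.} I would split the argument into an enumeration of the equilibria of \eqref{systdet} and a linearisation at each of them, exploiting throughout the invariance of \eqref{systdet} under the group generated by exchanging the two patches and by exchanging the two alleles inside a patch, so that only one representative per orbit needs treating. The backbone of the computation is the identity obtained by subtracting the equation for $z_{A,i}$ from the one for $z_{a,i}$: the two immigration terms, being equal, cancel, and writing $n_i:=z_{A,i}+z_{a,i}$ and $w_i:=z_{A,i}-z_{a,i}$ for the size and the allelic imbalance of patch $i$, one is left with
\begin{equation*}
\frac{d}{dt}w_i=w_i\bigl(\beta b-d-cn_i\bigr),\qquad i\in\{1,2\}.
\end{equation*}
Hence at any equilibrium, in each patch $i$ either $z_{A,i}=z_{a,i}$ or $n_i=\zeta$ (with $\zeta$ as in \eqref{defz}).

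\emph{Enumerating the fixed points.} First I would dispose of equilibria with a vanishing coordinate: if $z_{\alpha,i}=0$ its equation reduces to $p\,z_{A,\bar i}z_{a,\bar i}/(z_{A,\bar i}+z_{a,\bar i})=0$, so a coordinate of patch $\bar i$ vanishes too, and iterating while using that a surviving isolated coordinate $z$ obeys $z(\beta b-d-cz)=0$ pins down exactly \eqref{eq-z000}, \eqref{eq-z00z} and \eqref{eq-z0z0}. For interior equilibria the dichotomy above leaves three cases. When $z_{A,i}=z_{a,i}$ in both patches, inserting $z_{\alpha,i}=n_i/2$ into the summed equations gives $(n_1-n_2)\bigl(b(\beta+1)-2d-2p-2c(n_1+n_2)\bigr)=0$, the branch $n_1\neq n_2$ being excluded by the impossible identity $2cn_1n_2=-p(n_1+n_2)$; thus $n_1=n_2$, yielding \eqref{eq-zzzz}. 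When the alleles are balanced in one patch and $n_i=\zeta$ in the other, I would substitute and use $\beta b-d-c\zeta=0$ to reduce the system to one scalar equation whose root is exactly $\wt{\zeta}$ from \eqref{defzbar}; the split in the $\zeta$-patch then solves $u(\zeta-u)=\zeta p\wt{\zeta}/\bigl(2(b(\beta-1)+p)\bigr)$, i.e.\ $u=(\zeta\pm\sqrt{\Delta})/2$, producing \eqref{eq-complique}. When $n_1=n_2=\zeta$, the summed equations force $\bigl(b(\beta-1)+p\bigr)P_i=pP_{\bar i}$ with $P_i:=z_{A,i}z_{a,i}/n_i$, hence $\bigl((b(\beta-1)+p)^2-p^2\bigr)P_1P_2=0$, and for $\beta>1$ this gives $P_1P_2=0$, contradicting positivity. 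It then remains to check that each listed point is genuinely an equilibrium and that $\zeta,\wt{\zeta}>0$, $(\zeta-\sqrt{\Delta})/2>0$ and $\Delta>0$ (this last being \eqref{Deltapositif}).

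\emph{Stability for $\beta>1$.} For \eqref{eq-z00z} and \eqref{eq-z0z0} I would order the coordinates so that the two vanishing at the equilibrium come first; the Jacobian is then block lower-triangular, with upper-left $2\times 2$ block having diagonal entries $-(b(\beta-1)+p)$ and off-diagonal entries $p$ (eigenvalues $-b(\beta-1)$ and $-b(\beta-1)-2p$) and lower-right block equal to $-c\zeta\,\mathrm{Id}$, so all eigenvalues are negative and these points are asymptotically stable. For \eqref{eq-zzzz} and \eqref{eq-complique} I would use that $w_i^\star=0$ whenever the alleles are balanced in patch $i$, so the identity above linearises to $\frac{d}{dt}\delta w_i=(\beta b-d-cn_i^\star)\delta w_i$: the imbalance direction of patch $i$ is an eigendirection with eigenvalue $\beta b-d-cn_i^\star$, which a short computation evaluates to $b(\beta-1)/2>0$ at \eqref{eq-zzzz} and to $b(\beta-1)\bigl(b(\beta-1)+2p\bigr)/\bigl(2(b(\beta-1)+p)\bigr)>0$ at the balanced patch of \eqref{eq-complique}, so both families are unstable. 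The boundary equilibria \eqref{eq-z000} are unstable because the set where, say, allele $a$ is absent in both patches is invariant and carries two decoupled logistic equations for $z_{A,1}$ and $z_{A,2}$, of which $(\zeta,0,0,0)$ is a saddle.

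\emph{The case $\beta=1$, and the main obstacle.} When $\beta=1$ the dichotomy reads ``$w_i=0$ or $n_i=\zeta$'' with $\zeta=(b-d)/c$, and since $(b(\beta-1)+p)^2-p^2=0$ the case $n_1=n_2=\zeta$ is no longer obstructed: it only forces $P_1=P_2$, which together with $n_i=\zeta$ describes precisely $\mathcal L$ and $\tilde{\mathcal L}$, the other sub-cases contributing nothing beyond their endpoints. Non-hyperbolicity and the eigenvalue count I would obtain by writing the Jacobian at $\mathbf u(x)$ (and $\tilde{\mathbf u}(x)$) in the $(n_1,n_2,w_1,w_2)$ coordinates: the $(w,w)$ block vanishes because $\beta b-d-cn_i^\star=0$ there, and expanding the resulting $4\times 4$ determinant gives the eigenvalues $0$, $-c\zeta$ and the two roots of a quadratic with negative sum and product $2pc(\zeta-2x)^2/\zeta$, hence with negative real parts for $x\neq\zeta/2$, while at $x=\zeta/2$ the coupling blocks vanish and one reads off $0$ (of multiplicity two), $-c\zeta$ and $-c\zeta-p$. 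I expect the interior case \eqref{eq-complique} to be the real obstacle: the polynomial elimination must be pushed far enough to recover the precise formulas of \eqref{defzbar}, and one must still certify the sign conditions and the exhaustiveness of the list; by comparison the stability analysis is light, resting on the one-line identity for $w_i$ together with a single $2\times 2$ eigenvalue computation.
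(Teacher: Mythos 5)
Your proposal is correct and follows essentially the same route as the paper: the subtraction identity $\dot w_i=w_i(\beta b-d-cn_i)$ giving the dichotomy ``$z_{A,i}=z_{a,i}$ or $n_i=\zeta$'' per patch, the same three-case enumeration of interior equilibria leading to \eqref{eq-zzzz} and \eqref{eq-complique}, and linearisation at each fixed point, with matching eigenvalues throughout. Your few deviations are streamlinings rather than a different method (excluding the non-positive branch of the balanced--balanced case via the identity $2cn_1n_2=-p(n_1+n_2)$ instead of computing the roots, reading off the unstable $w_i$-eigendirection at \eqref{eq-zzzz} and \eqref{eq-complique} rather than writing the full Jacobian, and the invariant-subspace argument at \eqref{eq-z000}), and your treatment of the $\beta=1$ Jacobians is actually more explicit than the paper's.
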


The equilibria \eqref{eq-z00z} and \eqref{eq-z0z0} correspond to the case where reproductive isolation occurs since the gene flow between the 
two patches ends to be null.
The dynamics of the solutions are {fundamentally} different in the cases $\beta>1$ and $\beta=1$. 
They converge to an equilibrium without gene flow when $\beta>1$, whereas
when $\beta=1$, depending on the initial condition, the solutions will converge to different equilibria with a nonzero migration rate, 
that is without reproductive isolation.
The following proposition states that for each $x$, we can construct particular trajectories of the system which converge to $\mathbf{u}(x)$.
\begin{prop}
\label{prop1}
Let us introduce for any $w\in(0,+\infty)$ and $x\in[0,w]$ the vector
$$\mathbf{v}(w,x)=(w-x,x,x,w-x).$$
The solution $z^{(\mathbf{v}(w,x))}$ of the system \eqref{systdet} with $\beta=1$ such that $z^{(\mathbf{v}(w,x))}(0)=\mathbf{v}(w,x)$ converges when $t\to\infty$ to the equilibrium 
$\mathbf{u}(\zeta x/w)$.
\end{prop}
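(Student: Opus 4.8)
The plan is to exploit the strong symmetry of the system when $\beta=1$ and the initial condition $\mathbf{v}(w,x)=(w-x,x,x,w-x)$. First I would observe that this symmetry is preserved by the flow: if $z_{A,1}(0)=z_{a,2}(0)$ and $z_{a,1}(0)=z_{A,2}(0)$, then by the manifest invariance of \eqref{systdet} under the permutation $(A,1)\leftrightarrow(a,2)$, $(a,1)\leftrightarrow(A,2)$ (swapping alleles and patches simultaneously), these equalities persist for all $t\geq 0$. Hence we may set $u(t):=z_{A,1}(t)=z_{a,2}(t)$ and $v(t):=z_{a,1}(t)=z_{A,2}(t)$, reducing \eqref{systdet} to a planar system for $(u,v)$. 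With $\beta=1$ the birth term $b(\beta z_{\alpha,i}+z_{\bar\alpha,i})/(z_{\alpha,i}+z_{\bar\alpha,i})$ collapses to the constant $b$, so in each patch the per-capita birth rate is simply $b$, and the reduced system becomes
\begin{equation}\label{reduced}
\left\{\begin{array}{l}
\dot u = u\bigl[b-d-c(u+v)-p\tfrac{v}{u+v}\bigr]+p\tfrac{uv}{u+v},\\[1mm]
\dot v = v\bigl[b-d-c(u+v)-p\tfrac{u}{u+v}\bigr]+p\tfrac{uv}{u+v}.
\end{array}\right.
\end{equation}
Note the migration terms partially cancel: $-pv^2/(u+v)$ cancels against $+puv/(u+v)$ only partially, but the key simplification is that the total migration into and out of each patch in the reduced coordinates is governed by the single quantity $uv/(u+v)$.

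Next I would look for conserved or monotone quantities. Set $w(t):=u(t)+v(t)$ (the common patch size). Adding the two equations in \eqref{reduced}, the $p$-terms combine to give $\dot w = w(b-d-cw)$ — a logistic equation, independent of the split between $u$ and $v$! Hence $w(t)\to \zeta=(b-d)/c$ exponentially fast (this uses $d<b$ from \eqref{hypothesebd}), and in particular along the trajectory starting at $\mathbf{v}(w,x)$ we have $u(0)+v(0)=w$, so $w(t)\to\zeta$. It remains to track the ratio. Let $r(t):=v(t)/w(t)\in[0,1]$, the frequency of the "minority" type in patch $1$. A direct computation of $\dot r = (\dot v\, w - v\,\dot w)/w^2$, using $\dot w = w(b-d-cw)$ and \eqref{reduced}, should show that all the logistic/competition terms proportional to $(b-d-cw)$ drop out, leaving $\dot r = 0$: the ratio $v/w$ is a first integral of the reduced system. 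I expect the migration terms to cancel here because the symmetry forces equal migration fluxes in both directions in the reduced coordinates; this is exactly the non-hyperbolicity along $\mathcal{L}$ noted in Theorem~\ref{thm_systdet}(3). Verifying this cancellation cleanly is the main computational obstacle, but it is a one-line algebraic identity once the substitution is made.

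Granting $\dot r\equiv 0$, we conclude: $r(t)=r(0)=x/w$ for all $t$, and $w(t)\to\zeta$, so $v(t)=r(t)w(t)\to (x/w)\zeta = \zeta x/w$ and $u(t)=w(t)-v(t)\to \zeta - \zeta x/w = \zeta(w-x)/w$. Translating back through $z_{A,1}=u$, $z_{a,1}=v$, $z_{A,2}=v$, $z_{a,2}=u$, the solution converges to $(\zeta(w-x)/w,\ \zeta x/w,\ \zeta x/w,\ \zeta(w-x)/w)=\mathbf{u}(\zeta x/w)$, which is the claim. For rigour I would add a short remark that $u,v$ stay nonnegative and $u+v$ stays bounded away from $0$ whenever $w>0$, so the ratio $r$ is well-defined along the whole trajectory and the division by $u+v$ in the vector field is never singular; this also re-justifies global existence for these particular initial conditions without appeal to the general argument. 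The exponential convergence of $w$ to $\zeta$ even gives the rate of convergence to $\mathbf{u}(\zeta x/w)$, though the proposition only asks for convergence.
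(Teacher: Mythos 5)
Your proof is correct and is essentially the paper's argument in a different guise: the paper posits the ansatz $\psi(t)=\gamma(t)\mathbf{v}(w,x)$ and finds that $\gamma$ solves the logistic equation $\dot\gamma=\gamma(b-d-cw\gamma)$, converging to $\zeta/w$, which is exactly your statement that the ratio $v/(u+v)$ is a first integral while the common patch size follows the logistic flow. The cancellation you flag as the ``main computational obstacle'' does go through, and in fact more is true: on the symmetric subspace both patches carry the same flux $p\,uv/(u+v)$, so the migration terms cancel exactly (not merely partially) in each reduced equation, leaving $\dot u=u\bigl(b-d-c(u+v)\bigr)$ and $\dot v=v\bigl(b-d-c(u+v)\bigr)$, from which $\dot r\equiv 0$ is immediate.
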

In particular, the equilibria \eqref{eq-z00z} are not asymptotically stable when $\beta=1$ since solutions starting in any neighbourhood of 
\eqref{eq-z00z} can converge to different equilibria. Note that the shape of the migration is not sufficient to entail reproductive isolation although it 
seems to reinforce the homogamy described by the $\beta$ parameter. 
Thanks to simulations in Section~\ref{sectionillustr}, we will 
see that the effect of migration on the system dynamics is rather involved.

As a consequence, we assume $\beta>1$ in the sequel. The following theorem gives the long-time convergence of the dynamical system \eqref{systdet} 
toward a stable equilibrium of interest, when starting from an explicit subset of $\R_+^\mathcal{E}$.
To state this latter, we need to define
the subset of $\R_+^\mathcal{E}$
\begin{equation} \label{defdelta} \mathcal{D}:=\{ \tb{z} \in \R_+^\mathcal{E}, z_{A,1}-z_{a,1}>0, z_{a,2}-z_{A,2}>0 \},
\end{equation}
and the positive real number 
\begin{equation}\label{defp0}
 p_0=\frac{\sqrt{b(\beta-1)[b(3\beta+1)-4d]}-b(\beta-1)}{2}.
\end{equation}
Notice that under Assumption \eqref{hypothesebd} and as $\beta>1$,
 $$ p_0 < b(\beta+1)-2d. $$
Finally, for $p <b(\beta+1)-2d$, we introduce the set
\begin{equation}\label{defK}
 \mathcal{K}_p:= \left\{ 
\tb{z} \in \mathcal{D}, \; \{ z_{A,1}+z_{a,1}, \ z_{A,2}+z_{a,2} \} \in \left[ \frac{b(\beta+1)-2d-p}{2c},  
\frac{2b\beta-2d+p}{2c}\right] 
\right\}.
\end{equation}
Then we have the following result:

\begin{theo}\label{theoCvceD}
Let $p<p_0$. Then 
\begin{itemize}
 \item Any solution to~\eqref{systdet} which starts from
$ \mathcal{D}$
converges to the equilibrium $(\zeta,0,0,\zeta)$. 
\item If the initial condition of \eqref{systdet} lies in $\mathcal{K}_p$, there exist two positive 
constants $k_1$ and $k_2$, depending on the 
 initial condition, such that for every $t \geq 0$,
 $$ \|\tb{z}(t)- (\zeta,0,0,\zeta)\|\leq k_1 e^{- k_2 t}. $$
\end{itemize}
Symmetrical results hold for the equilibria $(0,\zeta,\zeta,0)$, $(\zeta,0,\zeta,0)$ and $(0,\zeta,0,\zeta)$.
\end{theo}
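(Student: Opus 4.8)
The plan is to combine a Lyapunov‑function argument on the invariant region $\mathcal{D}$ with a more quantitative contraction estimate on the compact set $\mathcal{K}_p$. First I would check that $\mathcal{D}$ is positively invariant for \eqref{systdet} when $p$ is small: on the boundary face $\{z_{A,1}=z_{a,1}\}$ one computes $\frac{d}{dt}(z_{A,1}-z_{a,1})$ and sees that the mating‑preference term $b(\beta-1)$ acting on the difference dominates the migration term (which vanishes on this face up to the incoming flux, which is symmetric in the two alleles and hence cancels in the difference), so the vector field points into $\mathcal{D}$; the condition $p<p_0$ is exactly what makes this work, $p_0$ being the threshold where the quadratic $p^2+b(\beta-1)p-b(\beta-1)(b(3\beta+1)-4d)/ \cdot$ changes sign. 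The same computation shows $\{z_{a,2}=z_{A,2}\}$ is repelling from inside $\mathcal{D}$. I would also need an a priori bound showing trajectories starting in $\mathcal{D}$ remain in a compact set (total mass in each patch stays bounded between two positive constants once $t$ is large, by comparing $\frac{d}{dt}(z_{A,i}+z_{a,i})$ with logistic bounds), so that $\omega$‑limit sets are nonempty and contained in $\mathcal{D}$ or its closure.

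Next I would run a LaSalle argument. Inside $\mathcal{D}$ the only equilibria listed in Theorem \ref{thm_systdet} are $(\zeta,0,0,\zeta)$ and possibly some of the mixed equilibria \eqref{eq-complique}; I would show the latter do not lie in $\mathcal{D}$ (they have $z_{A,i}$ and $z_{a,i}$ of the same sign pattern that violates the strict inequalities, or they are shown to be on the wrong side once $p<p_0$), so that $(\zeta,0,0,\zeta)$ is the unique equilibrium in the closure of $\mathcal{D}$ that is a candidate limit point, together with the boundary equilibria $(\zeta,0,\zeta,0)$ etc. The job is then to exhibit a strict Lyapunov function, or to rule out the unwanted boundary equilibria as $\omega$‑limits by showing they are linearly unstable \emph{transversally to the boundary} in the direction pointing into $\mathcal{D}$ (using the instability of $\{z_{a,2}=z_{A,2}\}$ established above). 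A natural Lyapunov‑type candidate is something like $V(\tb z)=\frac{z_{a,1}}{z_{A,1}}+\frac{z_{A,2}}{z_{a,2}}$ or a logarithmic version $\log z_{a,1}-\log z_{A,1}+\log z_{A,2}-\log z_{a,2}$, whose time derivative along \eqref{systdet} should be strictly negative on $\mathcal{D}$ minus the target equilibrium when $p<p_0$ — the mating term contributes $-b(\beta-1)$ in each ratio and the migration terms are lower order. Combining invariance, boundedness, the structure of the equilibrium set, and monotone decay of $V$ gives convergence to $(\zeta,0,0,\zeta)$, proving the first bullet.

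For the second bullet I would linearise \eqref{systdet} at $(\zeta,0,0,\zeta)$: the Jacobian is block‑structured, and under $p<p_0$ (indeed under the weaker $p<b(\beta+1)-2d$) all four eigenvalues have negative real part — the "mass" directions $z_{A,1}+z_{a,1}$ and $z_{A,2}+z_{a,2}$ contract at rate governed by $\beta b-d$, and the "minority allele" directions $z_{a,1},z_{A,2}$ contract at rate governed by $b(\beta-1)-$(migration correction), which stays negative precisely for $p<p_0$. Hence the equilibrium is locally exponentially stable, giving constants $k_1,k_2$ in a neighbourhood. It then suffices to show $\mathcal{K}_p$ lies in the basin of attraction and, more strongly, that trajectories from $\mathcal{K}_p$ enter the local neighbourhood in finite time with a uniform bound; the mass‑interval constraint in the definition of $\mathcal{K}_p$ together with the invariance of $\mathcal{D}$ makes $\mathcal{K}_p$ (forward) invariant, so one gets a single exponential bound valid for all $t\ge 0$ after absorbing the finite transient into $k_1$. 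The symmetrical statements follow from the obvious symmetries of \eqref{systdet} under swapping $A\leftrightarrow a$ and/or $1\leftrightarrow 2$.

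The main obstacle, I expect, is the global part of the first bullet: producing a genuinely strict Lyapunov function valid on \emph{all} of $\mathcal{D}$ (not just near the equilibrium), or equivalently ruling out periodic orbits and heteroclinic cycles among the boundary equilibria $(\zeta,0,\zeta,0)$, $(0,\zeta,0,\zeta)$ and the target. The migration terms couple the two patches and break the naive monotonicity of the ratio functions near the boundary where a denominator $z_{A,1}+z_{a,1}$ could be small, so one must control those terms carefully using the lower bound on total mass per patch; getting the threshold to come out exactly at $p_0$ is where the delicate estimate lies.
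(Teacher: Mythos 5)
Your overall architecture (invariance of $\mathcal{D}$, an absorbing compact set controlling the total mass per patch, a Lyapunov function built from the allele ratios, then an exponential estimate) is the right one and matches the paper's strategy, but two of your concrete choices do not survive inspection. First, you misplace the role of $p_0$. The set $\mathcal{D}$ is invariant for \emph{every} $p$: the incoming migration flux $p\,z_{A,2}z_{a,2}/(z_{A,2}+z_{a,2})$ is identical for the two alleles, so it cancels exactly in the difference and one gets the closed equation $\frac{d}{dt}(z_{A,1}-z_{a,1})=(z_{A,1}-z_{a,1})\bigl(b\beta-d-c(z_{A,1}+z_{a,1})\bigr)$, i.e.\ \eqref{diffpop2}; no smallness of $p$ is used there. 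Likewise the Jacobian at $(\zeta,0,0,\zeta)$ has eigenvalues $-b(\beta-1)$, $-b(\beta-1)-2p$ and $-(b\beta-d)$ (double), all negative for every $p>0$, so local exponential stability also needs no condition on $p$. The threshold $p_0$ enters only in the \emph{global} decay estimate: on $\mathcal{K}_p$ the total patch masses lie in $[z_{min},z_{max}]$ with $z_{min}=\frac{b(\beta+1)-2d-p}{2c}$ and $z_{max}=\zeta+\frac{p}{2c}$, and the Lyapunov derivative is negative iff $\frac{b(\beta-1)}{p}>\frac{z_{max}}{z_{min}}-1$, which is exactly $p<p_0$.

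Second, and this is the genuine gap, your candidate $\log z_{a,1}-\log z_{A,1}+\log z_{A,2}-\log z_{a,2}$ is \emph{not} a Lyapunov function on $\mathcal{K}_p$. Its derivative contains the term $p\,\frac{z_{A,2}z_{a,2}}{z_{A,2}+z_{a,2}}\bigl(\frac{1}{z_{a,1}}-\frac{1}{z_{A,1}}\bigr)>0$ coming from the influx into patch $1$ divided by the \emph{minority} density, plus its mirror image; near $(\zeta,0,0,\zeta)$ the total derivative behaves like $-2\bigl(b(\beta-1)+p\bigr)+p\bigl(\frac{z_{A,2}}{z_{a,1}}+\frac{z_{a,1}}{z_{A,2}}\bigr)$, which is positive whenever the two minority densities have sufficiently different orders of magnitude — a configuration you cannot exclude a priori. (Your ratio candidate $\frac{z_{a,1}}{z_{A,1}}+\frac{z_{A,2}}{z_{a,2}}$ avoids the $1/z_{a,1}$ singularity but the sign of its derivative on all of $\mathcal{K}_p$ is unverified and would not obviously produce the threshold $p_0$.) The paper's function $V(\tb z)=\ln\frac{z_{A,1}+z_{a,1}}{z_{A,1}-z_{a,1}}+\ln\frac{z_{a,2}+z_{A,2}}{z_{a,2}-z_{A,2}}$ is also a function of the allele ratios, but it is computed in sum/difference coordinates, where the influx either cancels (difference, by \eqref{diffpop2}) or is divided by the total patch mass bounded below by $z_{min}$ (sum, by \eqref{sumpop1}). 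This gives $\frac{d}{dt}V\le -C_1(z_{a,1}+z_{A,2})$ with $C_1>0$ precisely when $p<p_0$, and combined with the two-sided bound $\frac{2}{z_{max}}(z_{a,1}+z_{A,2})\le V\le C_3(z_{a,1}+z_{A,2})$ (the upper bound using $V(\tb z(t))\le V(\tb z(0))$ to keep the differences $z_{A,1}-z_{a,1}$ and $z_{a,2}-z_{A,2}$ bounded below) it yields $\frac{d}{dt}V\le -(C_1/C_3)V$ and hence the exponential rate directly — no LaSalle argument, no classification of $\omega$-limit sets among boundary equilibria, and no linearisation are needed; the convergence of $z_{A,1}$ and $z_{a,2}$ to $\zeta$ then follows from a differential inequality for $(z_{A,1}-\zeta)^2$ driven by the already-established exponential decay of $z_{a,1}+z_{A,2}$.
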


Note that the limit reached 
 depends on the genotype which is initially in majority in each patch,
since the subset $\mathcal{D}$ is invariant under the dynamical system~\eqref{systdet}. 
Secondly, when $p=0$, the results of Theorem~\ref{theoCvceD} can be proven easily since the two patches are independent from each other. 
The difficulty is thus to prove the result when $p>0$. Our argument allows us to deduce an explicit constant $p_0$ 
under which we have convergence to an equilibrium with reproductive isolation between patches. 
However, we are not able to deduce a rigorous result for all $p$. Indeed, when $p$ increases, 
there are more mixing between the two patches which makes the model difficult to study.
Nevertheless simulations in Section \ref{sectionillustr} suggest that the result stays true.\\

Let us now introduce our main result on the probability and the time needed for the stochastic process $\mathbf{N}^K$ to reach 
a neighbourhood of the equilibria defined in \eqref{eq-z00z}.

\begin{theo}\label{maintheo}
 Assume that $\tb{Z}^K(0)$ converges in probability to a deterministic vector ${\bf{z}^0}$ belonging to 
$\mathcal{D}$, with $(z_{a,1}^0,z_{A,2}^0)\neq (0,0)$.
Introduce the following bounded set depending on $\eps>0$:
 $$ \mathcal{B}_\eps:= [(\zeta-\eps)K,(\zeta+\eps)K] \times \{0\} \times \{0\} \times [(\zeta-\eps)K,(\zeta+\eps)K]. $$
 Then there exist three positive constants $\varepsilon_0$, $C_0$ and $m$, and a positive constant 
 $V$ depending on $(m,\eps_0)$ such that if $p < p_0$ and $\eps\leq \eps_0$,
 \begin{equation}
\label{eq_maintheo}
  \lim_{K \to \infty}\P \left( \left| \frac{T^K_{\mathcal{B}_{\eps}}}{\log K}-\frac{1}{b(\beta-1)} \right|\leq C_0\eps, 
 \; \tb{N}^K\left(T^K_{\mathcal{B}_{\eps}}+t\right)\in  \mathcal{B}_{m\eps}\; \forall t \leq e^{VK} \right)= 1,
 \end{equation}
 where $T^K_\mathcal{B}$, $\mathcal{B} \subset \R_+^\mathcal{E}$ is the hitting time of the set $\mathcal{B}$ by the population process 
 $\tb{N}^K$.\\
 Symmetrical results hold for the equilibria $(0,\zeta,\zeta,0)$, $(\zeta,0,\zeta,0)$ and $(0,\zeta,0,\zeta)$.
 \end{theo}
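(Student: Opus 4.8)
The plan is to split the analysis of $\tb{N}^K$ into three phases, following the classical ``large population, fast extinction'' strategy of Champagnat and coauthors, and to use the deterministic description provided by Theorems \ref{thm_systdet} and \ref{theoCvceD} as the skeleton.

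\textbf{Phase 1: deterministic tracking until a small neighbourhood of $(\zeta,0,0,\zeta)$.} Fix $\eps$ small. By Lemma \ref{lemapprox}, on any finite time window $[0,T]$ the rescaled process $\tb{Z}^K$ stays uniformly close to $\tb{z}^{(\bf z^0)}$ with probability tending to $1$. Since $\bf z^0 \in \mathcal{D}$ and $p<p_0$, Theorem \ref{theoCvceD} guarantees $\tb{z}^{(\bf z^0)}(t)\to(\zeta,0,0,\zeta)$; moreover the quantitative estimate gives exponential convergence once the trajectory enters $\mathcal{K}_p$. So there is a (deterministic) time $T_0=T_0(\eps,\bf z^0)$, \emph{not} depending on $K$, after which $\tb{z}^{(\bf z^0)}$ lies within distance $\eps/2$ (say) of $(\zeta,0,0,\zeta)$, with the first and fourth coordinates of order $\zeta$ and the second and third small but possibly still positive. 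The point of Phase 1 is only that it costs $O(1)$ time, hence is negligible compared with $\log K$; at its end $N^K_{a,1}$ and $N^K_{A,2}$ are at most of order $\eps K$ while $N^K_{A,1},N^K_{a,2}$ are of order $\zeta K$.

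\textbf{Phase 2: extinction of the minority types, which takes time $\sim \log K/(b(\beta-1))$.} This is the heart of the proof. Conditionally on the majority populations staying in a tube of width $O(\eps)$ around $(\zeta K,\zeta K)$ (which holds for a time $e^{VK}$ by a Freidlin--Wentzell / large-deviation argument, cf. Phase 3), the pair $(N^K_{a,1},N^K_{A,2})$ is stochastically sandwiched between two multitype branching (birth--death--migration) processes. Indeed, when $N^K_{A,1}\approx\zeta K$ and $N^K_{a,1}=O(\eps K)$, the per-individual birth rate of an $a$-individual in patch $1$ is $b(\beta N_{a,1}+N_{A,1})/(N_{a,1}+N_{A,1})\approx b$ up to an $O(\eps)$ correction, its death rate is $d+\tfrac cK(N_{a,1}+N_{A,1})\approx d+c\zeta=\beta b-d \cdot(\text{something})$; combined with the migration terms one gets an effective growth rate $-(b(\beta-1))+O(\eps)$ for the minority types. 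Hence $(N^K_{a,1},N^K_{A,2})$ behaves like a subcritical branching process with Malthusian parameter $-(b(\beta-1))+O(\eps)$ started from a value of order $\eps K$. A standard result on subcritical branching processes (extinction time of a subcritical process started from $\lfloor xK\rfloor$ individuals is $\tfrac{\log K}{|\text{Malthusian param.}|}(1+o(1))$ in probability) gives that the extinction time of this pair is $\tfrac{\log K}{b(\beta-1)}+O(\eps)\log K + o(\log K)$. Comparison with both a slightly supercritical-corrected upper bound and lower bound branching process, and then letting $\eps_0$ be small enough so that the $O(\eps)$ corrections are controlled, yields the first event in \eqref{eq_maintheo} with the constant $C_0$ absorbing all the $O(\eps)$ terms. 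One must also check the hypothesis $(z^0_{a,1},z^0_{A,2})\neq(0,0)$ is used exactly here: it guarantees the initial size of the bounding branching processes is of order $K$, so the $\log K$ asymptotics is genuinely attained (if it were $0$, there would be nothing to extinguish from that side, and one would still need the other coordinate to be nonzero — which is why the ``or'' in the hypothesis).

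\textbf{Phase 3: stability of $\mathcal{B}_{m\eps}$ for an exponentially long time.} Once $N^K_{a,1}=N^K_{A,2}=0$, patches decouple: in each patch one has a one-type logistic birth--death process $N^K_{A,1}$ (resp.\ $N^K_{a,2}$) with the rates \eqref{birthrate}--\eqref{deathrate}, which is a classical logistic branching process with a unique attracting equilibrium at $\zeta K$. The exit time of such a process from a neighbourhood $[(\zeta-m\eps)K,(\zeta+m\eps)K]$ is $e^{VK}$ for some $V=V(m\eps)>0$, by the large-deviation estimates for density-dependent population processes (Freidlin--Wentzell, or the explicit estimates in Champagnat 2006). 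The only subtlety is that before extinction is complete the coordinates $N^K_{a,1},N^K_{A,2}$ are positive (though tiny), so strictly speaking the patches are coupled through the migration terms; but the migration rate $\rho_{\bar i\to i}(\tb n)=p\,n_{\alpha,\bar i}n_{\bar\alpha,\bar i}/(n_{\alpha,\bar i}+n_{\bar\alpha,\bar i})$ is bounded by $p$ times the minority size, hence $O(\eps K)$ during Phase 2 and $0$ afterwards, so it is a negligible perturbation and can be dominated in the same coupling. Choosing $m$ so that $\mathcal{B}_{m\eps}$ contains the fluctuations during Phases 2--3 and $V$ accordingly, one obtains the second event in \eqref{eq_maintheo}. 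Finally one intersects the three high-probability events and lets $K\to\infty$.

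\textbf{Main obstacle.} The delicate point is Phase 2: making the branching-process sandwich rigorous \emph{simultaneously} in all four coordinates, because the minority types in the two patches feed each other through migration and both also feel the fluctuations of the two majority populations. One must run the coupling on the event (of probability $1-o(1)$) that the majorities stay in an $O(\eps)$-tube — which itself requires Phase 3-type estimates — and then show the correction to the Malthusian parameter is $O(\eps)$ uniformly, so that the constant $1/(b(\beta-1))$ is exactly recovered in the limit $\eps\to0$. Keeping track of the constants $(\eps_0,C_0,m,V)$ through this argument, and checking the order of the two phases (the $O(1)$ deterministic Phase 1 is genuinely swamped by $\log K$), is where the bookkeeping lies. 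Everything else is an application of results already stated (Lemma \ref{lemapprox}, Theorem \ref{theoCvceD}) together with standard branching-process and large-deviation lemmas, which presumably are collected in Sections \ref{sectionsto} and the Appendix.
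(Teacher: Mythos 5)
Your proposal is correct and follows essentially the same three-phase strategy as the paper: an $O(1)$ deterministic tracking phase via Lemma \ref{lemapprox} and Theorem \ref{theoCvceD}, a branching-process sandwich for the extinction of the minority types (the content of Proposition \ref{propexttime}, where the paper bounds the sum $Z^K_{a,1}+Z^K_{A,2}$ by subcritical binary branching processes after first pinning the majorities near $\zeta$ with Freidlin--Wentzell estimates), and a large-deviation argument \`a la Champagnat for the exponentially long persistence after decoupling. The points you flag as delicate are exactly the ones the paper handles in the two-step proof of that proposition.
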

 
{This theorem gives the order of magnitude of the time to reproductive isolation between the two patches, as a function of the population size scaling factor $K$. This isolation time is infinite when considering the dynamical system \eqref{systdet} for which $K$ is equal to infinity.} 
Note that the time needed to reach the reproductive isolation is inversely proportional to $\beta-1$ which, as studied previously, suggests that 
the system behaves differently for $\beta=1$. Moreover, the time does not depend on the parameter $p$. 
{Intuitively, this can be understood as follows:} the time needed to reach a neighbourhood of the state $(\zeta,0,0,\zeta)$ is of 
order $1$, and from this neighbourhood the time needed for the 
complete extinction of the $a$-individuals in the patch $1$ and the $A$-individuals in the patch $2$ is much longer, it is of order $\log K$. 
During this second phase, the migrations between the two patches are already balanced, {which entails the independence with respect to $p$}. 
Furthermore, the constant does not depend on $d$ and $c$ since there is no ecological difference between the two types and the two patches: 
during the second phase, the natural 
birth rate of the $a$-individuals in the patch $1$ is approximately $b$ since 
the patch $1$ is almost entirely filled with $A$-individuals,
and their natural 
death rate can be approximated by $d+c\zeta=b\beta$ where the term $c\zeta$ comes from the competition exerted by the $A$-individuals. Thus, their natural growth 
rate is approximately $b-b\beta$ which only depends on the birth parameters.

Note that Theorem~\ref{maintheo} gives not only an estimation of the time to reach a neighbourhood of the limit, but also it proves that 
the dynamics of the population process stays a long time in the neighbourhood of equilibria \eqref{eq-z00z} after this time. 

Finally, the assumption $(z_{a,1}^0,z_{A,2}^0)\neq (0,0)$ 
is necessary to get the lower bound in~\eqref{eq_maintheo}. Indeed, 
if $(z_{a,1}^0,z_{A,2}^0)= (0,0)$, the set $\mathcal{B}_\eps$ is reached faster, and thus only the upper bound still holds. 
In this case, the speed to reach the set $\mathcal{B}_\eps$ will depend on the speed of convergence of the sequence $(Z^K_{a,1},Z^K_{A,2})$ to the 
limit $(0,0)$. In the trivial example where $(Z^K_{a,1},Z^K_{A,2})=(0,0)$, $T^K_{\mathcal{B}}$ will be of order $1$ which is the time needed for the processes 
$Z^K_{A,1}$ and $Z^K_{a,2}$ to 
reach a neighbourhood of the equilibrium $\zeta$.

\section{Discussion of the model}\label{ModelDiscussion}

{Assortative mating and genetic incompatibilities have been modeled and studied by many authors
 using discrete time models (see for instance \cite{gavrilets1998evolution, matessi2002long, gavrilets2003perspective, BurgerSchneider2006, Servedio2010} and references therein). 
 Comparing continuous time models and discrete non-overlapping generations models is tricky. Indeed, some concepts that are clearly defined for the second 
 class of models, like mating success or cost of choosiness, are hard to adapt to the first one. In this section we discuss our model in link 
 with previous work.}

\subsection*{Assortative mating}
Assortative mating can result from different factors. {Here, we are interested in assortative mating by phenotypic matching. 
That is to say, we consider uniform encounters between individuals and assume that assortative mating is the consequence of
an increased mating probability between individuals with the same phenotype, when encountering. This leads to the following birth rate of $\alpha$-individuals on patch  $i$
\ben \label{Assortmate1}bn_{\alpha,i}\frac{\beta n_{\alpha,i}+n_{\bar{\alpha},i}}{ n_{\alpha,i}+n_{\bar{\alpha},i}}.\een
We think of a comportemental or a mechanical prezygotic isolation after encountering. As an example of our birth rate definition, we can think of high density 
populations of milkweed longhorn beetle \textit{Tetraopes tetraophthalmus} where assortative mating is strong because at high density, large males are more 
likely to interfere with small males' copulation with large females \cite{mclain1987male}. 
We can find other examples of this type in a recent review on assortative mating in animals \cite{jiang2013assortative}.
Note that we can also interpret the birth rate \eqref{Assortmate1} as
post-zygotic isolation \cite{ravigne2010speciation}, thinking of a
low survival probability of the diploid zygotes of genotype $Aa$ after mating \cite{gavrilets2004fitness,bank2012limits}.}

{In contrast with our model, most papers about sexual preferences 
(see for instance \cite{gavrilets1998evolution, matessi2002long, BurgerSchneider2006, Servedio2010}) use generational models 
with infinite population size, and study the evolution through time of the frequency of each genotype.
As a consequence, they express the population dynamics in terms of a table describing the frequencies of mating at each generation.
With our notations, the table of the Supplementary Material of \cite{Servedio2010} giving probabilities 
that the individuals with genotype $\alpha$ mate with any individual with genotype 
$\alpha'$ in the deme $i$ and transmit their genotype writes:
\begin{equation}\label{Probamate}
\begin{tabular}{c|c|c}
 $\alpha$ $\smallsetminus$ $\alpha'$ &$A$  &  $a$    \\\hline
$A$    &  $\frac{ \beta n^2_{A,i}}{(n_{A,i}+n_{a,i})(\beta n_{A,i}+n_{a,i})}$ &  $\frac{  n_{A,i} n_{a,i} }{(n_{A,i}+n_{a,i})(\beta n_{A,i}+n_{a,i})}$ \\ \hline
$a$        &  $\frac{ n_{A,i} n_{a,i}}{(n_{A,i}+n_{a,i})( n_{A,i}+\beta n_{a,i})}$ &  $\frac{ \beta n^2_{a,i}}{(n_{A,i}+n_{a,i})(n_{A,i}+\beta n_{a,i})}$  \\ 
\end{tabular}
\end{equation}
Here the lines give the genotype $\alpha$ transmitted to the offspring (often called the female genotype).
Note that the same probabilities are derived in \cite{gavrilets1998evolution} (in the case $n=\infty$), or in \cite{matessi2002long}.
These mating probabilities at first glance may seem very different from the equations governing the births in our model \eqref{Assortmate1}.
However, as explicited in the Supplementary Information of \cite{Servedio2010}, the mate choice mechanism is similar to ours: 
during mating, individuals encounter uniformly and are more likely to mate with an individual with the allele that they themselves carry.}

{The difference comes from the fact that
the models in 
\cite{gavrilets1998evolution, matessi2002long, BurgerSchneider2006, Servedio2010} are in discrete time whereas ours is in continuous time.
Moreover, they assume that all the females reproduce once. In our model, we define the rates of mating, birth and death and not the probabilities of these events. 
That is to say, any individual can reproduce many times or even die before having any chance to reproduce. 
To compare our model with a generational one, we can compute the probabilities for a given $\alpha$-individual in the deme $i$ to reproduce
with an $\alpha'$-individual and transmits its genotype at time $t$,  
conditionally to the fact that this 
$\alpha$-individual reproduces and transmits its genotype at time $t$.
We get:
\begin{equation*}
\begin{aligned}
\P(  \alpha   \text{ mate with any } \alpha \text{ and transmits at } & t  \lvert \alpha \text{ reproduces  and transmits at } t )\\
&= \frac{\P( \alpha  \text{ mate with any } \alpha \text{ and transmits at }  t ) }{\P(\alpha \text{ reproduces and transmits at } t)}\\
&= \frac{ \frac{b\beta}{B}  \frac{n_{\alpha,i}}{(n_{\alpha,i}+n_{\bar{\alpha},i}) }  }{ \frac{b\beta}{B}  \frac{ n_{\alpha,i}}{(n_{\alpha,i}+n_{\bar{\alpha},i})} +\frac{b}{B} \frac{ n_{\bar{\alpha},i}}{(n_{\alpha,i}+n_{\bar{\alpha},i})}  }
= \frac{\beta n_{\alpha,i}}{\beta n_{\alpha,i}+n_{\bar{\alpha},i}},
\end{aligned}
\end{equation*}
and
\begin{equation*}
\P(   \alpha   \text{ mate with any } \bar\alpha \text{ and transmits at } t  \lvert \alpha \text{ reproduces at } t \text{  and transmits})
= \frac{ n_{\bar{\alpha},i} }{\beta n_{\alpha,i}+n_{\bar{\alpha},i}}.
\end{equation*}
Multiplying by the frequency of the $\alpha$-individuals in the deme $i$ gives the expressions 
derived in classical generational models \eqref{Probamate}. 
That is to say, in both cases, the mating probabilities at the mating time are similar.}

{Assortative mating can also derive from a non-homogeneous mating of individuals as proposed by \cite{rudnicki2015model}. In their case, 
the birth rate for an individual with genotype $\alpha$ in the deme $i$ is
\ben \label{Assortmate2}b n_{\alpha,i} \frac{\beta n_{\alpha,i} + \frac{1}{2} n_{\bar{\alpha},i}}{\beta n_{\alpha,i} + n_{\bar{\alpha},i}} +
b n_{\bar{\alpha},i} \frac{\frac{1}{2} n_{\alpha,i}}{n_{\alpha,i} + \beta n_{\bar{\alpha},i}}. \een 
This non uniform encountering and mating can presume of an ecological or temporal isolation of reproducing individuals. Indeed, 
with this expression, individuals of the same genotype are more likely to encounter than individuals of different genotypes, as if individuals of the 
same genotype were more likely to be at the same place at the same time.
As an example, this definition of birth rate can model reproduction of herma\-phro\-di\-tic plants with uniform pollen dispersal within each deme and 
simultaneous maturation of both male and female reproductive organs, at a time that depends on the plant genotype, as studied 
in \cite{Herrero2003, Savolainenetal2006}.}

\subsection*{Cost of choosiness}
Cost of choosiness for populations having 
specific mating periods and limited mating trials have been studied in \cite{gavrilets1998evolution, BurgerSchneider2006, KoppHermisson2008} notably. {In these articles, each female can reproduce at most once, and cost of choosiness is quantified by a maximum number $n$ of encounters that a female can make, in order to reproduce. In the present article, we assume a constant 
availability of both male and female organs of hermaphroditic individuals (like in sponges, sea anemones, tapeworms, snails, earthworms, or 
some fishes \cite{avise2009evolutionary} for instance). However, the potential of reproduction of each individual is hampered by its lifespan, which is stochastic.} 
\subsection*{Initial conditions}
Concerning the initial allelic diversity, which is a question highly debated in the literature on speciation \cite{weissing2011adaptive}, 
we have in mind populations where traits evolved neutrally before taking part in mating preferences after 
a change in the environment or a migration of the population to a new environment.
For example, it is the case for the two sister species \textit{P. nyererei} and \textit{P. pundamilia}. Males of these two species have different nuptial colorations 
(red and blue, respectively), and females of these two species have preferences for a specific male nuptial coloration in clear water 
(red for \textit{P. nyererei} and blue for \textit{P. pundamilia}).
These mating preferences have been proven to be inheritable \cite{haesler2005inheritance}, and uniformly random mating in turbid water has been inferred 
from phenotype frequency distribution in 
nature \cite{seehausen1997cichlid}.

\subsection*{Migration}
In our model the migration rate of a given individual is proportional to the {frequency} of individuals that do not have the same genotype as the considered 
individual. 
The idea is that an individual is more prone to move if it does not find suitable mates in its deme.
This particular form of mating success dependent dispersal has also been studied in \cite{PayneKrakauer1997} for a continuous space. 
{In \cite{chaput2010condition}, the authors study the dispersal behaviour of the banded damselflies \textit{Calopteryx spendens}, which display a lek mating 
system. They observe that females move to find a suitable mate, and that they disperse less when the sex-ratio is male biased.
This is in agreement with our hypothesis that individuals migration rate is a decreasing function of the frequency of suitable mates. 
More generally,}
correlations between male dispersal and mating success have been empirically observed {(see \cite{Schwagmeyer1988} or \cite{Honeretal2007} for instance)}.

{To emphasize the fact that migration is also governed by mate choice, we could rewrite the parameter of migration $p$ as $(\beta-1)p'$. Some formulations 
would then be modified although results would be identical. 
A degree of freedom would be kept thanks to $p'$ showing that migration and mating choice are not completely linked in our model,
but we could not have systems with migration and no preference ($\beta=1$) anymore.
In Section \ref{sectionillustr}, we provide a deep study of the influence of the parameter $p$ on the behaviour of the system.}

\section{Study of the dynamical system} \label{sectiondet}

In this section, we study the dynamical system~\eqref{systdet} in order to prove Theorems~\ref{thm_systdet} and \ref{theoCvceD}. 
In the first subsection, we are concerned with the equilibria of~\eqref{systdet} 
and their local stability (Theorem \ref{thm_systdet}). In the second subsection, we look more closely at the case where the migration 
rate $p$ is lower than $p_0$ and prove the convergence of the solution 
to~\eqref{systdet} towards one of the equilibria with an exponential rate once the trajectory belongs 
to $\mathcal{K}_p$ (Theorem \ref{theoCvceD}).

\subsection{Fixed points and stability when $\beta>1$} \label{sectionfixedpoints}

First of all, we prove that all nonnegative and non-zero stationary points of~\eqref{systdet} are given in Theorem~\ref{thm_systdet}.
Let us write the four equations defining equilibria $(z_{A,1},z_{a,1},z_{A,2},z_{a,2})$ of the dynamical system \eqref{systdet}:
\begin{align}
\label{equa1}
&z_{A,1}\Bigl[ b\frac{\beta z_{A,1}+z_{a,1}}{z_{A,1}+z_{a,1}}-d-c(z_{A,1}+z_{a,1})-p\frac{z_{a,1}}{z_{A,1}+z_{a,1}}\Bigr]+p\frac{z_{A,2}z_{a,2}}{z_{A,2}+z_{a,2}}=0, \\
\label{equa2}
& z_{a,1}\Bigl[ b\frac{\beta z_{a,1}+z_{A,1}}{z_{A,1}+z_{a,1}}-d-c(z_{A,1}+z_{a,1})-p\frac{z_{A,1}}{z_{A,1}+z_{a,1}}\Bigr]+p\frac{z_{A,2}z_{a,2}}{z_{A,2}+z_{a,2}}=0,\\
\label{equa3}
& z_{A,2}\Bigl[ b\frac{\beta z_{A,2}+z_{a,2}}{z_{A,2}+z_{a,2}}-d-c(z_{A,2}+z_{a,2})-p\frac{z_{a,2}}{z_{A,2}+z_{a,2}}\Bigr]+p\frac{z_{A,1}z_{a,1}}{z_{A,1}+z_{a,1}}=0,\\
\label{equa4}
& z_{a,2}\Bigl[ b\frac{\beta z_{a,2}+z_{A,2}}{z_{A,2}+z_{a,2}}-d-c(z_{A,2}+z_{a,2})-p\frac{z_{A,2}}{z_{A,2}+z_{a,2}}\Bigr]+p\frac{z_{A,1}z_{a,1}}{z_{A,1}+z_{a,1}}=0.
\end{align}
By subtracting \eqref{equa1} and \eqref{equa2}, and \eqref{equa3} and \eqref{equa4} we get
$$
(z_{A,i}-z_{a,i})\Bigl( b\beta-d-c(z_{A,i}+z_{a,i})\Bigr)=0, \quad i \in \mathcal{I}.
$$ 
Therefore equilibria are defined by the four following cases:
$$
\left\{
\begin{aligned}
&z_{A,1}=z_{a,1}\\
 &\text{or}\\
 & z_{A,1}+z_{a,1}=(b\beta-d)/c
\end{aligned}
\right.
\quad \text{and}\quad
\left\{
\begin{aligned}
 &z_{A,2}=z_{a,2}\\
& \text{or}\\
 & z_{A,2}+z_{a,2}=(b\beta-d)/c.
\end{aligned} 
\right.
$$
\underline{\textbf{1st case}}: $z_{A,1}=z_{a,1}$ and $z_{A,2}=z_{a,2}$.\\
From \eqref{equa1} and \eqref{equa3} we derive
$$z_{A,1}\Bigl[ b\frac{(\beta +1)}{2}-d-2cz_{A,1}-\frac{p}{2}\Bigr]=-\frac{z_{A,2}p}{2},$$
{and}
$$-\frac{z_{A,1}p}{2}=z_{A,2}\Bigl[ b\frac{(\beta +1)}{2}-d-2cz_{A,2}-\frac{p}{2}\Bigr].$$
By summing, we get $P(z_{A,1})=P(z_{A,2})$ where $P$ is the polynomial function defined by:
$$
P(X)=X\Bigl[ b\frac{(\beta +1)}{2}-d-p\Bigr]-2cX^2,
$$
whose roots are $0$ and 
$$\frac{b(\beta +1)-2d-2p}{4c}.$$ 
Then, either $z_{A,1}=z_{A,2}$ or $z_{A,1}$ and $z_{A,2}$ are symmetrical with respect to the maximum of $P$ which leads 
to 
$$z_{A,1}=\frac{b(\beta +1)-2d-2p}{4c}-z_{A,2}.$$ 
In the first case $z_{A,1}=z_{A,2}$, Equation \eqref{equa1} implies that either $z_{A,1}=0$, which gives the null equilibrium or 
$$z_{A,1}=\frac{b(\beta +1)-2d}{4c},$$ 
which gives equilibrium \eqref{eq-zzzz}. 
In the second case, we inject the expression of $z_{A,2}$ in \eqref{equa1} to obtain that $z_{A,1}$ satisfies:
$$
-2cX^2+AX+\frac{p}{4c}A=0,
$$
with $A= b(\beta +1)/2-d-p$. The discriminant of this degree $2$ equation is $A(A+2p)$. Therefore, either
$$
z_{A,1}=\frac{A+\sqrt{A(A+2p)}}{4c} \quad \text{and}\quad z_{A,2}=\frac{A-\sqrt{A(A+2p)}}{4c},
$$
or
$$
z_{A,1}=\frac{A-\sqrt{A(A+2p)}}{4c} \quad \text{and}\quad z_{A,2}=\frac{A+\sqrt{A(A+2p)}}{4c}.
$$
However, these equilibria are not positive.
\medskip\\
\underline{\textbf{2nd case}} : $ z_{A,1}+z_{a,1}=(b\beta-d)/c=\zeta=z_{A,2}+z_{a,2}$.\\
As previously, we obtain
$$
(b(\beta-1)+p)z_{A,1}\Bigl(\frac{z_{A,1}}{\zeta}-1\Bigr)=pz_{A,2}\Bigl(\frac{z_{A,2}}{\zeta}-1\Bigr),
$$
and
$$pz_{A,1}\Bigl(\frac{z_{A,1}}{\zeta}-1\Bigr)=(b(\beta-1)+p)z_{A,2}\Bigl(\frac{z_{A,2}}{\zeta}-1\Bigr).
$$
By summing these equalities, we get $Q(z_{A,1})=Q(z_{A,2})$ with
$$
Q(X)=X\Bigl(\frac{X}{\zeta}-1\Bigr)\bigl(b(\beta-1)+2p\bigr).
$$
Then, either $z_{A,1}=z_{A,2}$ and \eqref{equa1} gives that
$$
z_{A,1}\Bigl(\frac{z_{A,1}}{\zeta}-1\Bigr)=0,
$$
which gives equilibrium \eqref{eq-z0z0}, or $z_{A,1}=\zeta-z_{A,2}$ which implies $
z_{A,1}(z_{A,1}/\zeta-1)=0
$ and gives equilibrium \eqref{eq-z00z}.
\medskip\\
\underline{\textbf{3rd case}} : $ z_{A,1}=z_{a,1}, \text{ and } z_{A,2}+z_{a,2}=(b\beta-d)/c=\zeta$.\\
Substituting in Equations \eqref{equa1} and \eqref{equa4} we get that
$$z_{A,1}\Bigl[ b\frac{\beta+1}{2}-d-2c z_{A,1}-\frac{p}{2}\Bigr]+p\frac{z_{A,2}(\zeta-z_{A,2})}{\zeta}=0, $$
and
$$(\zeta- z_{A,2})\Bigl[ \dfrac{b}{\zeta}(\beta \zeta+(1-\beta) z_2^A)-d-c \zeta-p\frac{z_{A,2}}{\zeta}\Bigr]+p\frac{z_{A,1}}{2}=0.$$
Therefore, since $\zeta=(b\beta-d)/c$, these equations become
\begin{equation}
\label{eqcas3_1}
z_{A,1}=\dfrac{2}{p}(z_{A,2}-\zeta) z_{A,2} \Bigl[ \dfrac{b(1-\beta)-p}{\zeta}\Bigr],\end{equation}
and
\begin{equation*}
\label{eqcas3_2}
 \dfrac{(z_{A,2}-\zeta) z_{A,2}}{\zeta} \Big\{ \dfrac{2}{p} \left[ b(1-\beta)-p\right] \Bigl[ b\frac{\beta+1}{2}-d-\frac{p}{2} -\dfrac{4c}{p} (z_{A,2}-\zeta) z_{A,2} \dfrac{b(1-\beta)-p}{\zeta} \Bigr] -{p}\Big\}=0.
\end{equation*}
This last equation provides the following possible cases:
\begin{itemize}
\item $z_{A,2}=0$, which implies $z_{a,2}=\zeta $, and from \eqref{eqcas3_1} $z_{A,1}=z_{a,1}=0$ (Equilibrium \eqref{eq-z000}),
\item $z_{A,2}=\zeta$, which implies $z_{a,2}=0 $, and from \eqref{eqcas3_1} $z_{A,1}=z_{a,1}=0$ (Equilibrium \eqref{eq-z000}),
\item $z_{A,2}$ solution of
\begin{equation*}
(b(1-\beta)-p) \Big[ b\frac{\beta+1}{2}-d-\frac{p}{2} -\dfrac{4c}{p} (z_{A,2}-\zeta) z_{A,2} \dfrac{b(1-\beta)-p}{\zeta} \Big]-
\dfrac{p^2}{2}=0,
\end{equation*}
which can be summarized as 
\begin{equation} \label{eqcas3_seconddegre} (z_{A,2}-\zeta)z_{A,2}+C=0,
\end{equation}
where 
$$C=\dfrac{p\zeta}{8c(b(\beta-1)+p)^2} \left[ b^2(\beta^2-1)+2p(b-d)-2bd(\beta-1) \right].$$
The discriminant $\Delta$ of the degree $2$ Equation \eqref{eqcas3_seconddegre}
was introduced in Equation \eqref{defzbar}.
 A simple computation gives the sign of $\Delta$:
\begin{equation}
\begin{aligned}\label{Deltapositif}
\Delta
&={\zeta^2-4C}\\
&= \zeta^2-\dfrac{p\zeta}{2c(b(\beta-1)+p)^2} \left[ b^2(\beta^2-1)+2p(b-d)-2bd(\beta-1) \right]  \\
 &= \dfrac{\zeta}{2c(b(\beta-1)+p)^2} \Bigl[ 2b^2(\beta-1)^2(b\beta-d)\\
 &\hspace{4cm}+2bp(\beta-1)[b\beta-d+p]+b^2(\beta-1)^2p \Bigr]
  > 0.
\end{aligned}
\end{equation}
Thus \eqref{eqcas3_seconddegre} has two distinct solutions:
\begin{equation*}
z_{A,2}^{+}=\dfrac{\zeta+\sqrt{\Delta}}{2}>0 \quad \text{and} \quad  z_{A,2}^{-}=\dfrac{\zeta-\sqrt{\Delta}}{2}.
\end{equation*}
Since $C>0$, both roots $z_{A,2}^{-}$ and $z_{A,2}^{+}$ are strictly positive.\\
We finally deduce from \eqref{eqcas3_1} and \eqref{eqcas3_seconddegre} that in both cases $z_{A,2}=z_{A,2}^{-}$ and 
$z_{A,2}= z_{A,2}^{+}$ then
\begin{equation*}
 z_{A,1}=z_{a,1}=\dfrac{b^2(\beta^2-1)+2p(b-d)-2bd(\beta-1)}{4c(b(\beta-1)+p)}.
\end{equation*}
This gives equilibrium \eqref{eq-complique}, by symmetry between patches 1 and 2.
\end{itemize}
\bigskip

The end of this subsection provides a detailed exposition of the stability of fixed points of \eqref{systdet}.
 We consider separately each equilibrium and use symmetries of the dynamical system between patches 1 and 2 and between alleles $A$ and $a$.\smallskip\\
\underline{\textbf{Equilibrium \eqref{eq-z000}}}: 
By subtracting \eqref{equa2} from \eqref{equa1}, we obtain:
\begin{equation}\label{diffpop2}
\frac{d}{dt}(z_{A,1}-z_{a,1})=(z_{A,1}-z_{a,1})\Bigl( b\beta-d-c(z_{A,1}+z_{a,1})\Bigr).
\end{equation}
This equation provides the asymptotic instability since for this equilibrium, $z_{A,1}+z_{a,1}=0$.
\medskip\\
\underline{\textbf{Equilibrium \eqref{eq-z00z}}}: We consider the equilibrium $(\zeta,0,0,\zeta)$. The Jacobian matrix of the dynamical system at this fixed point is:
\begin{equation*}
 \begin{pmatrix}
  -(b\beta-d)     & b(1-2\beta)+d-p  	& p                &        0           \\
  0                & b(1-\beta)-p    	& p                &        0           \\
  0                &     p         	&  b(1-\beta)-p   &          0         \\
  0                &     p       	& b(1-2\beta)+d-p & -(b\beta-d)
 \end{pmatrix}
\end{equation*}
The eigenvalues are:
 $-b(\beta-1)$, $-b(\beta-1)-2p$, and $-(b\beta-d)$. All of them are negative, and $-(b\beta-d)$ is of multiplicity two. The equilibrium is therefore asymptotically stable.
 \medskip\\
\underline{\textbf{Equilibrium \eqref{eq-z0z0}}}: We consider the equilibrium $(0,\zeta,0,\zeta)$. The Jacobian matrix of the dynamical system at this fixed point is:
\begin{equation*}
 \begin{pmatrix}
  b(1-\beta)-p    & 0            & p                &        0           \\
  b(1-2\beta)+d-p & -(b\beta-d) & p                &        0           \\
  p                &     0        &b(1-\beta)-p     &          0         \\
  p                &     0        & b(1-2\beta)+d-p & -(b\beta-d)
 \end{pmatrix}
\end{equation*}
The eigenvalues are:
 $-b(\beta-1)$, $-b(\beta-1)-2p$, and $-(b\beta-d)$. All of them are negative, and $-(b\beta-d)$ is of multiplicity two.
The equilibrium is therefore asymptotically stable.
\medskip\\
\underline{\textbf{Equilibrium \eqref{eq-zzzz}}}: The Jacobian matrix of the dynamical system at this fixed point is:
\begin{equation*}
 \dfrac{1}{4} 
  \begin{pmatrix}
  2(d-b)-p         & 2(d-b\beta)-p            	& p                &        p           \\
  2(d-b\beta)-p   &    2(d-b)-p         	& p                &        p           \\
  p                &     p                      &  2(d-b)-p        & 2(d-b\beta)-p         \\
  p                &     p        		& 2(d-b\beta)-p   &    2(d-b)-p
 \end{pmatrix}
\end{equation*}
The eigenvalues are: $-(b(\beta+1)/2-d)$, $-(b(\beta+1)/2-d+p)$ and $b(\beta-1)/2$. 
$-(b(\beta+1)/2-d)$ and $-(b(\beta+1)/2-d+p)$ are negative, and $b(\beta-1)/2$ is positive and of multiplicity two.
The equilibrium is thus unstable.
\medskip\\
\underline{\textbf{Equilibrium \eqref{eq-complique}}}: 
Recall the definition of $\wt{\zeta}$ in \eqref{defzbar}
and assume that $z_{A,1}=z_{a,1}=\wt{\zeta}$.  We first prove that at this fixed point, 
\begin{equation} \label{2tildezinfz} z_{A,1}+z_{a,1}= 2 \wt{\zeta} <\zeta,\end{equation} 
which is equivalent to
$$ b^2(\beta^2-1)+2p(b-d)-2bd(\beta-1)<2(b(\beta-1)+p)(b\beta-d). $$
A straightforward computation leads to 
$$b^2(\beta^2-1)+2p(b-d)-2bd(\beta-1)-2(b(\beta-1)+p)(b\beta-d)=-b(\beta-1)(2p+b(\beta-1)),$$
which is negative and thus proves the inequality. 
From \eqref{2tildezinfz} we deduce that near the equilibrium \eqref{eq-complique}, $b\beta-d-c(z_{A,1}+z_{a,1})>0$. 
The instability then derives from Equation \eqref{diffpop2}.

\subsection{Fixed points and stability when $\beta=1$} \label{sectionfixedpoints1}
\label{sectionbeta1}
Following a similar reasoning to the one in Section \ref{sectionfixedpoints}, we obtain that 
the equilibria of the system are exactly the lines 
$ \mathcal{L}$ and $ \tilde{\mathcal{L}} $ defined in Theorem \ref{thm_systdet}. 
A study of the Jacobian matrices proves that these equilibria are no longer hyperbolic. 
It ends the proof of Theorem \ref{thm_systdet}.

\subsection{Proof of Proposition \ref{prop1}}

This subsection is devoted to the proof of Proposition~\ref{prop1}.
The idea is to find a solution of the form 
$$\psi(t)=\gamma(t)\mathbf{v}(w,x) \quad  \text{with} \quad \gamma(0)=1,$$
where $\mathbf{v}(w,x)=(w-x,x,x,w-x)$ has been introduced in Proposition~\ref{prop1}.
Assuming that $\psi$ is solution to the system \eqref{systdet} with $\beta=1$, we deduce that for all $(\alpha,i)\in\mathcal{E}$:
\begin{equation*}\begin{aligned}
\frac{d}{dt}\psi_{\alpha,i}(t)&=\frac{d}{dt}\gamma(t) v_{\alpha,i} (w,x)\\
&=\psi_{\alpha,i}(t)(b-d-c(\psi_{\alpha,i}(t)+\psi_{\bar{\alpha},i}(t))) +p\frac{\psi_{\alpha,i}(t)\psi_{\bar{\alpha},i}(t)}{\psi_{\alpha,i}(t)+\psi_{\bar{\alpha},i}(t)}-p\frac{\psi_{\alpha,\bar{i}}(t)\psi_{\bar{\alpha},\bar{i},}(t)}{\psi_{\alpha,\bar{i}}(t)+\psi_{\bar{\alpha},\bar{i},}(t)}\\
&=\gamma(t)v_{\alpha,i} (w,x)(b-d-cw\gamma(t)).
\end{aligned}\end{equation*}
Thus $\gamma(t)$ satisfies the logistic equation 
$$\frac{d}{dt}\gamma(t)=\gamma(t)(b-d-cw\gamma(t)),$$
whose solution starting from $1$ is given by
\begin{equation}\label{gamma}
\gamma(t)=\frac{e^{t(b-d)}}{1+\frac{cw}{b-d}(e^{t(b-d)}-1)}.\end{equation}
In particular $\gamma(t)$ converges to $(b-d)/cw=\zeta/w$ as $t\to\infty$.\\
A standard computation proves that $\psi(t)=\gamma(t)\mathbf{v}(w,x)$ with $\gamma$ chosen according to \eqref{gamma} is the solution 
to \eqref{systdet} starting from $\mathbf{v}(w,x)$ and converges to $\zeta \mathbf{v}(w,x)/w= \mathbf{u}( \zeta x/w)$. This ends the proof of Proposition~\ref{prop1}.

\subsection{Containment and Lyapunov function for a small migration rate}

In this subsection, we are mainly interested in Equilibrium \eqref{eq-z00z}. 
Recall the definition of $\mathcal{D}$ in \eqref{defdelta}
\begin{equation*} \mathcal{D}:=\{ z\in \R_+^\mathcal{E}, z_{A,1}-z_{a,1}>0, z_{a,2}-z_{A,2}>0 \}.\end{equation*}
First, we prove that we can restrict our attention to the bounded set $\mathcal{K}_p\subset\mathcal{D}$ 
defined in \eqref{defK}. For the sake of readability, we introduce the two real numbers
\begin{equation} \label{defzminzmax} z_{min}:= \frac{b(\beta+1)-2d-p}{2c}\leq \zeta \leq \zeta+\frac{p}{2c}=:z_{max}, \end{equation}
which allows one to write the set $ \mathcal{K}_p$ defined in \eqref{defK} as
$$  \mathcal{K}_p:= \left\{ 
\tb{z} \in \mathcal{D}, \; \{ z_{A,1}+z_{a,1}, \ z_{A,2}+z_{a,2} \} \in \left[ z_{min}, z_{max}\right] 
\right\}. $$

\begin{lem}
\label{lemma_invariantset}
Assume that $p<b(\beta+1)-2d$. The set $\mathcal{K}_p$
is invariant under the dynamical system~\eqref{systdet}.
Moreover, any solution to~\eqref{systdet} starting from the set 
$ \mathcal{D} $
reaches $\mathcal{K}_p$ after a finite time.
\end{lem}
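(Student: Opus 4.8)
The plan is to work with two natural coordinates on each patch: the total population size $s_i := z_{A,i}+z_{a,i}$ and the signed difference $\delta_i := z_{A,1}-z_{a,1}$ on patch $1$, $\delta_2 := z_{a,2}-z_{A,2}$ on patch $2$ (so that $\mathcal{D}$ is exactly $\{\delta_1>0,\ \delta_2>0\}$). First I would check that $\mathcal{D}$ itself is invariant. From \eqref{diffpop2} we have $\frac{d}{dt}(z_{A,1}-z_{a,1})=(z_{A,1}-z_{a,1})(b\beta-d-cs_1)$, and the symmetric identity $\frac{d}{dt}(z_{a,2}-z_{A,2})=(z_{a,2}-z_{A,2})(b\beta-d-cs_2)$; these are linear homogeneous ODEs in $\delta_i$ with bounded (along any trajectory) coefficients, so if $\delta_i(0)>0$ then $\delta_i(t)>0$ for all $t\geq0$. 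Hence a trajectory started in $\mathcal{D}$ never leaves $\mathcal{D}$, and in particular the boundary faces $z_{a,1}=0$, $z_{A,2}=0$, $z_{A,1}=0$, $z_{a,2}=0$ are never hit from inside $\mathcal{D}$ in a way that would make the frequencies ill-defined.

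Next I would derive the ODE satisfied by the total sizes $s_1$ and $s_2$. Summing \eqref{equa1}-type and \eqref{equa2}-type right-hand sides (i.e. the first two lines of \eqref{systdet}), the frequency-dependent migration-out term $p\,z_{a,1}/s_1$ multiplied by $z_{A,1}$ plus $p\,z_{A,1}/s_1$ multiplied by $z_{a,1}$ collapses to $2p\,z_{A,1}z_{a,1}/s_1$, and similarly for the birth terms, giving
\begin{equation*}
\frac{d}{dt}s_1 = b\,s_1 + b(\beta-1)\frac{z_{A,1}^2+z_{a,1}^2}{s_1} - (d+cs_1)s_1 - 2p\frac{z_{A,1}z_{a,1}}{s_1} + 2p\frac{z_{A,2}z_{a,2}}{s_2},
\end{equation*}
and the symmetric equation for $s_2$. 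Writing $z_{A,1}z_{a,1}/s_1 = (s_1^2-\delta_1^2)/(4s_1)$ and $(z_{A,1}^2+z_{a,1}^2)/s_1 = (s_1^2+\delta_1^2)/(2s_1)$, one sees that the ``worst case'' for the $s_1$ dynamics — the one minimizing $\frac{d}{dt}s_1$ at fixed $s_1$, ignoring the nonnegative immigration term $2p\,z_{A,2}z_{a,2}/s_2$ — is $\delta_1=0$, i.e. $z_{A,1}=z_{a,1}$, which yields $\frac{d}{dt}s_1 \geq s_1\bigl(b(\beta+1)/2 - d - p/2 - cs_1\bigr)$; and the ``best case'' for an upper bound (drop the out-migration term, which is nonpositive, and bound immigration by $p\,s_2/2 \le p\,z_{max}/2$... actually one bounds $2p z_{A,2}z_{a,2}/s_2 \le p s_2/2$) gives $\frac{d}{dt}s_1 \le s_1(\beta b - d - cs_1) + p\,s_2/2$. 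The point is that $z_{min}c = b(\beta+1)/2 - d - p/2$ is exactly the positive root of the lower-bound logistic vector field, and $z_{max}$ is chosen so that $\beta b - d - c z_{max} = -p/2$, which dominates the immigration term $p s_2/(2s_1)\cdot s_1$ divided appropriately — so on the boundary $s_1 = z_{min}$ (with $s_2\ge z_{min}$) the vector field pushes $s_1$ up, and on $s_1 = z_{max}$ (with $s_2 \le z_{max}$) it pushes $s_1$ down, and symmetrically for $s_2$. This standard inward-pointing-vector-field argument on the faces of the box $[z_{min},z_{max}]^2$ in the $(s_1,s_2)$-plane, combined with invariance of $\mathcal{D}$, gives invariance of $\mathcal{K}_p$.

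For the second statement — every trajectory from $\mathcal{D}$ reaches $\mathcal{K}_p$ in finite time — I would use comparison with scalar logistic-type ODEs. The sum $\sigma := s_1 + s_2$ satisfies $\frac{d}{dt}\sigma \le b\beta\,\sigma - c(s_1^2+s_2^2) \le b\beta\,\sigma - \tfrac{c}{2}\sigma^2$ (the migration terms cancel in the sum, up to the fact that in-flux to one patch is out-flux from the other), so $\limsup_{t\to\infty}\sigma(t) \le 2b\beta/c$, giving an a priori bound that forces each $s_i$ to eventually enter a bounded region; more carefully, using the per-patch upper bound above together with this global bound one gets $\limsup s_i \le z_{max}$ after finite time. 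For the lower bound, once $s_2$ is bounded above, the term $2p z_{A,2}z_{a,2}/s_2 \ge 0$ only helps, and from $\frac{d}{dt}s_1 \ge s_1(b(\beta+1)/2 - d - p/2 - cs_1)$ — valid regardless of $\delta_1$ — a logistic comparison gives $\liminf s_1 \ge z_{min}$ after finite time, provided $s_1$ is not identically zero; but $s_1 \ge \delta_1 > 0$ on $\mathcal{D}$ (and $\delta_1$ is bounded below on compact time intervals by the linear equation, while for large times one shows $s_1$ cannot approach $0$ since its growth rate near $s_1=0^+$ is $\approx b(\beta+1)/2-d-p/2>0$ under $p<b(\beta+1)-2d$), so this case does not occur. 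Doing this for both patches and combining with invariance of $\mathcal{D}$ shows the trajectory enters $\mathcal{K}_p$ after a finite time. The main obstacle is bookkeeping: making the ``worst-case $\delta_i$'' estimates precise, and handling the coupling so that the two-sided bounds on $s_1$ and $s_2$ are established simultaneously rather than circularly — this is why one first pins down the global upper bound via $\sigma$, then the per-patch upper bounds, and only then the lower bounds.
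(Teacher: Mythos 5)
Your treatment of the two invariance statements is sound and essentially matches the paper's: invariance of $\mathcal{D}$ via the linear equation \eqref{diffpop2} for the differences, and invariance of $\mathcal{K}_p$ via the bounds $\frac{d}{dt}(z_{A,1}+z_{a,1})\ge c\,(z_{A,1}+z_{a,1})(z_{min}-(z_{A,1}+z_{a,1}))$ (using $z_{A,1}z_{a,1}\le (z_{A,1}+z_{a,1})^2/4$) and $\frac{d}{dt}(z_{A,1}+z_{a,1})\le (z_{A,1}+z_{a,1})(b\beta-d-c(z_{A,1}+z_{a,1}))+p(z_{A,2}+z_{a,2})/2$, which make the vector field non-outward on the faces of the box $[z_{min},z_{max}]^2$; your identifications of $z_{min}$ and $z_{max}$ as the relevant roots are correct.

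The gap is in the claim that a trajectory started in $\mathcal{D}$ \emph{reaches} the closed box in finite time. Your comparisons only give asymptotic containment: the logistic lower bound yields a comparison solution that tends to $z_{min}$ from below but never attains it in finite time, and the bootstrapped upper bound gives $\limsup_t (z_{A,i}+z_{a,i})\le z_{max}$, which is compatible with $z_{A,i}+z_{a,i}>z_{max}$ for all $t$. Since $\mathcal{K}_p$ is defined with the closed interval $[z_{min},z_{max}]$, the statements "$\liminf\ge z_{min}$ and $\limsup\le z_{max}$" do not prove the lemma. The paper closes both gaps with contradiction arguments built on the difference variable. (i) If $(z_{A,1}+z_{a,1})(t)\le \zeta-\eps$ for all $t$, then \eqref{diffpop2} forces $z_{A,1}-z_{a,1}\ge (z_{A,1}-z_{a,1})(0)\,e^{c\eps t}\to\infty$ (this is precisely where $\tb{z}(0)\in\mathcal{D}$ is used), contradicting the a priori bound $\limsup n\le 2\zeta$; applied with $\eps_0=(p+b(\beta-1))/(2c)$, i.e.\ $\zeta-\eps_0=z_{min}$, this shows the total size actually attains $z_{min}$ at a finite time, after which your invariance estimate keeps it there. (ii) If $(z_{A,1}+z_{a,1})(t)\ge z_{max}-\alpha$ for all large $t$ with $\alpha<p/(2c)$, then $b\beta-d-c(z_{A,1}+z_{a,1})<0$, so \eqref{diffpop2} gives $z_{A,1}-z_{a,1}\to 0$, hence $z_{A,1}z_{a,1}/(z_{A,1}+z_{a,1})\ge \tfrac14(z_{max}-2\alpha)$ eventually; inserting this into \eqref{sumpop1} makes the homogamy penalty dominate and yields $\frac{d}{dt}(z_{A,1}+z_{a,1})\le -\tfrac14(b(\beta-1)+p)z_{max}<0$, a contradiction. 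Without these (or equivalent) arguments your proof establishes only convergence toward $\mathcal{K}_p$, not finite-time entry.
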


\begin{proof}
First, Equation~\eqref{diffpop2} and the symmetrical equation for the patch 2 are sufficient to prove that the subset
$ \mathcal{D}$
is invariant under the dynamical system.

Second, we prove that the trajectory reaches the bounded set $\mathcal{K}_p$ in a finite time and third that $\mathcal{K}_p$ is stable.
The dynamics of the total population size $n=z_{A,1}+z_{a,1}+z_{A,2}+z_{a,2}$ satisfies
$$
\frac{dn}{dt}=n(\beta b-d)-2b(\beta-1)\left( \frac{z_{A,1}z_{a,1}}{z_{A,1}+z_{a,1}} + \frac{z_{A,2}z_{a,2}}{z_{A,2}+z_{a,2}} \right)-c((z_{A,1}+z_{a,1})^2+(z_{A,2}+z_{a,2})^2).
$$
Since $(a+b)^2\leq 2(a^2+b^2)$ for every real numbers $(a,b)$,
$$
\frac{dn}{dt}\leq n \left( \beta b -d - \frac{c}{2}n \right).
$$
Using classical results on logistic equations, we deduce that 
\begin{equation}
\label{lim_suptotal}
\limsup_{t\to +\infty} n(t)\leq 2\zeta.
\end{equation}
Let $\eps$ be positive, and suppose that for any $t>0$,
$ (z_{A,1}+z_{a,1})(t)\leq \zeta-\eps, $
then 
using \eqref{diffpop2} we have for $t \geq 0$,
\begin{equation}
\label{eq_diffinfty}
 z_{A,1}(t)\geq (z_{A,1}-z_{a,1})(t)\geq (z_{A,1}-z_{a,1})(0)e^{c\eps t} \underset{t\to+\infty}{\to} +\infty.
\end{equation}
This contradicts~\eqref{lim_suptotal}. As a consequence,  
\begin{equation}\label{audessusde}
           \exists \ t <\infty,\quad   (z_{A,1}+z_{a,1})(t) \geq \zeta-\eps.
          \end{equation}
In particular, this result holds for $\zeta-\eps_0=z_{min}$ where $\eps_0=(p+b(\beta-1))/2c$.\\
Furthermore, the dynamics of the total population size in the patch $1$ satisfies the following equation:
\begin{equation}
\label{sumpop1}
\begin{aligned}
\frac{d}{dt}(z_{A,1}+z_{a,1})= &(z_{A,1}+z_{a,1})(b \beta - d - c (z_{A,1}+z_{a,1}))\\
&-2(b(\beta-1)+p)\frac{z_{A,1}z_{a,1}}{z_{A,1}+z_{a,1}}+2p\frac{z_{A,2}z_{a,2}}{z_{A,2}+z_{a,2}}. 
\end{aligned}
\end{equation}
By noticing that $z_{A,1}z_{a,1} \leq (z_{A,1}+z_{a,1})^2/4$, we get 
\begin{equation} \label{minpopsizepatch}
\begin{aligned}
 \frac{d}{dt}(z_{A,1}+z_{a,1})&\geq (z_{A,1}+z_{a,1})\left(b \beta-d-c (z_{A,1}+z_{a,1})\right)-
\left(b(\beta-1)+p\right)\frac{z_{A,1}+z_{a,1}}{2}\\
&\geq c(z_{A,1}+z_{a,1})\left(z_{min}- (z_{A,1}+z_{a,1})\right).
\end{aligned}
\end{equation}
The last term becomes positive as soon as $z_{A,1}+z_{a,1}\leq z_{min}$. As a consequence, once the total population size 
in the patch 1 is larger than $z_{min}$, it stays larger than this threshold.
Using symmetrical arguments, the same conclusion holds for the patch 2.
Using additionaly \eqref{audessusde}, we find $t_{min}>0$ such that $\forall t\ge t_{min}$, 
\begin{equation}
\label{K_min}
z_{A,i}(t)+z_{a,i}(t) \ge z_{min}, \quad \forall i\in  \mathcal{I}, \text{ and } n(t)\leq 2\zeta +1.
\end{equation}
We now focus on the upper bound of the set $\mathcal{K}_p$ by bounding from above the total population size in the patch $i$, for all 
$t\geq t_{min}$,
\begin{equation}
\label{eq_ingsumbis}
\begin{aligned}
\frac{d}{dt}(z_{A,i}+z_{a,i})
&\leq (2\zeta+1)(c\zeta -  c(z_{A,i}+z_{a,i}))+\frac{p}{2} (2\zeta+1)\\
&\leq c(2\zeta+1)\left(z_{max}-(z_{A,i}+z_{a,i})\right).
\end{aligned}
\end{equation}
This implies that, if $\alpha>0$ is fixed, there exists $t_\alpha\geq t_{min}$ such that $z_{A,i}(t)+z_{a,i}(t)\leq z_{max}+\alpha$ 
for all $i \in \mathcal{I}$ and $t\geq t_{\alpha}$.\\
Finally, we use a proof by contradiction to ensure that the trajectory hits the compact $\mathcal{K}_p$. Let us assume that for any $t\geq t_\alpha$, 
\begin{equation}
\label{eq_absurdebis}
z_{A,1}(t)+z_{a,1}(t)\geq z_{max}-\alpha.
\end{equation}
From~\eqref{diffpop2}, 
and choosing an $\alpha<p/2c$,
we deduce that $z_{A,1}-z_{a,1}$ converges to $0$. In addition with~\eqref{eq_absurdebis}, we find $t'_{\alpha}\geq t_\alpha$ such that for any $t\geq t'_{\alpha}$,
\begin{equation}
\label{eq_quotientbis}
 \frac{z_{A,1}(t)z_{a,1}(t)}{z_{A,1}(t)+z_{a,1}(t)} \geq \frac{1}{4}\left( z_{max}-2\alpha \right). 
\end{equation}
We insert \eqref{eq_quotientbis} in the equation~\eqref{sumpop1} to deduce that, for all $t\geq t'_{\alpha}$,
\begin{equation*}
\begin{aligned}
\frac{d}{dt}&(z_{A,1}+z_{a,1})\\ &\leq  c\left(2\zeta+1\right)(\zeta -  (z_{A,1}+z_{a,1}))
 -\frac{b(\beta-1)+p}{2} \left( z_{max}-2\alpha \right)+\frac{p}{2}  \left(2\zeta+1\right). \\
&\leq c\left(2\zeta+1\right)\left(z_{max}-2\alpha-  (z_{A,1}+z_{a,1})\right) +2\alpha c(2\zeta+1)
 -\frac{b(\beta-1)+p}{2} \left( z_{max}-2\alpha \right).
\end{aligned}
\end{equation*}
The first term of the last line is negative under Assumption~\eqref{eq_absurdebis}, thus, if $\alpha$ is sufficiently small,
\begin{equation}
\begin{aligned}
\frac{d}{dt}(z_{A,1}+z_{a,1})
& \leq  -\frac{1}{2} \left[ (b(\beta-1)+p)z_{max}\right]
+\alpha \left[  b(\beta-1) +\frac{p}{2c}(2\zeta+1)  \right]\\
&\leq  -\frac{1}{4} \left[ (b(\beta-1)+p)z_{max} \right].
\end{aligned}
\end{equation}
This contradicts~\eqref{eq_absurdebis}. Thus, the total population size of the patch $1$ is lower than 
$z_{max}-\alpha$ after a finite time. Moreover, \eqref{eq_ingsumbis} ensures that once the total population size 
of the patch $1$ has reached the threshold 
$z_{max}$, it stays smaller than this threshold. Reasoning similarly for the patch $2$, 
we finally find a finite time such that the trajectory hits the compact $\mathcal{K}_p$ and remains in it afterwards.
This ends the proof of Lemma~\ref{lemma_invariantset}.
\end{proof}

As $\mathcal{D}$ is invariant under the dynamical system \eqref{systdet}, we can
 consider the function $V: \mathcal{D} \to \R$:
\begin{equation}\label{defV} V(\tb{z})= \ln \left(\frac{z_{A,1}+z_{a,1}}{z_{A,1}-z_{a,1}}\right)
 +\ln \left(\frac{z_{a,2}+z_{A,2}}{z_{a,2}-z_{A,2}}\right) . \end{equation}
It characterizes the dynamics of~\eqref{systdet} on $\mathcal{K}_p$. Indeed, as proved in the next lemma, 
V is a Lyapunov function 
if $p$ is sufficiently small.
This will allow us to prove that the solutions to \eqref{systdet} converge to $(\zeta,0,0,\zeta)$ exponentially 
fast as soon as their trajectory hits the set $\mathcal{K}_p$.
Before stating the next lemma, we introduce the positive real number:
\begin{equation}\label{defC1}
C_1:= \frac{1}{2}\left( \frac{2b(\beta-1)+2p}{z_{min}}- \frac{2p}{z_{max}} \right),
\end{equation}
where $z_{min}$ and $z_{max}$ have been defined in \eqref{defzminzmax}.
Then we have the following result:
\begin{lem}
\label{lemma_V}
Assume that $p<p_0$ defined in \eqref{defp0}. Then
$V(\tb{z}(t))$ is non-negative and non-increasing on  
$\mathcal{K}_p$, and satisfies
\begin{equation} \label{majdotV}  \frac{d}{dt}V(\tb{z}(t)) \leq - C_1 (z_{a,1}(t)+z_{A,2}(t)), \quad t \geq 0. \end{equation}
\end{lem}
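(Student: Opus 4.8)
The plan is to compute $\frac{d}{dt}V(\tb{z}(t))$ explicitly along trajectories and then estimate each term on $\mathcal{K}_p$. By Lemma \ref{lemma_invariantset} we may restrict attention to $\mathcal{K}_p$, which is invariant, so the total population sizes $s_i := z_{A,i}+z_{a,i}$ satisfy $s_i\in[z_{min},z_{max}]$ for both $i\in\mathcal I$. First I would differentiate the first summand of $V$: writing $s_1=z_{A,1}+z_{a,1}$ and $\delta_1 = z_{A,1}-z_{a,1}$, we have $\ln(s_1/\delta_1)$, so $\frac{d}{dt}\ln(s_1/\delta_1)=\dot s_1/s_1-\dot\delta_1/\delta_1$. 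For $\dot\delta_1$ I use Equation \eqref{diffpop2}, which gives $\dot\delta_1/\delta_1 = b\beta-d-cs_1$. For $\dot s_1$ I use Equation \eqref{sumpop1}. Subtracting, the terms $b\beta-d-cs_1$ cancel, and one is left with
$$\frac{d}{dt}\ln\!\left(\frac{s_1}{\delta_1}\right) = \frac{1}{s_1}\left(-2(b(\beta-1)+p)\frac{z_{A,1}z_{a,1}}{s_1}+2p\frac{z_{A,2}z_{a,2}}{s_2}\right).$$
An entirely symmetric computation (swapping patches and the roles of $A,a$, and using the analogue of \eqref{diffpop2} for patch 2, for which $z_{a,2}-z_{A,2}>0$) handles the second summand, producing $\frac{1}{s_2}\big(-2(b(\beta-1)+p)\frac{z_{A,2}z_{a,2}}{s_2}+2p\frac{z_{A,1}z_{a,1}}{s_1}\big)$.

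Adding the two contributions, I would collect the cross terms by $\frac{z_{A,1}z_{a,1}}{s_1}$ and $\frac{z_{A,2}z_{a,2}}{s_2}$:
$$\frac{d}{dt}V(\tb{z}(t)) = -\left(\frac{2b(\beta-1)+2p}{s_1}-\frac{2p}{s_2}\right)\frac{z_{A,1}z_{a,1}}{s_1} - \left(\frac{2b(\beta-1)+2p}{s_2}-\frac{2p}{s_1}\right)\frac{z_{A,2}z_{a,2}}{s_2}.$$
Since $s_1,s_2\in[z_{min},z_{max}]$, each coefficient is bounded below by $\frac{2b(\beta-1)+2p}{z_{max}}-\frac{2p}{z_{min}}$; but one can do better and keep the cleaner constant $2C_1 = \frac{2b(\beta-1)+2p}{z_{min}}-\frac{2p}{z_{max}}$ by noting that $z_{A,1}z_{a,1}/s_1 \le z_{a,1}$ (because $z_{A,1}\le s_1$) and similarly $z_{A,2}z_{a,2}/s_2\le z_{A,2}$, so that it suffices to bound the coefficients in whichever direction makes the inequality \eqref{majdotV} come out. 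I would verify $C_1>0$ exactly when $p<p_0$: the condition $\frac{2b(\beta-1)+2p}{z_{min}}>\frac{2p}{z_{max}}$, after clearing denominators and substituting $z_{min}=\frac{b(\beta+1)-2d-p}{2c}$, $z_{max}=\frac{2b\beta-2d+p}{2c}$, reduces to a quadratic inequality in $p$ whose positive root is exactly $p_0$ as defined in \eqref{defp0}; this is a routine but essential computation.

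Finally, non-negativity of $V$ on $\mathcal{D}$ is immediate since each logarithm has argument $\ge 1$ (numerator minus denominator is $2z_{a,1}\ge 0$, resp. $2z_{A,2}\ge 0$), and the monotonicity claim follows because $\frac{d}{dt}V\le 0$ from \eqref{majdotV} once $C_1>0$ and all coordinates are non-negative. The main obstacle is the bookkeeping in the bound: one must be careful that the coefficients multiplying $\frac{z_{A,i}z_{a,i}}{s_i}$ are estimated in the correct direction so that, after using $\frac{z_{A,i}z_{a,i}}{s_i}\le z_{a,1}$ resp. $z_{A,2}$, the constant that survives is precisely $C_1$ and not a weaker one; and the verification that the positivity threshold for $C_1$ is exactly $p_0$ requires matching the quadratic discriminant against \eqref{defp0}.
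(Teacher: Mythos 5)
Your overall strategy is exactly the paper's: differentiate $V$ using \eqref{diffpop2} and \eqref{sumpop1}, obtain
$\frac{d}{dt}V(\tb{z}(t))=-\sum_{i=1,2}\frac{z_{A,i}z_{a,i}}{z_{A,i}+z_{a,i}}\bigl[\frac{2b(\beta-1)+2p}{z_{A,i}+z_{a,i}}-\frac{2p}{z_{A,\bar i}+z_{a,\bar i}}\bigr]$,
and then bound the bracket from below on $\mathcal{K}_p$. Your derivative computation is correct and coincides with \eqref{deriveeV}. However, the concluding estimate as you wrote it does not work, for two reasons of direction. First, the bracket is minimised over $\mathcal{K}_p$ at $s_i=z_{max}$, $s_{\bar i}=z_{min}$, so its guaranteed lower bound is $\frac{2b(\beta-1)+2p}{z_{max}}-\frac{2p}{z_{min}}$ — and it is the positivity of \emph{this} quantity, i.e.\ the condition $(b(\beta-1)+p)z_{min}>p\,z_{max}$, that reduces to the quadratic $2p^2+2pb(\beta-1)-b(\beta-1)[b(\beta+1)-2d]<0$ with positive root $p_0$. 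The "cleaner constant" $\frac{2b(\beta-1)+2p}{z_{min}}-\frac{2p}{z_{max}}$ that you propose to keep is an \emph{upper} bound for the bracket; moreover its positivity is automatic for every $p$ (since $z_{min}\leq z_{max}$ and $b(\beta-1)+p>p$), so the verification you announce ("reduces to a quadratic whose positive root is exactly $p_0$") would not go through for that constant. You cannot "do better" than the true minimum of a coefficient you need to bound from below.

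Second, the inequality $z_{A,1}z_{a,1}/s_1\leq z_{a,1}$ points the wrong way: since the bracket enters with a minus sign, to conclude $\frac{d}{dt}V\leq -C_1(z_{a,1}+z_{A,2})$ you need a \emph{lower} bound on $z_{A,i}z_{a,i}/s_i$ in terms of the minority density. The correct step — and the one the paper uses — is that on $\mathcal{D}$ one has $z_{A,1}>z_{a,1}$, hence $z_{A,1}/s_1\geq 1/2$ and $z_{A,1}z_{a,1}/s_1\geq z_{a,1}/2$ (similarly $z_{A,2}z_{a,2}/s_2\geq z_{A,2}/2$); this factor $1/2$ is precisely the origin of the $\tfrac12$ in the definition \eqref{defC1} of $C_1$. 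With these two corrections your argument becomes the paper's proof. (Incidentally, chasing the directions carefully also shows that in \eqref{defC1} the roles of $z_{min}$ and $z_{max}$ should be interchanged for \eqref{majdotV} to hold with the stated constant; since only positivity of $C_1$ is used afterwards, this is harmless, but it is exactly the bookkeeping you flagged as the "main obstacle" and then resolved in the wrong direction.)
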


\begin{proof}
 For $i\in \mathcal{I}$ and $\tb{z}\in \mathcal{K}_p$, $z_{\alpha_i,i}-z_{\bar\alpha_i,i}\leq z_{\alpha_i,i}+z_{\bar\alpha_i,i}$, where $\alpha_1=A, \alpha_2=a$ and $\bar\alpha_i=\mathcal{A} \setminus \alpha_i$. 
 Thus, $V(\tb{z})\geq 0$.
Now,
\begin{eqnarray} \label{deriveeV}
 \frac{d}{dt}V(\tb{z}(t))
&=& \frac{\dot{z}_{A,1}(t)+\dot{z}_{a,1}(t)}{z_{A,1}(t)+z_{a,1}(t)} 
 -\frac{\dot{z}_{A,1}(t)-\dot{z}_{a,1}(t)}{z_{A,1}(t)-z_{a,1}(t)}
 + \frac{\dot{z}_{A,2}(t)+\dot{z}_{a,2}(t)}{z_{A,2}(t)+z_{a,2}(t)} 
 -\frac{\dot{z}_{a,2}(t)-\dot{z}_{A,2}(t)}{z_{a,2}(t)-z_{A,2}(t)}\nonumber \\
 &=& -\underset{i=1,2}{\sum}
   \frac{z_{A,i}z_{a,i}}{z_{A,i}+z_{a,i}}\left[ \frac{2b(\beta-1)+2p}{z_{A,i}+z_{a,i}}- \frac{2p}{z_{A,\bar i}+z_{a,\bar i}} \right],
\end{eqnarray}
from \eqref{diffpop2} and \eqref{sumpop1}.
Thus, $dV(\tb{z}(t))/dt$ is nonpositive if 
\begin{equation}
\label{cond_p}
\frac{b(\beta-1)}{p} > \max\left\{\frac{z_{A,1}+z_{a,1}}{z_{A,2}+z_{a,2}}-1, \frac{z_{A,2}+z_{a,2}}{z_{A,1}+z_{a,1}}-1\right\}.
\end{equation}
Since $\tb{z}$ belongs to $\mathcal{K}_p$, the r.h.s of~\eqref{cond_p} can be bounded from above by
$$
\frac{z_{max}}{z_{min}}-1= \frac{b(\beta-1)+2p}{b(\beta+1)-2d-p}.
$$
Therefore, the condition~\eqref{cond_p} is satisfied if  
$$\frac{b(\beta-1)}{p}> \frac{b(\beta-1)+2p}{b(\beta+1)-2d-p},$$ 
that is, if 
$$p<\frac{\sqrt{b(\beta-1)[b(3\beta+1)-4d]}-b(\beta-1)}{2}=p_0,$$
and under this condition, 
$$ \frac{2b(\beta-1)+2p}{z_{A,i}+z_{a,i}}- \frac{2p}{z_{A,\bar i}+z_{a,\bar i}}\geq 2C_1, \quad z \in \mathcal{K}_p, \quad i \in \mathcal{I}. $$
Moreover, as the set $\mathcal{D}$ is invariant under the dynamical system \eqref{systdet}, $z_{A,1}$ stays larger that $z_{a,1}$, and 
$$ \frac{z_{A,1}}{z_{A,1}+z_{a,1}}\geq \frac{1}{2}. $$
In the same way,
$$ \frac{z_{a,2}}{z_{A,2}+z_{a,2}}\geq \frac{1}{2}. $$
As a consequence, the first derivative of $V$ satisfies \eqref{majdotV} for every $t \geq 0$.
\end{proof}
We now have all the ingredients to prove Theorem \ref{theoCvceD}.

\subsection{Proof of Theorem \ref{theoCvceD}}

Lemma \ref{lemma_invariantset} states that 
 any solution to~\eqref{systdet} starting from the set 
$ \mathcal{D} $
reaches $\mathcal{K}_p$ after a finite time.
Let us show that because of Lemma~\ref{lemma_V}, any solution to~\eqref{systdet} which starts from $\mathcal{K}_p$ converges 
exponentially fast
to $(\zeta,0,0,\zeta)$ when $t$ tends to infinity. To do this, we need to introduce some positive constants
$$ C_2:= z_{min}^2 e^{-V(\tb{z}(0))} , \quad C_3:= \frac{2}{C_2}z_{max} $$
$$ C_4:= \frac{z_{max}}{2}V(\tb{z}(0)) ,  \quad  C_5:= z(4b\beta-2d+3p) C_4 ,  $$
where we recall that $z_{min}$ and $z_{max}$ have been defined in \eqref{defzminzmax}.

First, we prove that the population density differences $z_{A,1}-z_{a,1}$ and $z_{a,2}-z_{A,2}$ cannot be too small. To do this, we use the decay of the function $V$ 
stated in Lemma \ref{lemma_V}: 
\begin{eqnarray*}
  V(\tb{z}(0))\geq V(\tb{z}(t))&= &\ln \left(\frac{z_{A,1}(t)+z_{a,1}(t)}{z_{A,1}(t)-z_{a,1}(t)}\frac{z_{a,2}(t)+z_{A,2}(t)}{z_{a,2}(t)-z_{A,2}(t)}\right)\\
 & \geq & \ln \left(\frac{z_{min}^2}{(z_{A,1}(t)-z_{a,1}(t))(z_{a,2}(t)-z_{A,2}(t))}\right).
\end{eqnarray*}
This implies that 
\begin{equation} \label{mindiffpop} (z_{A,1}(t)-z_{a,1}(t))(z_{a,2}(t)-z_{A,2}(t))\geq C_2. \end{equation}
Now, from the inequality $\ln x \leq x-1$ for $x \geq 1$ we deduce for $\tb{z}$ in $\mathcal{K}_p$,
\begin{multline} \label{majV}
 V(\tb{z}) \leq  \left(\frac{z_{A,1}+z_{a,1}}{z_{A,1}-z_{a,1}}-1  \right) + \left( \frac{z_{a,2}+z_{A,2}}{z_{a,2}-z_{A,2}}-1 \right)  \\
 = 2\frac{z_{a,1}(z_{a,2}-z_{A,2})+z_{A,2}(z_{A,1}-z_{a,1})}{(z_{A,1}-z_{a,1})(z_{a,2}-z_{A,2})} 
 \leq C_3 (z_{a,1}+z_{A,2}) ,
 \end{multline}
where we have used that $z\in \mathcal{K}_p$ and inequality \eqref{mindiffpop}. Then combining \eqref{majdotV} and \eqref{majV}, we get 
\begin{equation}
  \frac{d}{dt}V(\tb{z}(t)) \leq - \frac{C_1}{C_3}V(\tb{z}(t)) ,
\end{equation}
which implies for every $t \geq 0$:
\begin{equation}
 V(\tb{z}(t)) \leq V(\tb{z}(0))e^{-C_1 t /C_3}.
\end{equation}
Now, from the inequality $\ln x \geq (x-1)/x$ for $x \geq 1$ we deduce for $\tb{z}$ in $\mathcal{K}_p$,
\begin{equation}\label{minV}
\begin{aligned}
 V(\tb{z}) \geq  \left(\frac{z_{A,1}+z_{a,1}}{z_{A,1}-z_{a,1}}-1  \right)\frac{z_{A,1}-z_{a,1}}{z_{A,1}+z_{a,1}} &+ 
 \left( \frac{z_{a,2}+z_{A,2}}{z_{a,2}-z_{A,2}}-1 \right) \frac{z_{a,2}-z_{A,2}}{z_{a,2}+z_{A,2}} \\
= & \frac{2z_{a,1}}{z_{A,1}+z_{a,1}} + \frac{2z_{A,2}}{z_{a,2}+z_{A,2}} \geq \frac{2}{z_{max}}(z_{a,1}+z_{A,2}).
\end{aligned}
\end{equation}
Hence, 
\begin{equation}
 z_{a,1}(t)+z_{A,2}(t) \leq  C_4 e^{-C_1 t /C_3},
\end{equation}
and the exponential convergence of $ z_{a,1}$ and $z_{A,2}$ to $0$ is proved. Let us now focus on the two other variables, $ z_{A,1}$ and $z_{a,2}$.
From the definition of the dynamical system in \eqref{systdet}, and noticing that $| z_{A,1}(t)- \zeta|\leq \zeta$ as $z\in \mathcal{K}_p$, we get
\begin{equation*}
\begin{aligned}
 \frac{d}{dt} \left( z_{A,1}(t)- \zeta \right)^2 =& - 2c z_{A,1}(t) \left( z_{A,1}(t)- \zeta \right)^2  + 2pz_{a,2}(t)( z_{A,1}(t)- \zeta )\frac{z_{A,2}(t)}{z_{A,2}(t)+z_{a,2}}\\
 & - 2z_{a,1}(t)( z_{A,1}(t)- \zeta )\left(c z_{A,1}(t)+(p+b(\beta - 1))\frac{z_{A,1}(t)}{z_{A,1}(t)+z_{a,1}(t)}  \right)\\
 \leq& -c z_{min} \left( z_{A,1}(t)- \zeta \right)^2 + 2p\zeta z_{A,2}(t)+2\zeta z_{a,1}(t)\left(c z_{max}+p+b(\beta - 1) \right)\\
\leq&  -c z_{min} \left( z_{A,1}(t)- \zeta \right)^2 
  + \zeta(4b\beta-2d+3p)( z_{a,1}(t)+ z_{A,2}(t) )\\
\leq&  -c z_{min} \left( z_{A,1}(t)- \zeta \right)^2 
  + C_5 e^{-C_1 t /C_3} .
\end{aligned}
\end{equation*}
Hence, a classical comparison of nonnegative solutions of ordinary differential equations yields
$$ ( z_{A,1}(t)- \zeta )^2 \leq \left(( z_{A,1}(0)- \zeta )^2 - \frac{C_5}{c z_{min}- C_1/C_3}  \right) e^{-c z_{min}t}+ \frac{C_5}{c z_{min}- C_1/C_3}e^{-C_1 t /C_3},  $$
which gives the exponential convergence of $z_{A,1}$ to $\zeta$.
Reasoning similarly for the term $ z_{a,2}$ ends the proof of Theorem \ref{theoCvceD}.

\section{Stochastic process} \label{sectionsto}

In this section, we study properties of the stochastic process $(\tb{N}^K(t),t\geq0)$. We 
derive an approximation for the extinction time of subpopulations under some small initial conditions, and then
combine the results of this section with these on dynamical system (Section \ref{sectiondet}) to prove Theorem \ref{maintheo}.

\subsection{Approximation of the extinction time}
Let us first study the stochastic system $(\tb{Z}^{K}(t), t \geq 0)$ around the equilibrium $(\zeta,0,0,\zeta)$ when $K$ is large. 
The aim is to estimate the time before the loss of all $a$-individuals in the patch $1$ and all $A$-individuals in the patch $2$, which we denote by
\begin{equation}
 T_0^K=\inf\{t\geq 0, Z^K_{a,1}(t)+Z^K_{A,2}(t)=0\}.
 \end{equation}
Recall that $\zeta=({b\beta-d})c^{-1}>0$ and that the sequence of initial states $(\tb{Z}^K(0),K\geq1)$ converges 
in probability when $K$ goes to infinity to a deterministic vector ${\bf{z}^0}=(z_{A,1}^0, z_{a,1}^0, z_{A,2}^0, z_{a,2}^0)\in \R_+^\mathcal{E}$.

 \begin{prop} \label{propexttime}
 There exist two positive constants $\varepsilon_0$ and $C_0$ such that for any $\varepsilon\leq \varepsilon_0$,
 if there exists $\eta\in ]0,1/2[$ such that $\max(|z_{A,1}^0-\zeta|,|z_{a,2}^0-\zeta|) \leq \eps$ and
 $\eta\eps/2 \leq z_{a,1}^0,z_{A,2}^0 \leq \eps/2$, then 
 $$
\begin{aligned}
 &\text{for any } C>(b(\beta-1))^{-1}+C_0\eps, &\P(T_0^K\leq C \log(K)) \underset{K\to +\infty}{\to} 1,\\
 &\text{for any } 0\leq C <(b(\beta-1))^{-1}-C_0\eps, & \P(T_0^K\leq C \log(K)) \underset{K\to +\infty}{\to} 0.
\end{aligned}
$$
\end{prop}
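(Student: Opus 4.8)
The plan is to compare the coordinates $(Z^K_{a,1}, Z^K_{A,2})$ — the "minority" subpopulations that must go extinct — to well-chosen branching processes, using the fact that near the equilibrium $(\zeta,0,0,\zeta)$ the patch-$1$ population is almost entirely of type $A$ and the patch-$2$ population almost entirely of type $a$. First I would invoke Lemma \ref{lemapprox} together with Theorem \ref{theoCvceD}: on a time interval of length $T$ (independent of $K$) the rescaled process $\tb{Z}^K$ stays, with probability tending to $1$, in a small neighbourhood of $(\zeta,0,0,\zeta)$, say with $z_{A,1}+z_{a,1}$ and $z_{A,2}+z_{a,2}$ both in a window around $\zeta$ and $z_{a,1}+z_{A,2}$ small. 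This is the deterministic input that pins down the effective birth and death rates seen by the minority individuals. In such a neighbourhood, an $a$-individual in patch $1$ has birth rate per capita $b(\beta z_{a,1}+z_{A,1})/(z_{a,1}+z_{A,1}) \approx b$ (since $z_{a,1}\ll z_{A,1}\approx\zeta$) and death rate per capita $d+c(z_{a,1}+z_{A,1})\approx d+c\zeta = b\beta$, giving an effective growth rate $\approx b - b\beta = -b(\beta-1)<0$; migration contributes only lower-order terms since the influx $\rho_{2\to1}(\tb{n}) \asymp K(z_{A,2}z_{a,2})/(z_{A,2}+z_{a,2})$ is itself governed by the small quantity $z_{A,2}$.

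The upper bound on $T_0^K$ (extinction is fast): I would bound $N^K_{a,1}+N^K_{A,2}$ from above by a subcritical branching process with immigration, where the individual death rate is at least $b\beta - C\eps$ and the individual birth-plus-immigration rate is at most $b + C\eps$ for a suitable constant, as long as the total population stays in the good neighbourhood. A subcritical branching process with bounded immigration started from $O(\eps K)$ individuals hits $0$ in time $(1+o(1))\frac{\log(\eps K)}{b(\beta-1)}$ with high probability; choosing $C>(b(\beta-1))^{-1}+C_0\eps$ makes $C\log K$ exceed this, and one must also check (again via Lemma \ref{lemapprox}, restarting the comparison on successive intervals of length $T$, or via a Gronwall-type control) that the good neighbourhood is not left before extinction. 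The lower bound (extinction is not too fast): here the assumption $z_{a,1}^0,z_{A,2}^0 \geq \eta\eps/2 K > 0$ is essential. I would bound $N^K_{a,1}$ (say) from below by a subcritical branching process with per-capita death rate at most $b\beta + C\eps$ and per-capita birth rate at least $b - C\eps$, ignoring immigration (which only helps survival); such a process started from $\asymp \eps K$ individuals survives up to time $(1-o(1))\frac{\log(\eps K)}{b(\beta-1)}$ with probability tending to $1$ — this is the classical statement that a subcritical branching process of initial size $n$ typically survives for time $\sim \frac{1}{|{\rm growth\ rate}|}\log n$. For $C<(b(\beta-1))^{-1}-C_0\eps$ this time exceeds $C\log K$, so $T_0^K>C\log K$ with high probability. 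The constant $C_0$ and $\eps_0$ come from absorbing all the $\eps$-errors (in the rates, in the $\log(\eps K)$ versus $\log K$ discrepancy, in the neighbourhood width supplied by Theorem \ref{theoCvceD}).

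The main obstacle, and the place where care is genuinely needed, is that the branching-process comparisons are only valid while the total subpopulation sizes remain in the good neighbourhood of $(\zeta,0,0,\zeta)$, whereas that neighbourhood is only guaranteed by Lemma \ref{lemapprox} over a fixed-length time window, not over a window of length $\sim \log K$. The standard remedy — which I expect the authors use — is a bootstrap: use Theorem \ref{theoCvceD} (the exponential attractivity of the equilibrium, with the Lyapunov function $V$ and the invariant set $\mathcal{K}_p$) to show that once $\tb{Z}^K$ is in $\mathcal{K}_p$ it cannot escape a slightly larger neighbourhood before time $e^{VK}$ with high probability (a large-deviations / comparison argument over $O(\log K)$ successive intervals of length $T$, using that excursions away from $\mathcal{K}_p$ are exponentially unlikely because $V$ is a strict Lyapunov function there), and only then run the branching comparison for $(Z^K_{a,1},Z^K_{A,2})$ on this long time scale. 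Handling the coupling of the two patches (the migration terms cross-couple $a$-in-patch-$1$ with $a$-in-patch-$2$ etc.) requires noting that both $z_{a,1}$ and $z_{A,2}$ are simultaneously small in $\mathcal{D}$, so each provides only a lower-order immigration source for the other; bounding the sum $N^K_{a,1}+N^K_{A,2}$ by a single subcritical branching process with immigration sidesteps the asymmetry cleanly.
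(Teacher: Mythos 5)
Your overall architecture matches the paper's: a first phase in which the majority populations $Z^K_{A,1},Z^K_{a,2}$ are shown to stay near $\zeta$ over the whole $\log K$ time scale (the paper does this by sandwiching each of them between one-dimensional logistic birth--death processes and invoking Freidlin--Wentzell exit-time estimates \`a la Champagnat, rather than by bootstrapping the Lyapunov function of the four-dimensional system, but the spirit is the same), followed by a coupling of the minority populations with subcritical branching processes. Note that Lemma \ref{lemapprox} and Theorem \ref{theoCvceD} are not actually needed here: the hypotheses of the proposition already place the initial condition within $\eps$ of $(\zeta,0,0,\zeta)$, and the deterministic approach phase is handled separately in the proof of Theorem \ref{maintheo}.

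There is, however, a genuine gap in your lower bound. You propose to bound $N^K_{a,1}$ alone from below by a subcritical branching process, ``ignoring immigration (which only helps survival)''. But a valid stochastic lower bound must retain the emigration of $a$-individuals out of patch $1$, which occurs at per-capita rate $p\, n_{A,1}/(n_{A,1}+n_{a,1})\approx p$ near the equilibrium; discarding the incoming migration while keeping this term yields a process with growth rate $\approx -(b(\beta-1)+p)$, hence an extinction-time lower bound of order $\log K/(b(\beta-1)+p)$, which is strictly smaller than the claimed $\log K/(b(\beta-1))-C_0\eps\log K$. The cancellation of $p$ is exactly what the paper's choice of tracking the \emph{sum} $S=N^K_{a,1}+N^K_{A,2}$ achieves: every migration event in either direction has rate $\approx p\,n_{a,1}$ or $\approx p\,n_{A,2}$ (e.g.\ a majority $A$-individual leaving patch $1$ at rate $\rho_{1\to2}\approx p\,n_{a,1}$ \emph{increases} $n_{A,2}$), so migration contributes $+p$ per capita to both the up-rate and the down-rate of $S$ and drops out of the growth rate. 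For the same reason your description of the comparison processes as ``branching processes with immigration'' should be corrected: a subcritical branching process with a state-independent immigration rate of order $\eps K$ does not hit $0$ in time $O(\log K)$, so the upper bound would also fail with that reading; the point is precisely that the ``immigration'' into each minority population is proportional to the other component of $S$, making $S$ comparable to a \emph{pure} linear binary branching process, which is what the paper uses.
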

Remark that the upper bound on $T^K_0$ still holds if $z_{a,1}^0=0$ or $z_{A,2}^0=0$. Moreover, if $z_{a,1}^0=z_{A,2}^0=0$, then the upper bound is satisfied with $C_0= 0$.
In the case where $\eta=0$, the upper bound of the extinction time still holds but not the lower bound. Indeed, as the initial conditions $z_{a,1}^0$ and $z_{A,2}^0$ go to $0$, 
the extinction time is faster.

\begin{proof}The proof relies on several coupling arguments. Our first step is to prove that the population sizes 
$Z^{K}_{A,1}$ and $Z^{K}_{a,2}$ remain close to $\zeta$ on a long time scale. In a second step, we couple the processes 
$Z^{K}_{a,1}$ and $Z^{K}_{A,2}$ with subcritical branching processes whose extinction times are known. 
We begin with introducing some additional notations:  
for any $\gamma, \varepsilon>0$ and $(\alpha,i)\in \mathcal{E}$,
\begin{equation}
 \label{def_R}
 R^{K,\gamma}_{\alpha,i}=\inf \{ t\geq 0, |Z^K_{\alpha,i}(t)-\zeta|\geq \gamma\},
\end{equation}
and  
\begin{equation}
 \label{def_T}
 T^{K,\eps}_{\alpha,i}=\inf\{ t\geq 0, Z^{K}_{\alpha,i}(t)\geq \eps \}.
\end{equation}
\textbf{Step 1: }
The first step consists in proving that as long as the population processes $Z^K_{a,1}$ and $Z^K_{A,2}$ 
have small values, the processes $Z^K_{A,1}$ and $Z^K_{a,2}$ stay close to $\zeta$. 
To this aim, we study the system on the time interval
$$ I_1^{K,\eps}:= \left[0, R^{K,\zeta/2}_{A,1} \wedge R^{K,\zeta/2}_{a,2} \wedge T^{K,\eps}_{a,1} \wedge T^{K,\eps}_{A,2}\right] ,$$
where $a\wedge b$ stands for $\min(a,b)$.\\
Let us first bound the rates of the population process $Z^K_{A,1}$.
\begin{itemize}
 \item We start with the birth rate of $A$-individuals in the patch $1$. Let us remark that as $\beta>1$, 
 the ratio $(\beta x +y)/(x+y)\le \beta$ for any $x,y \in \R_+$. 
 Moreover, the function $x\mapsto (\beta x +y)/(x+y)$ increases with $x$, for any $y\in \R_+$. Combining these observations with the 
 fact that for any $t<T^{K,\varepsilon}_{a,1}\wedge R^{K, \zeta/2}_{A,1}$, 
 $0\le Z^{K}_{a,1}(t)\le \varepsilon$ and $Z^{K}_{A,1}(t)\ge \zeta/2$, we deduce that the birth rate of $A$-individuals 
 in the patch $1$, $K\wt{\lambda}_{A,1}(\tb{Z}^K(t))$, defined in \eqref{deftildeb} can be bounded:
\begin{equation*}
 b\beta \left(\frac{\zeta}{\zeta+2\eps}\right) K Z^K_{A,1}(t) \leq K\wt{\lambda}_{A,1}(\mathbf{Z}^K(t)) 
 \leq b\beta K Z^K_{A,1}(t).
\end{equation*}
 \item The migration rate of $A$-individuals from the patch $2$ to the patch $1$ is sandwiched as follows  
for any $t<T^{K,\varepsilon}_{a,1}\wedge R^{K,\zeta/2}_{A,1}$:
\begin{equation*}
 0 \leq K\wt{\rho}_{2\to1}(\tb{Z}^K(t))\leq Kp\eps.
\end{equation*}
 \item The death rate of $A$-individuals in the patch $1$ and the migration rate from patch $1$ to patch $2$ are bounded as follows:
\begin{equation*}
 (d+c Z^K_{A,1}(t))K Z^K_{A,1}(t) \leq K\wt{d}_{A,1}(\tb{Z}^K(t)) \leq (d+c\eps +cZ^K_{A,1}(t))K Z^K_{A,1}(t),
\end{equation*}
\begin{equation*}
 0 \leq K\wt{\rho}_{1 \to 2}(\tb{Z}^K(t))\leq K p\eps.
\end{equation*}
\end{itemize}
Hence, using an explicit construction of the process $Z^K_{A,1}$ by means of Poisson point measures as in \eqref{def_poisson}, 
we deduce that on the time interval $I_1^{K,\eps}$, $Z^{K}_{A,1}$ is stochastically bounded by 
\begin{equation*}
 \mathcal{Y}^K_{inf} \preccurlyeq Z^K_{A,1} \preccurlyeq \mathcal{Y}^K_{sup},
\end{equation*}
where $\mathcal{Y}^K_{inf}$ is a $\N / K$-valued Markov jump process with transition rates
\begin{equation*}
 \begin{aligned}
  Kb\beta \left( 1-\frac{2\eps}{\zeta+2\varepsilon}\right) \frac{i}{K} &\quad \text{ from } \frac{i}{K} \text{ to } \frac{(i+1)}{K},\\
  K\left(\left(d+c\eps +c \frac{i}{K} \right)\frac{i}{K} +p\eps \right) & \quad \text{ from } \frac{i}{K} \text{ to } \frac{(i-1)}{K},
 \end{aligned}
\end{equation*}
and initial value $Z^K_{A,1}(0)$, and $\mathcal{Y}^K_{sup}$ is a $\N / K$-valued Markov jump process with transition rates
\begin{equation*}
 \begin{aligned}
  K\left(b\beta \frac{i}{K} + p\eps\right) &\quad \text{ from } \frac{i}{K} \text{ to } \frac{(i+1)}{K},\\
  K\left(d +c \frac{i}{K} \right)\frac{i}{K}  & \quad \text{ from } \frac{i}{K} \text{ to } \frac{(i-1)}{K}.
 \end{aligned}
\end{equation*}
and initial value $Z^K_{A,1}(0)$.\\
Let us focus on the process $\mathcal{Y}^K_{inf}$. Using a proof similar to the one of Lemma~\ref{lemapprox}, 
we prove that since the sequence $(\mathcal{Y}^K_{inf}(0),K\geq 1)$ converges in probability to the deterministic value $z_{A,1}^0$, 
$$\underset{K\to +\infty}{\lim} \underset{s\leq t}{\sup} \ |\mathcal{Y}^K_{inf}(s)-\Phi_{inf}(s)|=0\quad\quad a.s$$
for every finite time $t>0$,
where $\Phi_{inf}$ is the solution to
\begin{equation}
\label{eq_Phiinf}
\Phi'(t)=b\beta (1-2\eps/(\zeta+2\varepsilon)) \Phi(t)-p\eps-(d+c\eps+c\Phi(t))\Phi(t)
\end{equation}
with initial value $z_{A,1}^0$. Let us study the trajectory of $\Phi_{inf}$. The polynomial in $\Phi(t)$ on the r.h.s. of~\eqref{eq_Phiinf} has two roots
\begin{equation}
\label{def_phiinf+-}
\begin{aligned}
 \Phi^{\pm}_{inf}&=\frac{1}{2c}\left(b\beta \left(1-\frac{2\eps}{\zeta+2\varepsilon}\right)-d-c\eps \pm \sqrt{\left(b\beta \left(1-\frac{2\eps}{\zeta+2\varepsilon}\right)-d-c\eps\right)^2-4pc\eps} \right)\\
&= \frac{\zeta}{2}-\frac{\eps}{2}  \left(\frac{2b\beta}{(\zeta+2\varepsilon)c}+1\right) \pm \sqrt{\left(\frac{\zeta}{2}- \frac{\eps}{2}  \left(\frac{2b\beta}{(\zeta+2\varepsilon)c}+1\right) \right)^2-\frac{p\eps}{c}}.
\end{aligned}
\end{equation}
As a consequence, $\Phi'>0$ if and only if $\Phi \in ] \Phi^{-}_{inf}, \Phi^{+}_{inf}[$.
Definition \eqref{def_phiinf+-} implies that for small $\eps$,
$$ \Phi^{-}_{inf}\sim pc\eps. $$
Hence, if
$\eps_0$ is chosen sufficiently small and for any $\eps<\eps_0$, 
$$ \Phi^{-}_{inf}\leq 2pc\eps_0 <z^0_{A,1}. $$
Thus, we observe that any solution to~\eqref{eq_Phiinf} with initial condition $\Phi_{inf}(0)\in [2pc\eps_0,+\infty[$ is 
monotonous and converges to $\Phi_{inf}^+$. 
Similarly, we obtain that if $\eps_0$ is sufficiently small, then there exists $M'>0$ such that for any $\eps<\eps_0$, 
$|\Phi_{inf}^+-\zeta|\leq M'\eps$. 
We define the stopping time 
$$
R^{K,M'}_{ \mathcal{Y}^K_{inf}}=\inf \left\{ t\geq 0, \mathcal{Y}^K_{inf} \not\in [\zeta-(M'+1)\eps,\zeta+(M'+1)\eps] \right\}.
$$

As in the proof of Theorem 3/(c) in~\cite{champagnat2006microscopic}, we can construct a family of Markov jump processes 
$\wt{\mathcal{Y}}^K_{inf}$ with transition rates 
that are positive, bounded, Lipschitz and uniformly bounded away from $0$, for which we can find the following estimate 
(Chapter 5 of Freidlin and Wentzell~\cite{freidlin1984random}): there exists $V'>0$ such that,
$$
 \P(R^{K,M'}_{\mathcal{Y}^K_{inf}}> e^{KV'})=\P(R^{K,M'}_{\wt{\mathcal{Y}}^K_{inf}}>e^{KV'})\underset{K\to+\infty}{\to}1.
$$
We can deal with the process $\mathcal{Y}^K_{sup}$ similarly and find $M''>0$ and $V''>0$ such that
$$
\P(R^{K,M''}_{\mathcal{Y}^K_{sup}}> e^{KV''})\underset{K\to+\infty}{\to} 1,
$$
with 
$$R^{K,M''}_{\mathcal{Y}^K_{sup}}=\inf\Big\{t\geq 0, \mathcal{Y}^K_{sup}(t) \not\in [\zeta-(M''+1)\eps, 
\zeta+(M''+1)\eps]\Big\}.$$
Finally, for $M_1=M' \vee M''$ and $V_1=V' \wedge V''$, we deduce that 
$
\P(R^{K,M_1}_{\mathcal{Y}^K_{inf}} \wedge R^{K,M_1}_{\mathcal{Y}^K_{sup}}>e^{KV_1})\underset{K\to+\infty}{\to}1.
$
Moreover, if 
$R^{K,(M_1+1)\eps}_{A,1} \leq R^{K,\zeta/2}_{A,1} \wedge R^{K,\zeta/2}_{a,2} \wedge T^{K,\eps}_{a,1} \wedge T^{K,\eps}_{A,2},$ then
$$
R^{K,(M_1+1)\eps}_{A,1}\geq R^K_{\mathcal{Y}^K_{inf}} \wedge R^K_{\mathcal{Y}^K_{sup}}. 
$$ Thus
\begin{equation}
 \label{eq_gdedev}
\P(R^{K,\zeta/2}_{A,1} \wedge R^{K,\zeta/2}_{a,2} \wedge T^{K,\eps}_{a,1} \wedge T^{K,\eps}_{A,2} \wedge e^{KV_1} > R^{K,(M_1+1)\eps}_{A,1})\underset{K\to+\infty}{\to}0.
\end{equation}

\noindent
Using symmetrical arguments for the population process $Z^K_{a,2}$, we find $M_2>0$ and $V_2>0$ such that
\begin{equation}
 \label{eq_gdedev2}
\P(R^{K,\zeta/2}_{A,1} \wedge R^{K,\zeta/2}_{a,2} \wedge T^{K,\eps}_{a,1} \wedge T^{K,\eps}_{A,2} \wedge e^{KV_2} > R^{K,(M_2+1)\eps}_{a,2})\underset{K\to+\infty}{\to}0.
\end{equation}
Finally, we set $M=M_1 \vee M_2$ and $V=V_1 \wedge V_2$. Limits \eqref{eq_gdedev} and \eqref{eq_gdedev2} are still true with $M$ and $V$. Thus we have proved that, as long as the size of the $a$-population in Patch $1$ and the size of the $A$-population in Patch $2$ are small and as long as the time is smaller than $e^{KV}$, the processes $Z^{K}_{A,1}$ and $Z^K_{a,2}$ stay close to $\zeta$, i.e. they belong to $[\zeta-(M+1)\eps, \zeta+(M+1\eps)]$.\\

\noindent
Note that if $\eps_0$ is sufficiently small, $R^{K,(M+1)\eps}_{A,1}\leq R^{K,\zeta/2}_{A,1}$ and $R^{K,(M+1)\eps}_{a,2}\leq R^{K,\zeta/2}_{a,2}$ a.s. for all $\eps<\eps_0$. 
So we reduce our study to the time interval 
$$ I_2^{K,\eps}:= \left[0, R^{K,(M+1)\eps}_{A,1} \wedge R^{K,(M+1)\eps}_{a,2} \wedge T^{K,\eps}_{a,1} \wedge T^{K,\eps}_{A,2} \right].$$
\medskip\\
\textbf{Step 2:} In the sequel we study the extinction time of the stochastic processes $(Z^K_{a,1}(t),t\geq0)$ and 
$(Z^K_{A,2}(t),t\geq0)$. We recall that there exists $\eta\in ]0,1/2[$ such that 
 $\eta\eps/2 \leq z_{a,1}^0,z_{A,2}^0 \leq \eps/2$.
Bounding the birth and death rates of $(Z^K_{a,1}(t),t\geq0)$ and $(Z^K_{A,2}(t),t\geq0)$ as previously, 
we deduce that the sum $(Z^K_{a,1}(t)+Z^K_{A,2}(t),t\geq0)$ is stochastically bounded as follows, on the time 
interval $I_2^{K,\eps}$:
$$
\frac{\mathcal{N}^K_{inf}}{K} \preccurlyeq Z^K_{a,1}+Z^K_{A,2} \preccurlyeq \frac{\mathcal{N}^K_{sup}}{K}.
$$
where $\mathcal{N}^K_{inf}$ is a $\N$-valued binary branching process with birth rate
$b+p \frac{\zeta-(M+1)\eps}{\zeta-M\eps}$,
death rate 
$d  +c\zeta +c(M+2)\eps +p $ and initial state $\lfloor \eta \eps K \rfloor $, and
$\mathcal{N}^K_{sup}$ is a $\N$-valued binary branching process with birth rate
$$b\frac{\zeta+\eps(\beta-M-1)}{\zeta-M\eps}+p,$$
death rate 
$$d  +c\zeta -c(M+1)\eps +p \frac{\zeta-(M+1)\eps}{\zeta-M\eps} ,$$ and initial state $\lfloor \eps K \rfloor +1$. \\

\noindent
It remains to estimate the extinction time for a binary branching process $(\mathcal{N}_t,t\geq 0)$ with a birth rate 
$B$ and a death rate $D> B$. Applying \eqref{ext_times} with $i=\lfloor \eta \eps K \rfloor$, we get:
$$
\begin{aligned}
\forall C<(D-B)^{-1}, \quad  &  \P(S_{0}^{\mathcal{N}} \leq C \log(K))\underset{K\to +\infty}{\to} 0,\\
\forall C>(D-B)^{-1}, \quad &  \P(S_{0}^{\mathcal{N}} \leq C \log(K))\underset{K\to +\infty}{\to} 1.
\end{aligned}
$$
Moreover, if 
$$S_{\lfloor \eps K \rfloor}^{\mathcal{N}}:= \inf \{t >0, \mathcal{N}(t)\geq \lfloor \eps K \rfloor\} , $$ 
then
\begin{equation}
\label{eq_extinction}
 \P\left(S_{0}^{\mathcal{N}}>K \wedge S_{\lfloor \eps K \rfloor}^{\mathcal{N}}\right)\underset{K \to +\infty}{\to} 0
\end{equation}
(cf. Theorem 4 in~\cite{champagnat2006microscopic}). Thus
\begin{equation*}
\begin{aligned}
\P&(T^K_0<C \log(K))-\P\left(S_{0}^{\mathcal{N}^K_{inf}}<C \log(K)\right)\\
& \leq \P\left(T^K_0>T^{K,\eps}_{a,1}\wedge T^{K,\eps}_{A,2}\wedge K\right)
+ \P\left(T^{K,\eps}_{a,1}\wedge T^{K,\eps}_{A,2}\wedge K> R^{K,(M+1)\eps}_{A,1} \wedge R^{K,(M+1)\eps}_{a,2}\right)\\
&\leq \P\left(S_{0}^{\mathcal{N}^K_{sup}}>S_{\lfloor \eps K \rfloor}^{\mathcal{N}^K_{sup}}\wedge K\right)
+ \P\left(T^{K,\eps}_{a,1}\wedge T^{K,\eps}_{A,2}\wedge K> R^{K,(M+1)\eps}_{A,1} \wedge R^{K,(M+1)\eps}_{a,2}\right).
\end{aligned}
\end{equation*}

The last term of the last line converges to $0$ when $K$ tends to $0$ according to~\eqref{eq_gdedev} and \eqref{eq_gdedev2}. The first one also tends to $0$ according to~\eqref{eq_extinction}. Thus,
\begin{equation*}
 \lim_{K\to+\infty}\P\left(T^K_0<C \log(K)\right) \leq \lim_{K\to +\infty}\P\left(S_{0}^{\mathcal{N}^K_{inf}}<C \log(K)\right).
\end{equation*}
We prove similarly that
\begin{equation*}
 \lim_{K\to+\infty}\P\left(T^K_0<C \log(K)\right) \geq \lim_{K\to +\infty}\P\left(S_{0}^{\mathcal{N}^K_{sup}}<C \log(K)\right).
\end{equation*}
We conclude the proof by noticing that the growth rates of the processes $\mathcal{N}^K_{inf}$ and $\mathcal{N}^K_{sup}$ 
are equal to $-b(\beta-1)$ up to a constant times $\eps$.
\end{proof}

\subsection{Proof of Theorem \ref{maintheo}}

We can now prove our main result:

Let $\eps$ be a small positive number. 
Applying Lemma \ref{lemapprox} and Theorem \ref{thm_systdet} we get the existence of a positive real number 
$s_\eps$ such that 
$$\lim_{K \to \infty}  \P\left( \|\tb{N}^K(s_\eps) - (\zeta K,0,0,\zeta K)\| \leq \eps K/2 \right)=1 . $$
Using Proposition \ref{propexttime} and the Markov property yield that there exists $C_0>0$ such that 
 $$ \lim_{K \to \infty}\P \left( \left| \frac{T^K_{ \mathcal{B}_{\eps}}}{\log K}-\frac{1}{b(\beta-1)} \right|\leq C_0\eps  \right)= 1,$$
where by definition, we recall that $T^K_{ \mathcal{B}_{\eps}}$ is the hitting time of $ \mathcal{B}_{\eps}$. 
Moreover, the migration rates are equal to zero for any $t\geq T^K_{ \mathcal{B}_{\eps}}$, so $$Z^K_{a,1}(t)=Z^K_{A,2}(t)=0, \ \text{ for any } \ 
t\geq T^K_{ \mathcal{B}_\eps}.$$
After the time $T^K_{ \mathcal{B}_\eps}$, the $A$-population in the patch 1 and the $a$-population in the patch 2 evolve independently from each other according to two logistic birth and death processes with birth rate $b\beta$, death rate $d$ and competition rate $c$. Using Theorem 3(c) in Champagnat~\cite{champagnat2006microscopic}, we deduce that for any $m>1$, there exists $V>0$ such that
$$  \inf_{X \in \mathcal{B}_\eps} \P_{X}(T^K_{\mathcal{B}_{m\eps}} \geq e^{KV}) \underset{K\to+\infty}{\to } 1,$$
which ends the proof.

\section{Influence of the migration parameter $p$: mumerical simulations} \label{sectionillustr}
In this section, we present some simulations of the deterministic dynamical system \eqref{systdet}. 
We are concerned with the influence of the migration rate $p$ on the time to reach a neighbourhood of 
the equilibrium~\eqref{eq-z00z}.
Note that $p$ has no impact on the corresponding relaxation time for the stochastic system, because extinction of the minorities happens on a longer time scale. \\
For any value of $p$, we evaluate the first time $T_{\eps}(p)$ such that the solution $(z_{A,1}(t),z_{a,1}(t),
 \linebreak[4]z_{A,2}(t),z_{a,2}(t))$ to~\eqref{systdet} belongs to the set 
$$
\mathcal{S_\eps}=\left\{(z_{A,1},z_{a,1},z_{A,2},z_{a,2})\in  \R_+^4, (z_{A,1}-\zeta)^2+z_{a,1}^2+z_{A,2}^2+(z_{a,2}-\zeta)^2\leq \eps^2 \right\},
$$
which corresponds to the first time the solution enters an $\varepsilon-$neighbourhood of $(\zeta,0,0,\zeta)$.

In the following simulations, the demographic parameters are given by:
$$\beta=2, \qquad b=2, \qquad d=1 \qquad \text{and} \qquad c=0.1.$$
For these parameters, 
$$\zeta =30 \qquad \text{and} \qquad p_0=\sqrt{5}-1\simeq 1.24.
$$
The migration rate as well as the initial condition vary. \\

\noindent
\textbf{Description of the figures:}
Figure~\ref{fig_tpsrelatif} presents the plots of $p \mapsto T_{\eps}(p)-T_{\eps}(0)$. The simulations are computed with $\eps=0.01$ and with initial conditions  
$(z_{A,1}(0),z_{a,1}(0),z_{A,2}(0),z_{a,2}(0))$ such that $z_{a,1}(0)=z_{A,1}(0)-0.1$ with $z_{A,1}(0)\in\{0.3,0.5,1,2,3,5,10,15\}$ 
and $(z_{A,2}(0),z_{a,2}(0))\in \{(1,30),(15,16)\}$.
Figure~\ref{fig_trajectories} presents the trajectories of some solutions to the dynamical system~\eqref{systdet} in the two 
phase planes which represent the two patches. We use the same parameters as in Figure~\ref{fig_tpsrelatif} and the initial conditions 
are given in the captions.  For each initial condition, we plot the trajectories for three different values of $p$: $ 0,1$ and $20$.\\

\noindent
\textbf{Conjecture}:
First of all, we observe that for all values under consideration, the time $T_{\eps}(p)$ to reach the set 
$\mathcal{S}_{\eps}$ is finite even if $p>p_0$. Therefore, we make the following conjecture:

\begin{conj}
For any initial condition $(z_{1,A}(0),z_{1,a}(0),z_{2,A}(0),z_{2,a}(0))\in \mathcal{D}$, where $\mathcal{D}$ is defined by~\eqref{defdelta},
\begin{equation*}
(z_{1,A}(t),z_{1,a}(t),z_{2,A}(t),z_{2,a}(t)) \underset{t\to +\infty}{\longrightarrow} (\zeta,0,0,\zeta).
\end{equation*}
\end{conj}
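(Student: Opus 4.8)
The plan is to prove the conjecture for \emph{every} $p>0$ (with $\beta>1$) by a Lyapunov-type argument that is lighter than the one behind Theorem~\ref{theoCvceD}. Instead of the function $V$ of \eqref{defV}, which was designed to yield an explicit exponential rate at the price of the restriction $p<p_0$ (condition \eqref{cond_p}), I would use the simpler quantity
$$ W(\tb{z}) := -\ln\big(z_{A,1}-z_{a,1}\big) - \ln\big(z_{a,2}-z_{A,2}\big), \qquad \tb{z}\in\mathcal{D}, $$
with $\mathcal{D}$ as in \eqref{defdelta}. By \eqref{diffpop2} and its analogue for patch~$2$ one has $\tfrac{d}{dt}\ln(z_{A,1}-z_{a,1})=b\beta-d-c(z_{A,1}+z_{a,1})$ and $\tfrac{d}{dt}\ln(z_{a,2}-z_{A,2})=b\beta-d-c(z_{A,2}+z_{a,2})$, so, writing $n=\sum_{(\alpha,i)}z_{\alpha,i}$ and recalling $\zeta=(b\beta-d)/c$,
$$ \frac{d}{dt}W(\tb{z}(t)) = c\big(n(t)-2\zeta\big), $$
an identity valid for all parameter values. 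On the other hand the estimate of Lemma~\ref{lemma_invariantset} gives $\dot n \le n(b\beta-d-\tfrac{c}{2}n)=\tfrac{c}{2}n(2\zeta-n)$; hence $\{n\le 2\zeta\}$ is forward invariant, and if $n(0)>2\zeta$ then $n$ is strictly decreasing with $n(t)\downarrow 2\zeta$ (indeed $n(t)-2\zeta\le(n(0)-2\zeta)e^{-c\zeta t}$). In all cases there is $\tau\in[0,\infty]$ such that $n(t)\le 2\zeta$ for $t\ge\tau$, or else $\tau=\infty$ and $n(t)\downarrow 2\zeta$.

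Next I would check that the orbit eventually lies in a compact subset of $\mathcal{D}$ on which $W$ is smooth, i.e.\ that it stays away from the singular set $\partial\mathcal{D}=\{z_{A,1}=z_{a,1}\}\cup\{z_{a,2}=z_{A,2}\}$ of $W$. Since $z_{A,1}-z_{a,1}\le n$, $z_{a,2}-z_{A,2}\le n$ and $n$ is bounded, $W$ is bounded below. For an upper bound: if $\tau<\infty$ then $\dot W=c(n-2\zeta)\le 0$ on $[\tau,\infty)$, so $W(\tb{z}(t))\le W(\tb{z}(\tau))<\infty$ there; if $\tau=\infty$ then the exponential decay of $n-2\zeta$ makes $\int_0^\infty (n-2\zeta)<\infty$, so $W$ stays bounded above although now increasing. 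In either case $W$ is bounded along the orbit, hence $(z_{A,1}-z_{a,1})(z_{a,2}-z_{A,2})=e^{-W}$ is bounded away from $0$; combined with each factor being $\le n$, this forces $z_{A,1}(t)-z_{a,1}(t)$ and $z_{a,2}(t)-z_{A,2}(t)$ to remain bounded away from $0$, as needed.

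Then I would run the LaSalle invariance principle \cite{lasalle1960some,chicone2006ode}. Because $W$ is bounded and, on $[\tau,\infty)$, $\dot W=c(n-2\zeta)$ has a constant sign (nonpositive if $\tau<\infty$, nonnegative if $\tau=\infty$), the function $W(\tb{z}(t))$ converges as $t\to\infty$; therefore $W$ is constant on the $\omega$-limit set $\Omega$, which by the previous paragraph is a nonempty compact invariant subset of $\mathcal{D}$. Along any trajectory contained in $\Omega$, $W$ is then constant, so $\dot W\equiv 0$, i.e.\ $n\equiv 2\zeta$ on $\Omega$. It remains to identify the largest invariant subset of $\{n=2\zeta\}\cap\overline{\mathcal{D}}$: there $\dot n\equiv 0$, and plugging $n\equiv 2\zeta$ and $(z_{A,1}+z_{a,1})+(z_{A,2}+z_{a,2})=2\zeta$ into the expression for $\dot n$ recalled in the proof of Lemma~\ref{lemma_invariantset} gives, after an elementary computation,
$$ c\,(z_{A,1}+z_{a,1})(z_{A,2}+z_{a,2}) = c\zeta^2 + b(\beta-1)\Big(\tfrac{z_{A,1}z_{a,1}}{z_{A,1}+z_{a,1}}+\tfrac{z_{A,2}z_{a,2}}{z_{A,2}+z_{a,2}}\Big). $$
Since the left-hand side is $\le c\zeta^2$ by the arithmetic--geometric mean inequality while the right-hand side is $\ge c\zeta^2$, both sides equal $c\zeta^2$, which forces $z_{A,1}+z_{a,1}=z_{A,2}+z_{a,2}=\zeta$ and $z_{A,1}z_{a,1}=z_{A,2}z_{a,2}=0$; in $\overline{\mathcal{D}}$, where $z_{A,1}\ge z_{a,1}$ and $z_{a,2}\ge z_{A,2}$, this means $z_{a,1}=z_{A,2}=0$, hence the point is $(\zeta,0,0,\zeta)$. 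Thus $\Omega=\{(\zeta,0,0,\zeta)\}$, which is the assertion.

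The argument uses only the elementary identity $\dot W=c(n-2\zeta)$ together with $\limsup_t n(t)\le 2\zeta$, so it is insensitive to $p$ — whereas the hypothesis $p<p_0$ in Theorem~\ref{theoCvceD} serves to make \eqref{cond_p} hold and thereby buys the exponential speed used in Theorem~\ref{maintheo}, which the present route does not recover. I expect the only genuinely delicate point to be the second paragraph: one must rule out the orbit approaching the set where $W$ is singular, and this is precisely what the one-sided bound $\dot W\le 0$ (valid as soon as $n\le 2\zeta$) provides once it is fed back into the definition of $W$.
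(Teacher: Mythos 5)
You are proving a statement that the paper itself leaves as a conjecture supported only by simulations: the rigorous result of record, Theorem~\ref{theoCvceD}, covers only $p<p_0$, so there is no proof in the paper to compare against and your argument must be judged on its own. It is genuinely different from the mechanism behind Theorem~\ref{theoCvceD}, and as far as I can check it is sound. The paper's Lyapunov function \eqref{defV} is exactly your $W$ plus $\ln\bigl((z_{A,1}+z_{a,1})(z_{A,2}+z_{a,2})\bigr)$; differentiating those extra logarithms produces the bracket in \eqref{deriveeV}, whose sign requires the ratio condition \eqref{cond_p} and hence both the confinement to $\mathcal{K}_p$ and the restriction $p<p_0$. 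Dropping them trades sign-definiteness for the exact identity $\frac{d}{dt}W=c(n-2\zeta)$, in which every $p$-dependent term cancels (migration is conservative for $n$ and symmetric for the differences, cf.\ \eqref{diffpop2} and \eqref{sumpop1}); the universal inequality $\dot n\le \tfrac{c}{2}n(2\zeta-n)$ from the proof of Lemma~\ref{lemma_invariantset} then makes $(n-2\zeta)_+$ integrable, so $W$ is bounded above and below along the orbit, the differences $z_{A,1}-z_{a,1}$ and $z_{a,2}-z_{A,2}$ stay uniformly away from $0$, and $W(\tb{z}(t))$ converges. I have re-derived your key identity on any invariant subset of $\{n=2\zeta\}$, namely $c(z_{A,1}+z_{a,1})(z_{A,2}+z_{a,2})=c\zeta^2+b(\beta-1)\bigl(\tfrac{z_{A,1}z_{a,1}}{z_{A,1}+z_{a,1}}+\tfrac{z_{A,2}z_{a,2}}{z_{A,2}+z_{a,2}}\bigr)$, and the AM--GM pinching does force the $\omega$-limit set to be $\{(\zeta,0,0,\zeta)\}$; reassuringly, the argument uses $\beta>1$ exactly where it must, degenerating for $\beta=1$ to the line $\mathcal{L}$ as Proposition~\ref{prop1} requires. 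What you lose relative to Theorem~\ref{theoCvceD} is the explicit exponential rate, which is precisely what the proof of Theorem~\ref{maintheo} consumes, so your route would settle the conjecture without by itself extending the stochastic result to $p\ge p_0$.

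Two points deserve a careful write-up. First, the convergence of $W(\tb{z}(t))$ in the ``mixed'' case where $n(0)>2\zeta$ but $n$ crosses $2\zeta$ in finite time: there $W$ first increases and then decreases, so argue via $W(t)\le W(0)+c\int_0^\infty(n-2\zeta)_+\,ds<\infty$ together with the lower bound $W\ge -2\ln\bigl(\sup_t n(t)\bigr)$ (each factor is at most $n$), which gives both boundedness and, combined with eventual monotonicity, convergence; the phrase ``$n$ is strictly decreasing with $n(t)\downarrow 2\zeta$'' is only accurate while $n$ remains above $2\zeta$. Second, for the invariance of the $\omega$-limit set you need the orbit closure to sit in a compact set on which the vector field is Lipschitz, i.e.\ where $z_{A,1}+z_{a,1}$ and $z_{A,2}+z_{a,2}$ are bounded away from $0$; this follows from your lower bound on the differences since $z_{A,1}+z_{a,1}\ge z_{A,1}-z_{a,1}$, but it should be said explicitly, as the target $(\zeta,0,0,\zeta)$ lies on the boundary of the positive orthant where the rational terms of \eqref{systdet} are only well defined because the within-patch totals stay positive.
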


\begin{figure} [!h]
\begin{minipage}{0.49\textwidth}
 \begin{center}
  \includegraphics[width=0.99\textwidth]{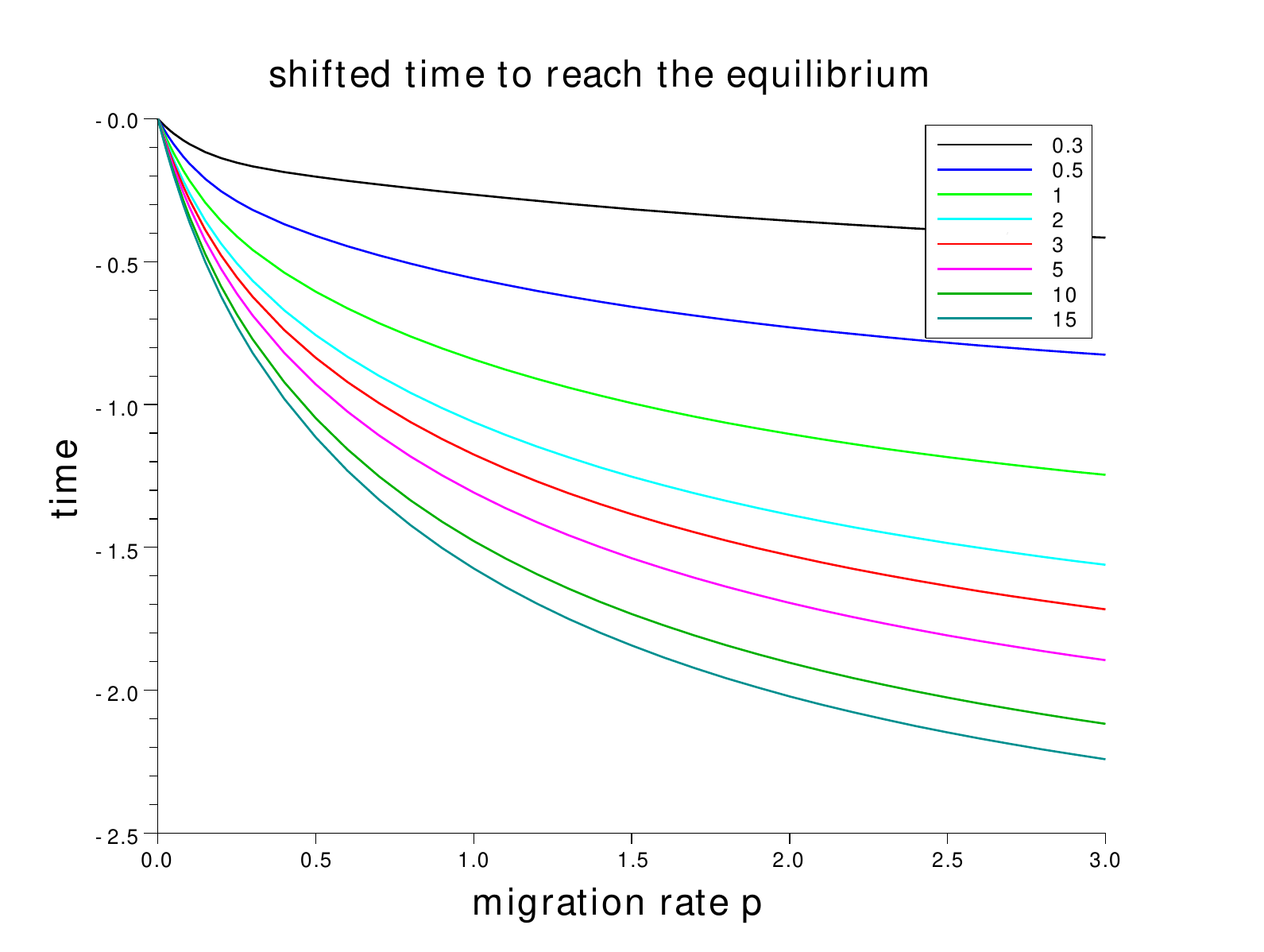}\\
\small{ (a) $(z_{A,2}(0),z_{a,2}(0))=(1,30)$}
 \end{center}
\end{minipage}
\begin{minipage}{0.49\textwidth}
 \begin{center}
\includegraphics[width=0.99\textwidth]{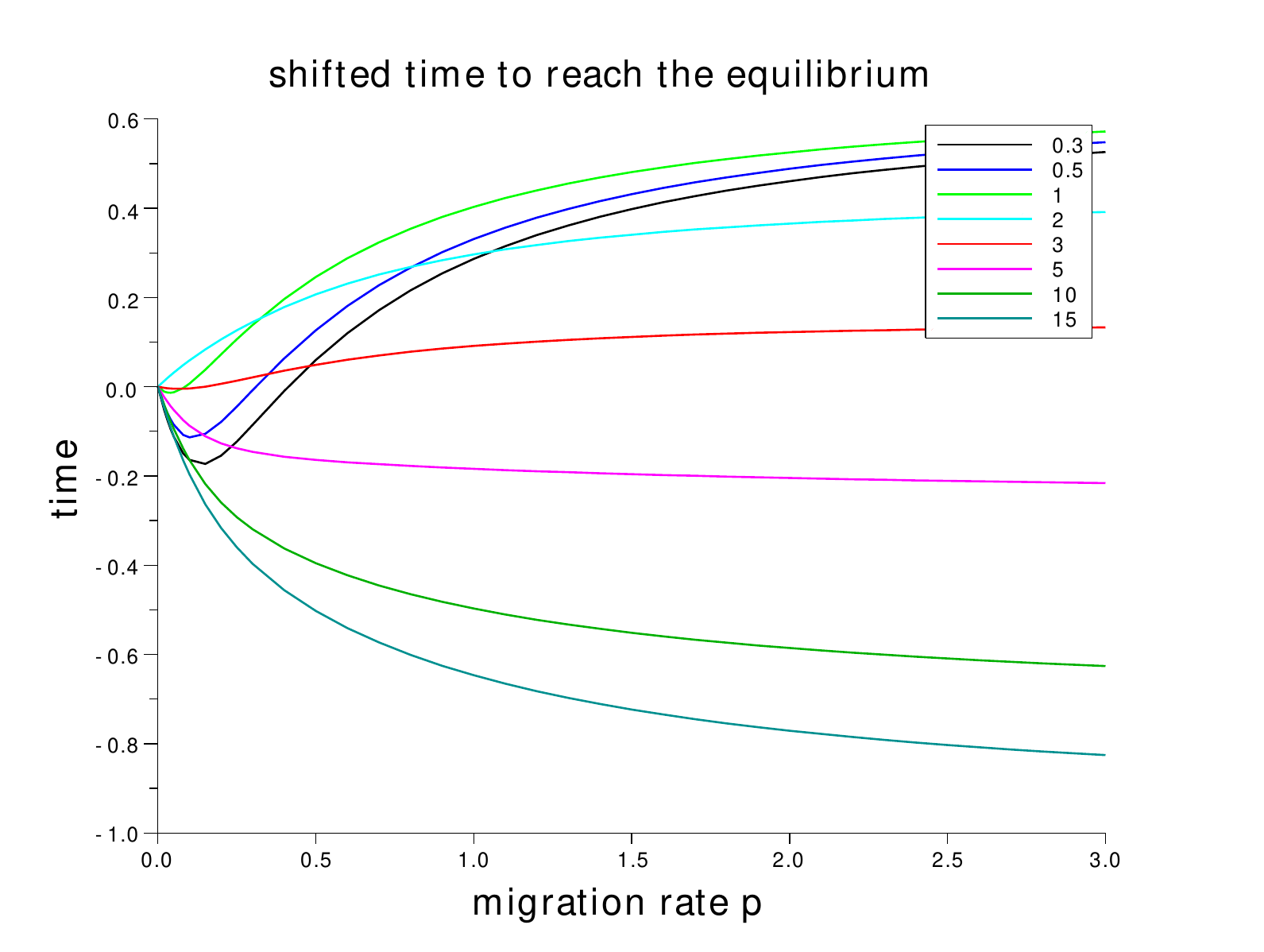}\\
\small{ (b) $(z_{A,2}(0),z_{a,2}(0))=(15,16)$}
 \end{center}
\end{minipage}
\caption{\label{fig_tpsrelatif} \small{{For different values of the initial condition}, we plot
$p \mapsto T_{\eps}(p)-T_{\eps}(0)$. 
The initial condition is $(z_{A,1}(0),z_{A,1}(0)-0.1,z_{A,2}(0),z_{a,2}(0))$ where $z_{A,1}(0)\in\{0.3,0.5,1,2,3,5,10,15\}$ as represented by the colors of the legend;
 and $(z_{A,2}(0),z_{a,2}(0))=(1,30)$ on the left, and $(z_{A,2}(0),z_{a,2}(0))=(15,16)$ on the right. }}
\end{figure}

\noindent 
\textbf{Influence of $p$ when the initial condition in patch $2$ is close to the equilibrium}:
Figure~\ref{fig_tpsrelatif}(a) presents the results for $(z_{A,2}(0),z_{a,2}(0))=(1,30)$, that is if the initial condition in the patch $2$ is close to its equilibrium 
(recall that $\zeta=30$ with the parameters under study). Observe that for any value of $(z_{A,1}(0),z_{a,1}(0)=z_{A,1}(0)-0.1)$, the time for reproductive isolation to occur is reduced when the migration rate is large. Hence, the migration rate seems here to strengthen the homogamy.
This is confirmed by Figure~\ref{fig_trajectories}(a) and (b) where examples of trajectories with the same initial conditions as in Figure~\ref{fig_tpsrelatif}(a) are drawn.
The two Figures~\ref{fig_trajectories}(a) and (b) present similar behaviours: when $p$ increases, the number of $a$-individuals in patch $1$ decreases at any time 
whereas the number and the proportion of $a$-individuals in patch $2$ remain almost constant. 
These behaviours derive from two phenomena.  
On the one hand, the $a$-individuals are able to leave patch $1$ faster when $p$ is large. 
On the other hand, the value of $p$ does not affect the migration outside patch $2$ which is almost zero in view of the 
small proportion of $A$-individuals in the patch $2$.\\

\noindent
\textbf{Influence of $p$ when $a$- and $A$-population sizes are initially similar in patch $2$}:
 On Figure~\ref{fig_tpsrelatif}(b) we 
are interested in the case where the $A$- and 
 $a$- initial populations in patch $2$ have a similar size and the sum $z_{A,2}(0)+z_{a,2}(0)$  is 
close to $\zeta$. 
 Observe that for $z_{A,1}(0)\in \{5,10,15\}$, the time $T_{\eps}(p)$ decreases with respect to $p$ 
but not as fast as previously.
By plotting some trajectories when $z_{A,1}(0)=10$ on Figure~\ref{fig_trajectories}(c), we note that the dynamics is not the same as for the previous case (Fig. ~\ref{fig_trajectories}(a)). Here, a large migration rate affects the migration outside the two patches in such a way that the equilibrium is reached faster.\\
Finally, Figure~\ref{fig_tpsrelatif}(b) also presents behaviours that are essentially different for $z_{A,1}(0)\in\{0.3,0.5,1,2,3\}$. 
In these cases, the migration rate does not strengthen the homogamy. 
We plot some trajectories from this latter case in Figure~\ref{fig_trajectories}(d) where $z_{A,1}(0)=1$.
Observe that a high value of $p$ favors the migration outside patch $2$ for the two types $a$ and $A$ since the proportions of the two alleles in patch $2$ are almost equal at time $t=0$. This is not the case in the patch $1$ where the value of $p$ does not affect significantly the initial migration outside patch $1$ since the population sizes are smaller. Hence, patch $1$ is filled by the individuals that 
flee patch $2$ where the migration rate is high. Therefore, both $a$- and $A$- populations increase at first, but the $A$-individuals remain dominant in patch $1$ and thus the $a$-population is disadvantaged. Finally, the $a$-individuals that flee the patch $2$, 
find a less favorable environment in patch $1$ and therefore the time needed to reach the equilibrium is extended because of the dynamics in patch $1$.\\

\begin{figure}[h!]
 \begin{minipage}{0.49\textwidth}
  
\begin{center}
\begin{minipage}{0.49\textwidth}
  \includegraphics[width=1.18\textwidth]{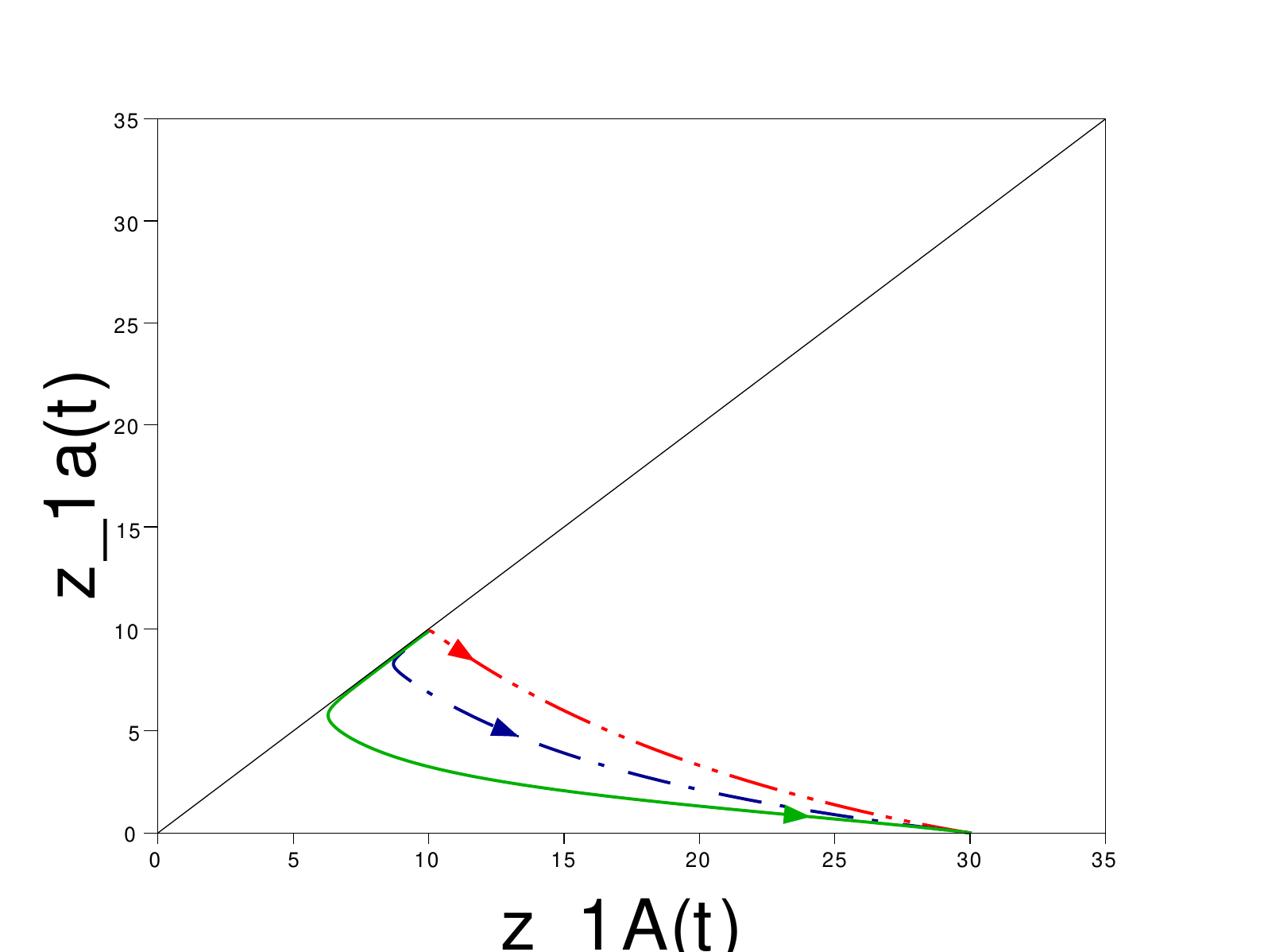}
 \end{minipage}
\begin{minipage}{0.49\textwidth}
  \includegraphics[width=1.18\textwidth]{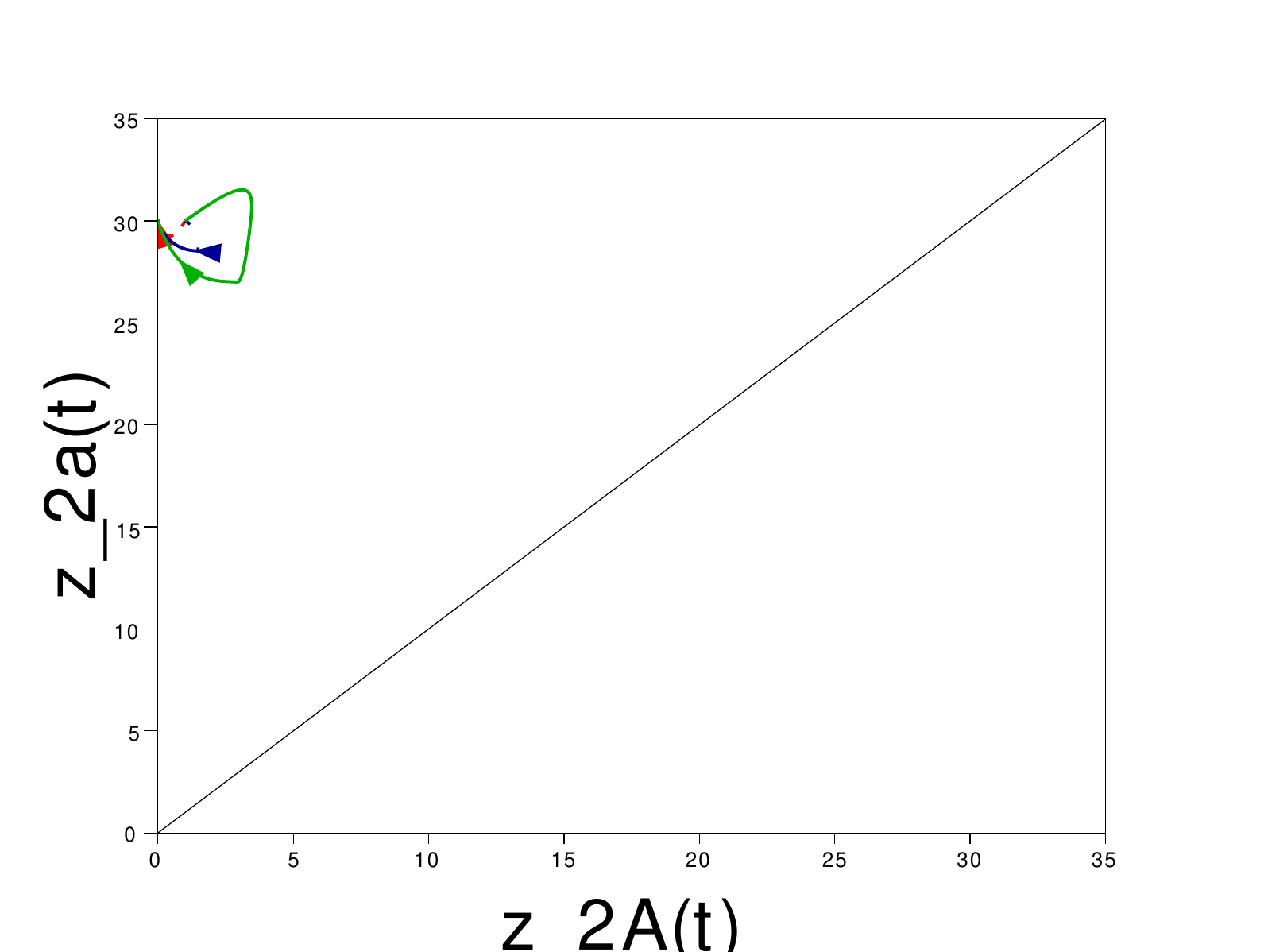}
 \end{minipage}
\small{(a) (10,9.9,1,30)}

  \begin{minipage}{0.49\textwidth}
  \includegraphics[width=1.18\textwidth]{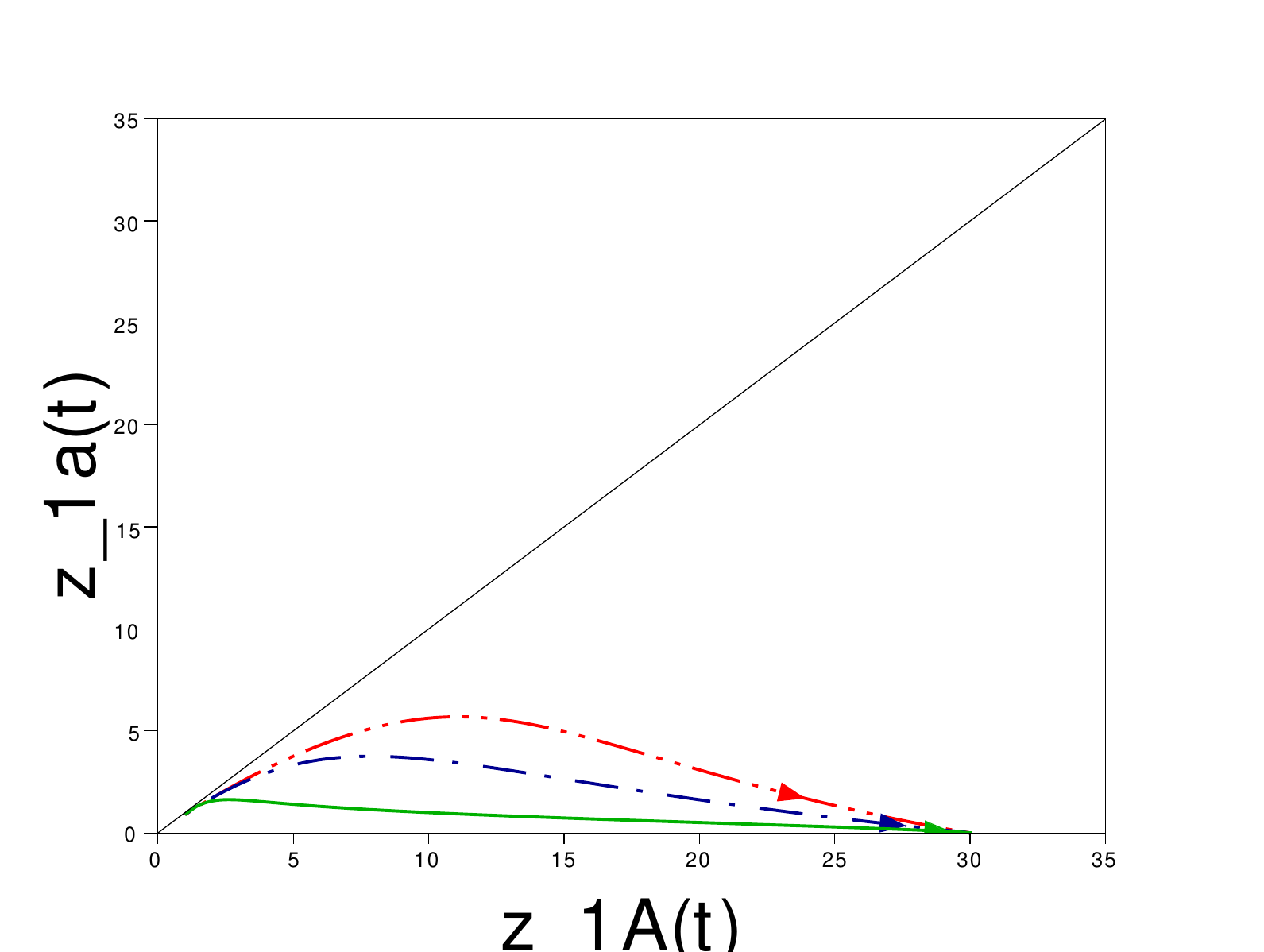}
 \end{minipage}
\begin{minipage}{0.49\textwidth}
  \includegraphics[width=1.18\textwidth]{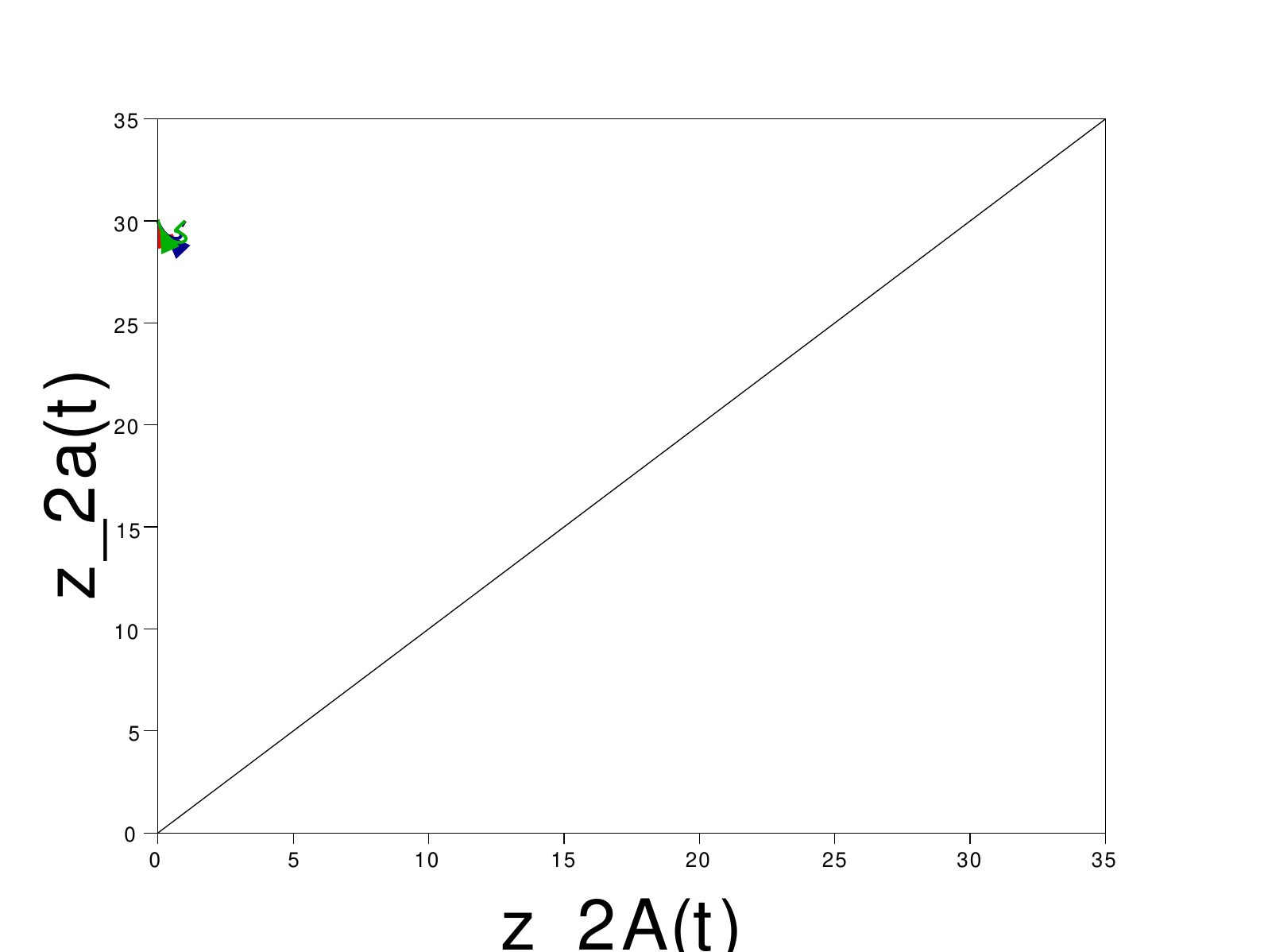}
 \end{minipage}
\small{(b) (1,0.9,1,30)}
\end{center}

 \end{minipage}
\begin{minipage}{0.49\textwidth}
  
\begin{center}
\begin{minipage}{0.49\textwidth}
  \includegraphics[width=1.18\textwidth]{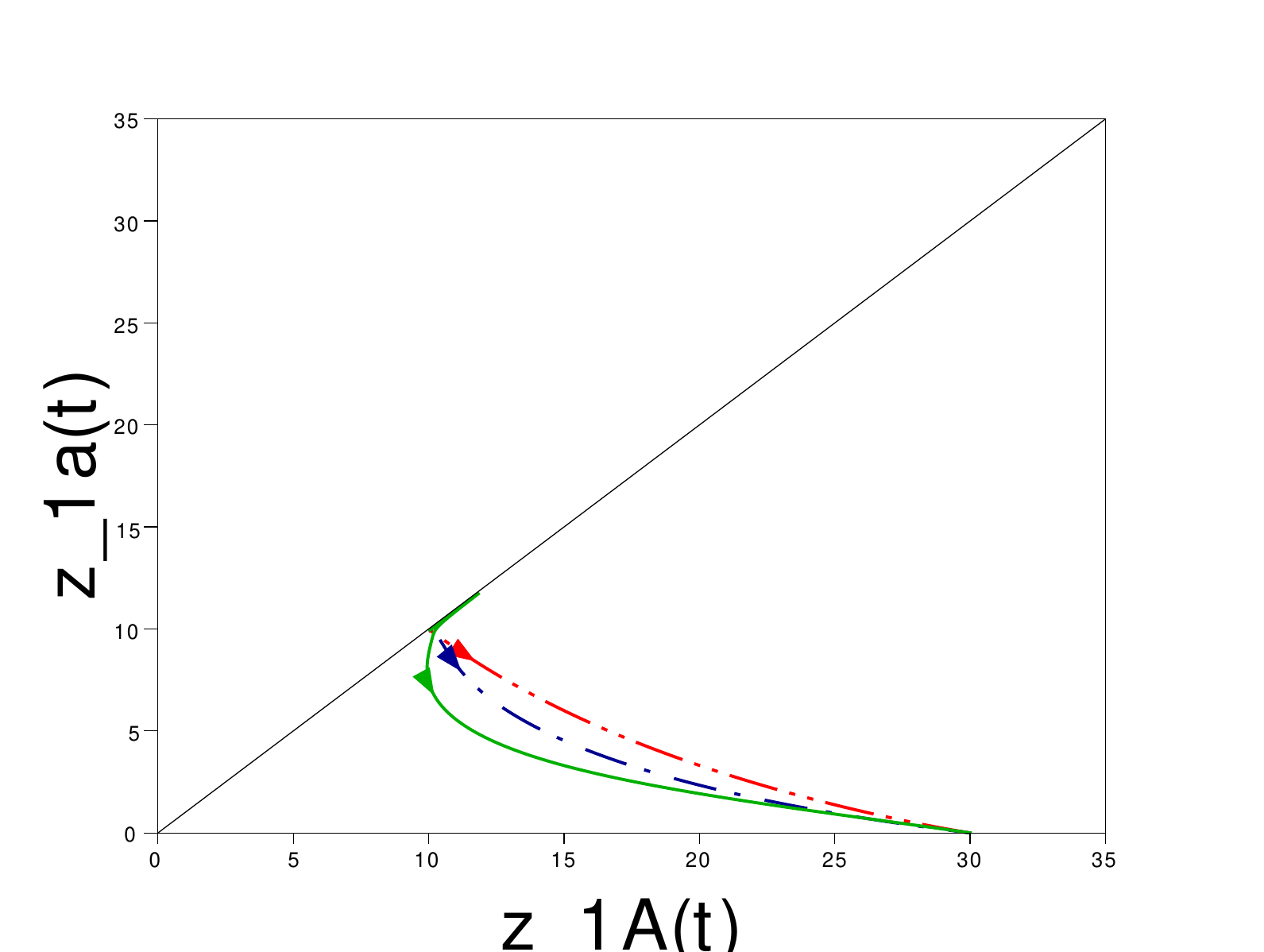}
 \end{minipage}
\begin{minipage}{0.49\textwidth}
  \includegraphics[width=1.18\textwidth]{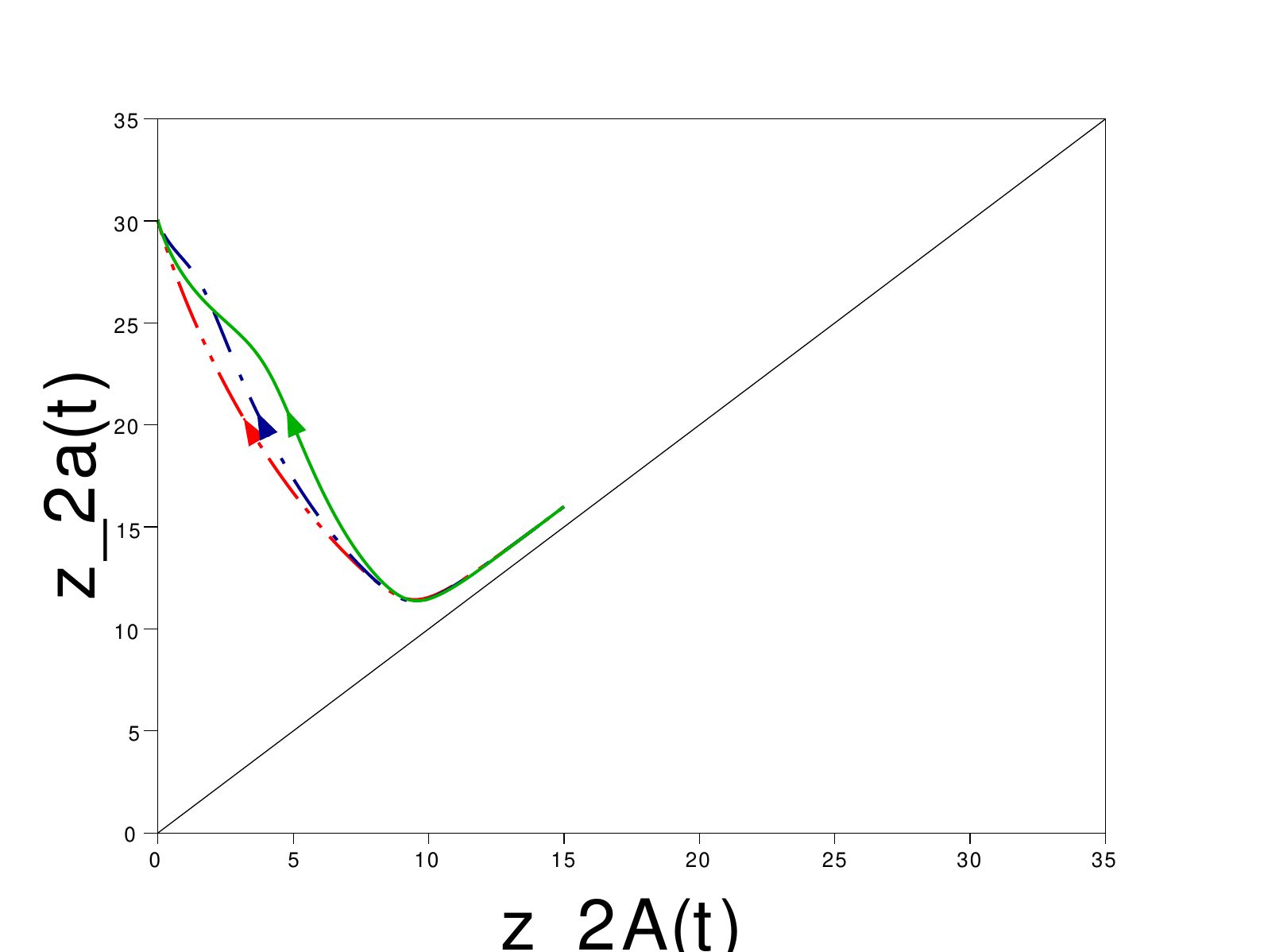}
 \end{minipage}
\small{(c) (10,9.9,15,16)}

\begin{minipage}{0.49\textwidth}
  \includegraphics[width=1.18\textwidth]{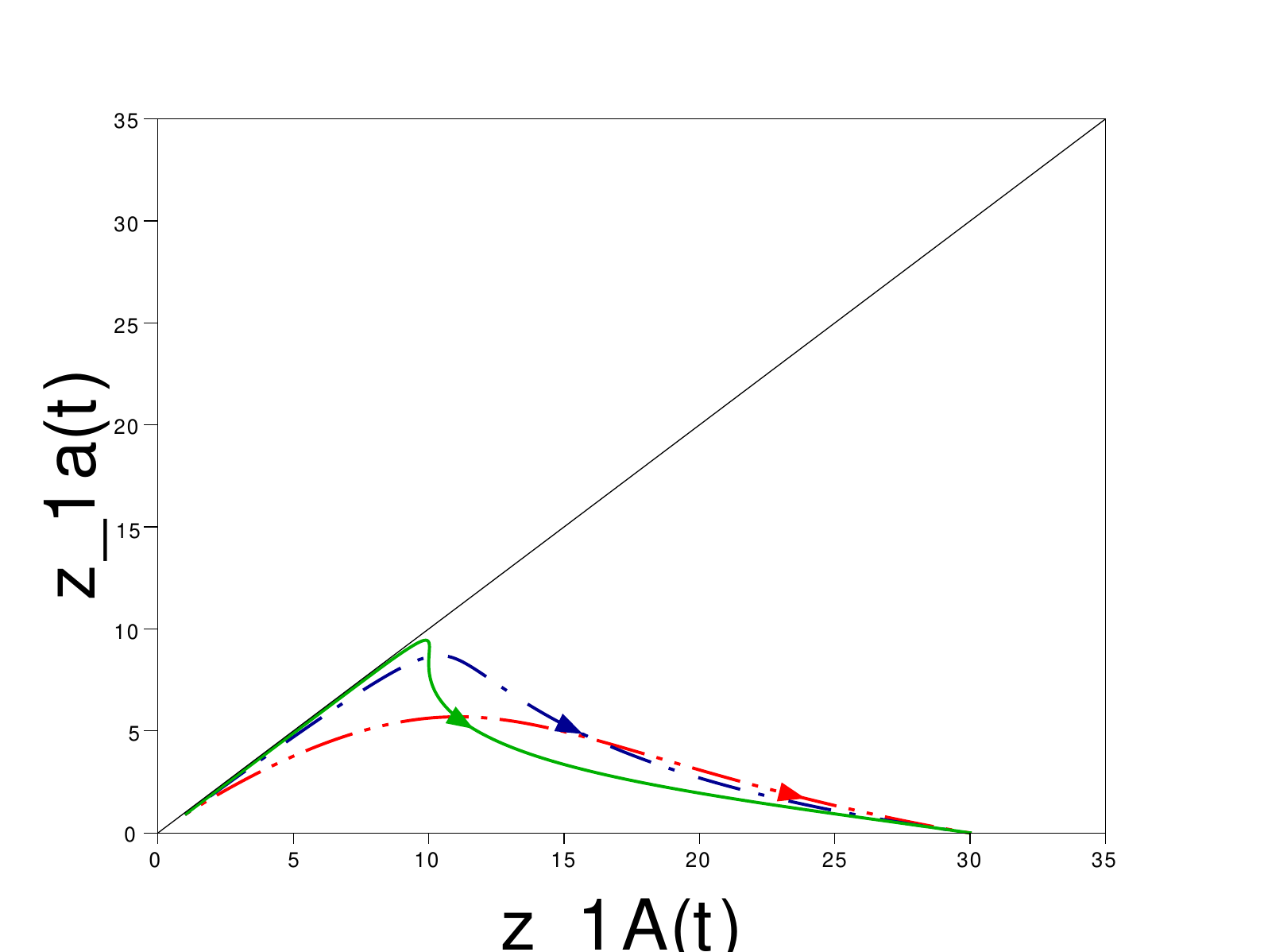}
 \end{minipage}
\begin{minipage}{0.49\textwidth}
  \includegraphics[width=1.18\textwidth]{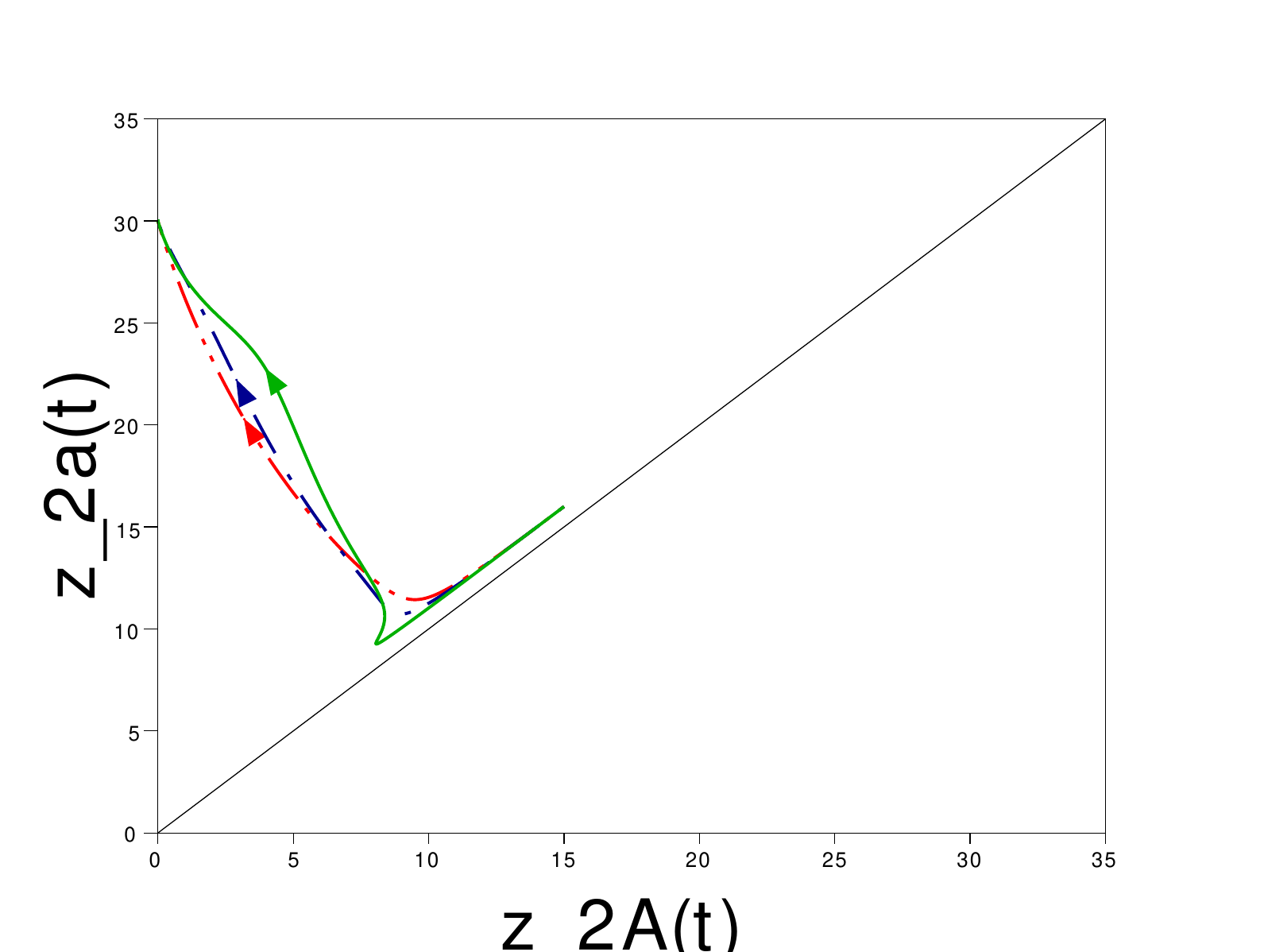}
 \end{minipage}
\small{(d) (1,0.9,15,16)}
\end{center}

 \end{minipage}
\caption{\label{fig_trajectories}\small{For  four different initial conditions, we plot the trajectories in the phase planes  
which represent the patch $1$ (left) and the patch $2$ (right) for $t\in [0,10]$ 
and for three values of $p$: $p=0$ (red), $p=1$ (blue), $p=20$ (green). The initial condition is given under each pair of plots in the format 
$(z_{A,1}(0),z_{a,1}(0),z_{A,2}(0),z_{a,2}(0))$.
Note that the initial conditions on (a) and (c) (resp. (b) and (d)) corresponds to the dark green (resp. light green) curve on Figure \ref{fig_tpsrelatif}(a) and \ref{fig_tpsrelatif}(b).}
}
\end{figure}

\noindent
{\textbf{Conclusion}: }
As a conclusion, similarly to the case of selection-migration model (see e.g. \cite{akerman2014consequences}) migration can have different impacts on the population dynamics. On the one hand, a large migration rate helps the individuals to escape a disadvantageous habitat~\cite{clobert2001dispersal} but there are also risks to move through unfamiliar or less suitable habitat. Thus, a trade-off between the two phenomena explains the influence of $p$ on the time to reach the equilibrium.\\

\section{Generalisations of the model} \label{sectiongeneralisation}
 
Until now we studied a simple model to make clear the important properties allowing to get spatial segregation between patches.
We now prove that our findings are robust by studying some generalisations of the model and showing that we can relax 
several assumptions and still get spatial segregation between patches.

\subsection{Differences between patches}

We assumed that the patches were ecologically equivalent in the sense that the birth, death and competition rates $b$, $d$ and 
$c$, respectively, did not depend on the label of the patch $i \in \mathcal{I}$. In fact we could make these parameters depend on 
the patch, and denote them $b_i$, $d_i$ and 
$c_i$, $i \in \mathcal{I}$. In the same way, the sexual preference $\beta_i$ and the migration rate $p_i$ could depend on 
the label of 
the patch $i \in \mathcal{I}$. 
As a consequence, the dynamical system \eqref{systdet} becomes
\ben\label{systdet2}
\left\{\begin{array}{l} \frac{d}{dt}z_{A,1}(t)=
z_{A,1}\Bigl[ b_1\frac{\beta_1 z_{A,1}+z_{a,1}}{z_{A,1}+z_{a,1}}-d_1-c_1(z_{A,1}+z_{a,1})-p_1\frac{z_{a,1}}{z_{A,1}+z_{a,1}}\Bigr]+p_2\frac{z_{A,2}z_{a,2}}{z_{A,2}+z_{a,2}}\\
\frac{d}{dt}z_{a,1}(t)=
z_{a,1}\Bigl[ b_1\frac{\beta_1 z_{a,1}+z_{A,1}}{z_{A,1}+z_{a,1}}-d_1-c_1(z_{A,1}+z_{a,1})-p_1\frac{z_{A,1}}{z_{A,1}+z_{a,1}}\Bigr]+p_2\frac{z_{A,2}z_{a,2}}{z_{A,2}+z_{a,2}}\\
\frac{d}{dt}z_{A,2}(t)=
z_{A,2}\Bigl[ b_2\frac{\beta_2 z_{A,2}+z_{a,2}}{z_{A,2}+z_{a,2}}-d_2-c_2(z_{A,2}+z_{a,2})-p_2\frac{z_{a,2}}{z_{A,2}+z_{a,2}}\Bigr]+p_1\frac{z_{A,1}z_{a,1}}{z_{A,1}+z_{a,1}}\\
\frac{d}{dt}z_{a,2}(t)=
z_{a,2}\Bigl[ b_2\frac{\beta_2 z_{a,2}+z_{A,2}}{z_{A,2}+z_{a,2}}-d_2-c_2(z_{A,2}+z_{a,2})-p_2\frac{z_{A,2}}{z_{A,2}+z_{a,2}}\Bigr]+p_1\frac{z_{A,1}z_{a,1}}{z_{A,1}+z_{a,1}}.\end{array}\right.
\een
The set $\mathcal{D}$ is still invariant under this new system and the solutions to \eqref{systdet2} with initial conditions in 
$\mathcal{D}$ hit in finite time the invariant set 
\begin{equation*}
 \mathcal{K}'_p:= \left\{ 
\tb{z} \in \mathcal{D}, \; z_{A,i}+z_{a,i}  \in \left[ \frac{b_i(\beta_i+1)-2d_i-p_i}{2c_i}, \zeta_i+\frac{p_{\bar{i}}}{2c_i}, \ 
i \in \mathcal{I}\right] 
\right\},
\end{equation*}
where 
$$ \zeta_i: = \frac{b_i\beta_i- d_i}{c_i}.$$
As $\mathcal{D}$ is an invariant set under \eqref{systdet2}, we can define the function $V$ as in \eqref{defV} for every solution 
of $V$ with initial condition in $\mathcal{D}$. Its first order derivative is 
$$ \frac{d}{dt}V(\tb{z}(t))
= -\underset{i=1,2}{\sum}
   \frac{z_{A,i}z_{a,i}}{z_{A,i}+z_{a,i}}\left[ \frac{2b_i(\beta_i-1)+2p_i}{z_{A,i}+z_{a,i}}- 
   \frac{2p_{i}}{z_{A,\bar i}+z_{a,\bar i}} \right].$$
As a consequence, we can prove similar results to Theorems \ref{theoCvceD} and \ref{maintheo} under the assumption that $p_1$ and $p_2$ satisfy
\begin{equation*}
p_i c_{\bar{i}} (2 c_i \zeta_i + p_{\bar{i}}) < c_i (b_i(\beta_i-1)+p_i) (b_{\bar{i}}(\beta_{\bar{i}}+1)-2d_{\bar{i}}-p_{\bar{i}}), 
\text{ for } i \in \mathcal{I},
\end{equation*}
and where the constant in front of the time $\log K$ is no more $\frac{1}{b(\beta-1)}$ but $\frac{1}{\omega_{1,2}}$ with
\begin{multline*}
\omega_{1,2}=\frac{1}{2}(b_1(\beta_1-1)+p_1+b_2(\beta_2-1)+p_2)\\
-\frac{1}{2}\sqrt{(b_1(\beta_1-1)+p_1-b_2(\beta_2-1)-p_2)^2+4p_1p_2}.
\end{multline*} 
Here, note that the constant does depend on all the parameters. Indeed, since there is no ecological neutrality between the two patches, there do not exist simplifications and balancings as in the previous models.\\

\subsection{Migration}
The migration rates under consideration increase when the genetic diversity increases. Indeed, let us consider
$$H_T^{(i)} := 1-\left[\left(\frac{n_{A,i}}{n_{A,i}+n_{a,i}}\right)^2 + \left(\frac{n_{a,i}}{n_{A,i}+n_{a,i}}\right)^2\right] $$
as a measure of the genetic diversity in the patch $i \in \mathcal{I}$.
Note that $H_T^{(i)} \in [0,1/2]$ is known as the "total gene diversity" in the patch $i$ (see \cite{nei1975molecular} for instance) and is widely 
used as a measure of diversity. When we express the migration rates in terms of this measure, we get
$$ \rho_{\alpha,\bar{i} \to i }(n)= p \frac{n_{A,i}n_{a,i}}{n_{A,i}+n_{a,i}}= \frac{p}{2} (n_{A,i}+n_{a,i})H_T^{(i)} . $$
Hence we can consider that the migration helps the speciation. Let us show that we can get the same kind of result when we consider an arbitrary form for the 
migration rate if this latter is symmetrical and bounded.
We thus consider a more general form for the migration rate. More precisely, 
$$ \rho_{\alpha,\bar{i} \to i }(n)= p(n_{A,\bar{i}}, n_{a,\bar{i}}), $$
and we assume 
$$ p(n_{A,\bar{i}}, n_{a,\bar{i}})=p(n_{a,\bar{i}}, n_{A,\bar{i}})\quad \text{and} \quad p(n_{A,\bar{i}}, n_{a,\bar{i}})
\frac{n_{A,\bar{i}}+n_{a,\bar{i}}}{n_{A,\bar{i}}n_{a,\bar{i}}}<p_0,$$
where $p_0$ has been defined in \eqref{defp0}.
Note that the second condition on the function $p$ imposes that as one of the population sizes goes to $0$, then so does the migration rate. In particular, this condition ensures that the points given by \eqref{eq-z00z} and \eqref{eq-z0z0} are still equilibria of the system.
Theorems \ref{theoCvceD} and \ref{maintheo} still hold with this new definition for the migration rate.

\subsection{Number of patches}
Finally, we restricted our attention to the case of two patches, but we can consider an arbitrary number $N \in \N$ of patches.
We assume that all the patches are ecologically equivalent but that the migrant individuals have 
a probability to migrate to an other patch which depends on the geometry of the system. Moreover, 
we allow the individuals to migrate outside the $N$ patches. 
In other words, 
for $\alpha \in \mathcal{A}$, $i\leq N$, $j\leq N+1$ and $\tb{n} \in (\N^{\mathcal{A}})^{N}$,
\begin{equation*}
 \rho_{\alpha,  i \to j}(\tb{n})= p_{ij}\frac{n_{A,i}n_{a,i}}{n_{A,i}+n_{a,i}},
\end{equation*}
where the "patch" $N+1$ denotes the outside of the system.\\
As a consequence, we obtain the following limiting dynamical system for the rescaled process, when the initial population sizes are 
of order $K$ in all the patches: for every $1 \leq i \leq N$,
\ben\label{systdetNbis}
\begin{aligned} \frac{dz_{A,i}(t)}{dt}&=
z_{A,i}\left[ b\frac{\beta z_{A,i}+z_{a,i}}{z_{A,i}+z_{a,i}}-d-c(z_{A,i}+z_{a,i})-\sum_{j\neq i, j\leq N+1} p_{ij}\frac{z_{a,i}}{z_{A,i}+z_{a,i}}\right]\\
&\qquad
+\sum_{j \neq i, j\leq N} p_{ji}\frac{z_{A,j}z_{a,j}}{z_{A,j}+z_{a,j}}\\
\frac{dz_{a,i}(t)}{dt}&=
z_{a,i}\left[ b\frac{\beta z_{a,i}+z_{A,i}}{z_{A,i}+z_{a,i}}-d-c(z_{A,i}+z_{a,i})-\sum_{j\neq i, j\leq N+1} p_{ij}\frac{z_{A,i}}{z_{A,i}+z_{a,i}}\right]\\
&\qquad
+\sum_{j \neq i, j\leq N} p_{ji}\frac{z_{A,j}z_{a,j}}{z_{A,j}+z_{a,j}}\end{aligned}
\een
For the sake of readability, we introduce the two following notations:
$$ p_{i\to}:=\sum_{j\neq i, j\leq N+1} p_{ij} \quad \text{and} \quad p_{i\leftarrow}:= \sum_{j\neq i, j\leq N} p_{ji}.$$
Let $N_A$ be an integer smaller than $N$  which gives the number of patches with a majority of individuals of type $A$.
We can assume without loss of generality that 

$$ z_{A,i}(0)> z_{a,i}(0),\ \text{for} \ 1 \leq i \leq N_A, \quad \text{and} \quad z_{A,i}(0)<z_{a,i}(0), \ \text{for} \ N_A+1\leq i \leq N .$$
Let us introduce the subset of $(\R_+^\mathcal{A})^{N}$
\begin{equation*}  
\mathcal{D}_{N_A,N}:=\{ \tb{z} \in (\R_+^\mathcal{A})^{N}, z_{A,i}-z_{a,i}>0\ \text{for} \  i \leq N_A, 
\quad \text{and} \quad z_{a,i}-z_{A,i}>0 \ \text{for} \  i >N_A \},
\end{equation*}
We assume that the sequence $(p_{ij})_{i,j\in \{1,..,N\}}$ satisfy :
for all $i\in \{1,..,N\}$,
\begin{equation}
\label{hyp_pij}
p_{i\to}< b(\beta+1)-2d \ \text{ and } \ \frac{b(\beta-1)+p_{i\to}}{2cz +p_{i\leftarrow}}-\sum_{j\neq i, j\leq N+1}\frac{p_{ij}}{b(\beta+1)-2d-p_{j\to}}>0.
\end{equation}

Then we have the following result:

\begin{theo}\label{maintheobis}
 We assume that Assumption~\eqref{hyp_pij} holds. Let us assume that $\tb{Z}^K(0)$ converges in probability to a deterministic vector ${\bf{z}^0}$ belonging to $\mathcal{D}_{N_A,N}$
with $(z_{a,1}^0,z_{A,2}^0)\neq (0,0)$.
Introduce the following bounded set depending on $\eps>0$:
 $$ \mathcal{B}_{N_A,N,\eps}:= \Big([({\zeta}-\eps)K,({\zeta}+\eps)K] \times \{0\} \Big)^{N_A}\times \Big(\{0\} \times [({\zeta}-\eps)K,({\zeta}+\eps)K]\Big)^{N-N_A}. $$
 Then there exist three positive constants $\epsilon_0$, $C_0$ and $m$, and a positive constant 
 $V$ depending on $(m,\eps_0)$ such that if $\eps\leq \eps_0$,
 $$ \lim_{K \to \infty}\P \left( \left| \frac{T^K_{\mathcal{B}_{\eps}}}{\log K}-\frac{1}{b(\beta-1)} \right|\leq C_0\eps, 
 \tb{N}^K\left(T^K_{\mathcal{B}_{N_A,N,\eps}}+t\right)\in \mathcal{B}_{N_A,N,m\eps}\; \forall t \leq e^{VK} \right)= 1,$$
 where $T^K_\mathcal{B}$, $\mathcal{B} \subset \R_+^\mathcal{E}$ is the hitting time of the set $\mathcal{B}$ by the population process $\tb{N}^K$.
\end{theo}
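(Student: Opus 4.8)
The plan is to reproduce, patch by patch, the two ingredients behind Theorem~\ref{maintheo}: (i) convergence of the limiting system~\eqref{systdetNbis}, started anywhere in $\mathcal{D}_{N_A,N}$, to the reproductive-isolation equilibrium $\tb{z}^\star$ whose first $N_A$ patches sit at $(\zeta,0)$ and whose last $N-N_A$ patches sit at $(0,\zeta)$; and (ii) a stochastic analysis on two time scales --- an $O(1)$ phase in which $\tb{Z}^K$ is pushed close to $\tb{z}^\star$ by the deterministic flow through the $N$-patch version of Lemma~\ref{lemapprox} (whose proof carries over unchanged), followed by a $\log K$ phase during which the small minority populations (the $a$-individuals in an $A$-majority patch and the $A$-individuals in an $a$-majority patch) go extinct while the majorities stay pinned near $\zeta$.

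For the deterministic statement I would copy Section~\ref{sectiondet}. Subtracting the two equations of~\eqref{systdetNbis} inside patch $i$ gives $\frac{d}{dt}(z_{A,i}-z_{a,i})=(z_{A,i}-z_{a,i})(b\beta-d-c(z_{A,i}+z_{a,i}))$, so the sign of each $z_{A,i}-z_{a,i}$ is preserved and $\mathcal{D}_{N_A,N}$ is invariant. A logistic comparison for the total mass, together with the blow-up argument of Lemma~\ref{lemma_invariantset} used in each patch (were $z_{A,i}+z_{a,i}$ to stay too small, the invariant difference $z_{A,i}-z_{a,i}$ would grow exponentially, contradicting the boundedness of the total mass), shows that every trajectory from $\mathcal{D}_{N_A,N}$ enters a bounded invariant set $\mathcal{K}'$ in finite time. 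On $\mathcal{K}'$ I would use the Lyapunov function
$$V(\tb{z})=\sum_{i\le N_A}\ln\!\left(\frac{z_{A,i}+z_{a,i}}{z_{A,i}-z_{a,i}}\right)+\sum_{i>N_A}\ln\!\left(\frac{z_{A,i}+z_{a,i}}{z_{a,i}-z_{A,i}}\right),$$
whose derivative along the flow, after rewriting the immigration contributions by interchanging the order of summation, equals
$$\frac{d}{dt}V(\tb{z}(t))=-\sum_{i=1}^{N}\frac{2z_{A,i}z_{a,i}}{z_{A,i}+z_{a,i}}\left[\frac{b(\beta-1)+p_{i\to}}{z_{A,i}+z_{a,i}}-\sum_{j\ne i,\,j\le N}\frac{p_{ij}}{z_{A,j}+z_{a,j}}\right].$$
Bounding $z_{A,i}+z_{a,i}$ by the endpoints of the interval defining $\mathcal{K}'$ turns the sign condition on each bracket into exactly Assumption~\eqref{hyp_pij}, so $V$ is non-increasing; squeezing $V$ between constant multiples of $\sum_{i\le N_A}z_{a,i}+\sum_{i>N_A}z_{A,i}$ via $\ln x\le x-1$ and $\ln x\ge 1-1/x$, as in the proof of Theorem~\ref{theoCvceD}, yields $\frac{d}{dt}V\le-\kappa V$ for some $\kappa>0$, hence exponential convergence of each minority density to $0$ and of each majority density to $\zeta$.

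For the stochastic part I would rerun the two steps of the proof of Proposition~\ref{propexttime}. Step~1: as long as all minority densities lie below $\eps$, each majority density is sandwiched between two density-dependent Markov jump processes whose limiting logistic ODEs have equilibria within $O(\eps)$ of $\zeta$; the large-deviation estimate used in Theorem~3(c) of~\cite{champagnat2006microscopic} (via~\cite{freidlin1984random}) shows every majority stays in $[\zeta-M\eps,\zeta+M\eps]$ up to a time $e^{KV}$ with probability tending to $1$. Step~2: on that interval I would bound the jump rates of the minority individuals --- each is born at rate close to $b$, dies at rate close to $b\beta=d+c\zeta$, and migrates, a migration within one class (an $a$-individual between two $A$-majority patches, or an $A$-individual between two $a$-majority patches) being a pure transfer inside the minority pool, whereas a migration to a patch of the other class or to the outside removes the individual from that pool. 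Consequently the vector of minority sizes is squeezed between two multi-type branching processes whose mean matrices are $(b-b\beta)\,\mathrm{Id}$ plus an $O(\eps)$ perturbation of a subgenerator encoding the $(p_{ij})$, and since its within-class part is conservative the minority pool decays to leading order at rate $b(\beta-1)$, exactly as the total minority did in the two-patch case. Plugging this into the branching-process extinction-time asymptotics recalled in the Appendix (compare Theorem~4 of~\cite{champagnat2006microscopic}), and using the hypothesis $(z^0_{a,1},z^0_{A,2})\ne(0,0)$ to ensure at least one minority starts at size $\Theta(K)$ --- which is what makes the lower bound on the extinction time valid --- gives $T^K_0=\big(b(\beta-1)^{-1}+O(\eps)\big)\log K$ with probability tending to $1$. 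After $T^K_0$ all migration rates vanish because every minority is empty, so the $N$ majorities evolve as independent logistic birth-and-death processes and remain in $\mathcal{B}_{N_A,N,m\eps}$ for a time $e^{KV}$, again by Theorem~3(c) of~\cite{champagnat2006microscopic}; combining the three phases via the Markov property finishes the proof.

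I expect the deterministic convergence to be the main obstacle: the Lyapunov inequality closes only after the summation exchange and the $\mathcal{K}'$ bounds, and one must recognise that Assumption~\eqref{hyp_pij} is precisely what keeps each bracket above a positive constant (the role played by $p<p_0$ for two patches), while the finite-time entry into $\mathcal{K}'$ requires a careful $N$-patch analogue of Lemma~\ref{lemma_invariantset}. On the stochastic side the one genuinely new point relative to Theorem~\ref{maintheo} is upgrading the single-type branching comparison to a multi-type one and verifying that the migration part of its mean matrix is (sub)conservative, so that the extinction rate is still dictated by $b(\beta-1)$.
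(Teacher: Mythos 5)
Your proposal follows essentially the same route as the paper, whose own proof of this theorem is only a sketch: it introduces the invariant set $\mathcal{K}_{N_A,N}$ with the per-patch bounds you describe, asserts finite-time entry via the $N$-patch analogue of Lemma~\ref{lemma_invariantset}, and concludes with exactly the Lyapunov function you write down, leaving the stochastic two-phase coupling to be transposed from Proposition~\ref{propexttime} and Theorem 3(c) of~\cite{champagnat2006microscopic}. Your computation of $dV/dt$, the identification of Assumption~\eqref{hyp_pij} as the positivity of each bracket once the patch sizes are bounded by the endpoints of $\mathcal{K}_{N_A,N}$, and the multi-type branching comparison for the minority pool all match the paper's (much terser) argument, and in fact supply more detail than the paper itself provides.
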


The proof is really similar to the one for the two patches. To handle the deterministic part of the proof, we first show that for every initial condition on $\mathcal{D}_{N_A,N}$, the solution of \eqref{systdetNbis} hits the set 
\begin{equation*}
 \mathcal{K}_{N_A,N}:= \left\{ 
\tb{z} \in \Big((\R_+^*)^{\mathcal{A}}\Big)^{N}, \; \{z_{A,i}+z_{a,i} \} \in \left[ \frac{b(\beta+1)-2d-p_{i\to}}{2c}, 
\zeta+\frac{p_{i\leftarrow}}{2c}\right] \forall i \leq N 
\right\} \cap \mathcal{D}_{N_A,N} .
\end{equation*}
 in finite time, and that this set is invariant under \eqref{systdetNbis}. Then, we conclude with the Lyapunov function
\begin{equation*} 
  \tb{z} \in \mathcal{K}_{N_A,N}  \mapsto \underset{i \leq N_A}{\sum}\ln \left(\frac{z_{A,i}+z_{a,i}}{z_{A,i}-z_{a,i}}\right)
 + \underset{N_A<i \leq N}{\sum}\ln \left(\frac{z_{a,i}+z_{A,i}}{z_{a,i}-z_{A,i}}\right) . \\
 \end{equation*}

\bigskip
As a conclusion, several generalisations are possible and a lot of assumptions can be relaxed in the initial simple model.
We can also combine some of the generalisations for the needs of a particular system.  
However, observe that the mating preference influences the time needed to reach speciation in the same way.

\appendix

\section{Technical results {and reduction of the system}}\label{appendix}

This section is dedicated to some technical results needed in the proofs, {as well as the reduction of the system to the minimal number of effective parameters}.
We first prove the convergence 
when $K$ goes to infinity of the sequence of rescaled processes $\tb{Z}^K$ to the 
solution of the dynamical system \eqref{systdet} stated in Lemma  \ref{lemapprox}.

\begin{proof}[Proof of Lemma \ref{lemapprox}]
The proof relies on a classical result of \cite{EK} (Chapter 11).
Let $\tb{z}$ be in $\N^\mathcal{E}/K$. 
According to \eqref{birthrate}-\eqref{migrationrate}, the rescaled birth, death  and migration rates
\begin{equation}
\label{deftildeb} 
\widetilde{\lambda}_{\alpha ,i}(\tb{z})=\frac{1}{K}\lambda_{\alpha,i}(K\tb{z})=\lambda_{\alpha,i}(\tb{z}), \quad  
\wt{d}_{\alpha, i}(\tb{z})=\frac{1}{K}d^K_{\alpha ,i}(K\tb{z})=\left[ d+cz_{A,i}+cz_{a,i}\right]{z_{\alpha,i}}, \end{equation}
and
\begin{equation*}
\wt{\rho}_{\bar{i}\to i}(\tb{z})=\frac{1}{K}\rho_{\bar{i}\to i}(K\tb{z})=\rho_{\bar{i}\to i}(\tb{z}), \quad (\alpha, i) \in \mathcal{E}
\end{equation*}
are Lipschitz and bounded on every compact subset of $ \N^\mathcal{E}$, and do not depend on the carrying capacity $K$. 
Let $(Y_{\alpha,i}^{(\lambda)},Y_{\alpha,i}^{(d)},Y_{\alpha,i}^{(\rho)},(\alpha,i)\in \mathcal{E})$ be twelve 
independent standard Poisson processes.
From the representation of the stochastic process $(\tb{N}^{K}(t),t\geq0)$ in \eqref{def_poisson} we see that the stochastic process $(\bar{\tb{Z}}^{K}(t), t \geq 0)$ 
defined by
\begin{multline*}
 \bar{\tb{Z}}^{K}(t) =\tb{Z}^K(0)+
 \underset{(\alpha,i)\in \mathcal{E}}{\sum}\frac{\tb{e}_{\alpha,i}}{K}
 \Big[{Y}_{\alpha,i}^{(\lambda)}\Big( \int_0^tK \wt{\lambda}_{\alpha,i}(\bar{\tb{Z}}^{K}({s})) ds\Big)-
{Y}_{\alpha,i}^{(d)}\Big( \int_0^t K \wt{d}_{\alpha,i}(\bar{\tb{Z}}^{K}({s})) ds\Big)\Big]\\
+  \underset{(\alpha,i)\in \mathcal{E}}{\sum} \frac{(\tb{e}_{\alpha,i}-\tb{e}_{\alpha,\bar{i}}) }{K} Y_{\alpha,i}^{(\rho)}
\Big( \int_0^t K \wt{\rho}_{\alpha,i}(\bar{\tb{Z}}^{K}({s})) ds\Big),
\end{multline*}
has the same law as $(\tb{Z}^{K}(t), t \geq 0)$. 
Moreover, a direct application of Theorem 2.1 p 456 in \cite{EK} gives that $(\bar{\tb{Z}}^{K}(t), t \leq T)$
converges in probability to $(\tb{z}^{(\tb{z}^0)}(t), t \leq T)$ for the uniform norm.
As a consequence, $(\tb{Z}^K(t), t \leq T)$ converges in law to $(\tb{z}^{(\tb{z}^0)}(t), t \leq T)$ for the same norm.
But the convergence in law to a constant is equivalent to the convergence in probability to the same constant. The result follows.
\end{proof}

We now recall a well known fact on branching processes which 
can be found in \cite{athreya1972branching} p 109.

\begin{lem}
\begin{itemize}
  \item Let $Z=(Z_t)_{t \geq 0}$ be a birth and death process with individual birth and death rates $b$ and $d $. For $i \in \Z^+$, 
$T_i=\inf\{ t\geq 0, Z_t=i \}$ and $\P_i$ is the law of $Z$ when $Z_0=i$. 
If $d\neq b \in \R_+^*$, for every $i\in \Z_+$ and $t \geq 0$,
\begin{equation} \label{ext_times} \P_{i}(T_0\leq t )= \Big( \frac{d(1-e^{(d-b)t})}{b-de^{(d-b)t}} \Big)^{i}.\end{equation}
\end{itemize}
\end{lem}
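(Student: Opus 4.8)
The plan is to reduce the formula to the case of a single initial particle and then to solve a scalar ordinary differential equation for the extinction probability. Since the state $0$ is absorbing for the birth and death process $Z$, one has $\{T_0\le t\}=\{Z_t=0\}$, so it suffices to compute $\P_i(Z_t=0)$. By the branching property the $i$ lineages issued from the $i$ initial particles evolve independently and identically, whence $\P_i(Z_t=0)=\big(\P_1(Z_t=0)\big)^i$. Everything thus reduces to the case $i=1$, and I would set $p_0(t):=\P_1(Z_t=0)$.

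The next step is to introduce the probability generating function $F(s,t):=\E_1[s^{Z_t}]$. Because $Z$ is a continuous-time Markov branching process in which each individual, at total rate $b+d$, is replaced by two offspring with probability $b/(b+d)$ and by none with probability $d/(b+d)$, the function $F$ solves the backward equation $\partial_t F(s,t)=u(F(s,t))$ with $F(s,0)=s$, where the infinitesimal generating function is
$$u(s)=bs^2-(b+d)s+d=(bs-d)(s-1).$$
The same equation for $p_0$ can alternatively be obtained by a first-event decomposition, conditioning on the time and nature (birth or death) of the first event affecting the initial particle. Evaluating the backward equation at $s=0$ gives the Riccati equation
$$p_0'(t)=(bp_0(t)-d)(p_0(t)-1),\qquad p_0(0)=0.$$

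I would then integrate this separable equation by partial fractions. Using the two roots $s=1$ and $s=d/b$ of $u$, the decomposition
$$\frac{1}{(bp-d)(p-1)}=\frac{1}{d-b}\Big(\frac{1}{p-d/b}-\frac{1}{p-1}\Big)$$
yields $\ln\big|(p_0-d/b)/(p_0-1)\big|=(d-b)t+\mathrm{const}$. Imposing $p_0(0)=0$ fixes the constant so that $(p_0-d/b)/(p_0-1)=(d/b)\,e^{(d-b)t}$, and solving this linear relation for $p_0$ produces exactly
$$p_0(t)=\frac{d(1-e^{(d-b)t})}{b-d\,e^{(d-b)t}}.$$
Raising to the power $i$ via the branching reduction delivers \eqref{ext_times}.

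Once the ODE is in hand the computation is entirely explicit, so the only point deserving genuine care is the rigorous justification of the backward equation $\partial_t F=u(F)$ together with the identification of $u$; this is precisely where the Markov branching structure is used, namely the independence of distinct lineages and the exponential holding times. One must also verify that the hypothesis $d\neq b$ keeps the denominator $b-d\,e^{(d-b)t}$ bounded away from $0$ for every $t\ge 0$ (indeed $b=d\,e^{(d-b)t}$ would force $(d-b)t=\ln(b/d)$, impossible for $t\ge 0$ when $d\ne b$), so that $p_0$ is well defined on all of $\R_+$. Everything else is the classical Athreya--Ney computation.
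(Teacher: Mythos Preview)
Your argument is correct: the reduction to $i=1$ via the branching property, the backward equation $p_0'=(bp_0-d)(p_0-1)$, its integration by partial fractions, and the check that the denominator $b-de^{(d-b)t}$ never vanishes for $t\ge 0$ when $b\neq d$ are all valid and cleanly executed. Note that the paper does not actually supply a proof of this lemma; it simply records the formula as a well-known fact and refers to Athreya--Ney, p.~109, so your derivation goes beyond what the paper does and reproduces the classical computation found in that reference.
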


{As we mentioned in Section \ref{sectionmodel}, it is possible to reduce the number of parameters $b$, $c$, $d$, $p$, $\beta$
by using a change of time and a scaling. Let us introduce the new variables  
$$ \tilde{z}_{\alpha,i}(t):= \frac{c}{b} z_{\alpha,i} \Big(\frac{t}{b}\Big), $$ 
for all $\alpha \in \{A,a\}$, $i \in \{1,2\}$ and $t \geq 0$, and the parameters 
$$\tilde{d}:= \frac{d}{b}, \quad \tilde{p}:= \frac{p}{b}. $$
Then the new variables satisfy the following dynamical system
\begin{equation*}
\frac{d}{dt}\tilde{z}_{\alpha,i}(t)=
\tilde{z}_{\alpha,i}\Bigl[ \frac{\beta \tilde{z}_{\alpha,i}+\tilde{z}_{\bar\alpha,i}}{\tilde{z}_{\alpha,i}+\tilde{z}_{\bar\alpha,i}}
-\tilde{d}-(\tilde{z}_{\alpha,i}+\tilde{z}_{\bar\alpha,i})-
\tilde{p}\frac{\tilde{z}_{\bar\alpha,i}}{\tilde{z}_{\alpha,i}+\tilde{z}_{\bar\alpha,i}}\Bigr]+\tilde{p}\frac{\tilde{z}_{\alpha,\bar i}\tilde{z}_{\bar\alpha,\bar i}}{\tilde{z}_{\alpha,\bar i} +\tilde{z}_{\bar\alpha,\bar i}},
\end{equation*}
for $\alpha\in \{A,a\}$, $\bar\alpha =\{A,a\}\setminus \alpha$, $i \in \{1,2\}$ and $\bar i=\{1,2\} \setminus i$.}

\vspace{.5cm}

{\bf Acknowledgements:} {\sl  The authors would like to warmly thank Sylvie M\'el\'eard for her continual guidance during their respective thesis works. 
They would also like to thank Pierre Collet for his help on the theory of dynamical systems, Sylvain Billiard for many 
fruitful discussions on the biological relevance of their model, and Violaine Llaurens for her help during the revision of the manuscript.
C. C. and C. S. are grateful to the organizers of 
"The Helsinki Summer School on Mathematical Ecology and Evolution 2012: theory of speciation"
which motivated this work.
This work  was partially funded by the Chair "Mod\'elisation Math\'ematique et Biodiversit\'e" of VEOLIA-Ecole Polytechnique-MNHN-F.X, and was also supported by a public grant as part of the
Investissement d'avenir project, reference ANR-11-LABX-0056-LMH,
LabEx LMH}

\bibliographystyle{abbrv}
\bibliography{biblio_speciation}
\end{document}